\documentclass[11pt]{amsart}

\usepackage[T1]{fontenc}
\usepackage[utf8]{inputenc}
\usepackage{amsmath,amssymb,amsthm,mathtools}
\usepackage[margin=1in]{geometry}
\usepackage{xcolor}
\usepackage{hyperref}
\usepackage{booktabs}
\usepackage[expansion=false]{microtype}
\usepackage{enumitem}
\usepackage{longtable}
\usepackage{array}
\usepackage{graphicx}
\usepackage{subcaption}

\hypersetup{
  colorlinks=true,
  linkcolor=blue!45!black,
  citecolor=blue!45!black,
  urlcolor=blue!55!black,
  pdftitle={Effective atom-dimer bands of lattice systems at strong coupling},
  pdfauthor={M. V. Dolgopolov}
}

\theoremstyle{plain}
\newtheorem{theorem}{Theorem}[section]
\newtheorem{proposition}[theorem]{Proposition}
\newtheorem{lemma}[theorem]{Lemma}
\newtheorem{corollary}[theorem]{Corollary}

\theoremstyle{definition}

\newtheorem{remark}[theorem]{Remark}

\newtheorem{mainprop}{Proposition}

\DeclareMathOperator{\Ran}{Ran}
\DeclareMathOperator{\dist}{dist}
\DeclareMathOperator{\spec}{spec}
\DeclareMathOperator{\supp}{supp}

\newcommand{\T}{\mathbb{T}}
\newcommand{\Z}{\mathbb{Z}}

\newcommand{\bK}{\mathbf K}
\newcommand{\bp}{\mathbf p}
\newcommand{\bq}{\mathbf q}
\newcommand{\bm}{\mathbf m}
\newcommand{\bn}{\mathbf n}
\newcommand{\bx}{\mathbf x}
\newcommand{\by}{\mathbf y}
\newcommand{\be}{\mathbf e}
\newcommand{\bpi}{\boldsymbol\pi}
\newcommand{\eps}{\varepsilon}
\newcommand{\Ltwo}{L^2_s((\T^2)^2)}
\newcommand{\Ell}{\mathrm{K}}          

\title[Effective atom-dimer bands of lattice 
systems]
{Effective atom-dimer bands of three-body lattice 
systems at strong coupling:\\
exact critical mass ratios and a statistics-quasimomentum duality}

\author{M.~V.~Dolgopolov}
\address{Samara State Technical University, 244 Molodogvardeyskaya st., 443100 Samara, Russia}
\email{mikhaildolgopolov68@gmail.com}

\date{\today}

\subjclass[2020]{Primary 47A75, 81Q10; Secondary 47A10, 81Q15, 47B25}
\keywords{three-body Schr\"odinger operator, lattice, effective band,
atom--dimer zone, critical mass ratio, Feshbach--Schur reduction,
Birman--Schwinger principle, lattice Green function, renormalization group, dimensional transmutation}

\usepackage{longtable}
\usepackage{array}

\newcolumntype{L}[1]{>{\raggedright\arraybackslash}p{#1}}

\begin{document}

\begin{abstract}
We study three particles on the square lattice $\Z^2$ with contact
interactions of strength $\mu\to\infty$, in every sector of total
quasimomentum $\bK\in\T^2$, for three statistics: three identical
bosons, the $2+1$ system of two identical fermions of mass $m_f$ and a third particle
of mass $m_3$ (ratio $\gamma=m_f/m_3$), and the $2+1$ system of two identical
bosons and a third particle. Previous work treated the trimer and the
point $\bK=\mathbf 0$ only. Here the \emph{atom--dimer} sector is
analysed on the whole Brillouin zone. A~Feshbach--Schur reduction onto
the infinitely degenerate eigenvalue $1$ of the pair-contact operator
gives explicit effective Hamiltonians on $\ell^2(\Z^2\setminus\{0\})$,
with a proved $O(\mu^{-1})$ spectral error,
\[
h_B=4+\varepsilon_D-S_{\bK},\qquad
h_F=h_{2+1F}=2+2\gamma+\varepsilon_D+\tfrac{\gamma}{2}S_{\bK},\qquad
h_{2+1B}=2+2\gamma+\varepsilon_D-\tfrac{\gamma}{2}S_{\bK},\]
where
$\varepsilon_D$ is the Dirichlet Laplacian and $S_\bK$ a rank-four
exchange operator. 
The threshold problem reduces to an explicit $4\times4$ Hermitian matrix criterion whose $\delta\downarrow0$ limit is built from the
two-point resistances $A=2-\frac4\pi$, $B=\frac2\pi$ of the square
lattice, and hence to a quartic polynomial $P_{\bK}$. This yields the
complete atom--dimer zone and the critical hopping (effective-mass)
ratio $\gamma_c(\bK)$ over the entire Brillouin zone.
We~prove: \emph{(i)}~the atom--dimer zones in closed form, via a quartic
$P_\bK$; \emph{(ii)}~the critical mass/hopping ratio as an explicit function on
the Brillouin zone, with the rigorous two-sided bound
$1\le\gamma_c(\bK)\le\pi/(\pi-2)$, with
$\gamma_c(\bpi)=1$, $\gamma_c(\pi,0)=\sqrt{\pi/(4-\pi)}$ (here and below $\bpi:=(\pi,\pi)$),
$\gamma_c(\frac\pi2,\frac\pi2)=\sqrt{\pi/(\pi-2)}$, and the closed form
$\gamma_c(\mathbf0)=\pi/(\pi-2)$ of the value $2.75194$ found
numerically earlier; \emph{(iii)}~at most two atom--dimer states for
every $\bK$ and $\gamma$, with an explicit second threshold;
\emph{(iv)}~the onset laws: an essential singularity
$\ln\delta=-\pi c(\bK)/(2(\xi(\bK)-1))+O(1)$ on the zone boundary
wherever the threshold mode has an $s$-wave component, in particular
$\delta\simeq16\,e^{-\pi\gamma/(\gamma-1)}$ at $\bK=\bpi$, and a
linear-over-logarithm onset in the pure $p$-wave at $\bK=\mathbf0$;
\emph{(v)}~two exact dualities, $h_{2+1F}(\bK+\bpi;2)=h_B(\bK)+2$ and
$h_{2+1B}(\bK;\gamma)=h_{2+1F}(\bK+\bpi;\gamma)$ for every $\gamma$,
showing that on the lattice a change of statistics is equivalent, at the level of
the effective atom--dimer Hamiltonians, to a shift of $\bK$ by $\bpi$; \emph{(vi)}~the
second fermionic threshold $\gamma_c^{(2)}(\bK)=2/|x_2(\bK)|$ satisfies
$\pi/(\pi-2)\le\gamma_c^{(2)}\le\pi/(4-\pi)$ on the whole Brillouin
zone, with the lower bound attained exactly on the axes $K_1=0$ or
$K_2=0$ and the upper bound attained only at $\bpi$;
closed-form thresholds for
nearest-neighbour attraction and the Floquet renormalisation of the
zones. The onset laws admit an exact renormalization-group reading:
the~inverse lattice Green function is a marginal $s$-wave coupling
with $\beta(g)=-g^2/2\pi$, the~binding energies arise by dimensional
transmutation with the lattice scale $16$, and the zone boundary is
of the marginal (BCS/Kondo) type rather than of the
Berezinskii--Kosterlitz--Thouless type of fixed-point annihilation. All~constants are verified independently by lattice
quadrature, computer algebra, exact finite-volume diagonalisation and
infinite-lattice solvers.
\end{abstract}

\maketitle

\section{Introduction}

The spectral theory of few-body lattice Schr\"odinger operators
combines fundamental questions of mathematical physics with direct
relevance to quantum simulation experiments in optical
lattices~\cite{Bloch2012,Binegar2026}. The one-particle case is well
developed, whereas the multi-particle case is considerably harder:
the Hilbert space is (anti)symmetrized, the interaction is pairwise, and
the fibre operator admits no closed-form solution even in two
dimensions.

Previous work on this model has focused on the following three
directions. First, the strong-coupling expansion of the three-boson
\emph{trimer} ground state at fixed quasimomentum has been obtained by
Kato perturbation theory \cite{Kato1995} around the isolated eigenvalue $3$ of the
pair-contact operator, giving the uniform leading term $z_1^{\bK,s}(\mu)=-3\mu+6+O(\mu^{-1})$
\cite{Trimer2026};
the next orders
$z_1^{\bK,s}(\mu)=-3\mu+6-\tfrac3{2\mu}-\tfrac{3(\cos K_1+\cos
K_2)}{8\mu^2}+O(\mu^{-3})$ 
with an effective trimer band;
an elliptic reduction 
of the fibre
Fredholm determinant and the order-matching criterion are given in \cite{arXiv2608}.
Second, the transcendental lattice constant
$C\approx3.96458$ characterising the $\bK=\mathbf 0$ second bound state
has been determined from a limit equation involving the two-dimensional
lattice Green function \cite{Trimer2026}. Third, the strong-coupling $2+1$ fermionic problem at $\bK=\mathbf 0$
has been analysed in \cite{Companion}, where a doubly degenerate level
$z=-\lambda+e_0(\gamma)+O(\lambda^{-1})$ was found to appear at the
numerically determined critical mass ratio
$\gamma_c\approx2.75194\pm0.00001$ (the fermionic coupling constant is
denoted by $\lambda$ in \cite{Companion}, in~contrast to the bosonic
coupling $\mu$ used there and here). In the present work this level is
identified as the $p$-wave atom--dimer state, and
$\gamma_c(\mathbf0)=\pi/(\pi-2)$ in closed form
(Theorem~\ref{thm:zoneF}).

The present work does not refine the trimer expansion. Its object is
the non-isolated atom--dimer sector associated with the infinitely
degenerate eigenvalue \(1\) of the pair-contact operator.

All of the above treatments address either the \emph{trimer} (a genuine
three-particle bound state) or the $\bK=\mathbf 0$ threshold. Neither
the full Brillouin-zone structure of the atom--dimer states nor the
critical mass ratios at general quasimomentum have been derived so far.

\medskip

The present article closes this gap. Our central object is the
\emph{atom--dimer effective Hamiltonian}, obtained by a
Feshbach--Schur reduction of the fibre operator $H_\mu(\bK)$ onto the
infinitely degenerate eigenvalue $1$ of the pair-contact operator
$V=V_1+V_2+V_3$. In relative coordinates, $V$ is the operator of
multiplication by $\nu(\bx,\by)=[\by=0]+[\bx=0]+[\bx=\by]$, the number
of pairs of particles at the same lattice site; its spectrum on the
bosonic sector is $\{0,1,3\}$ with the eigenvalue $3$ simple
(trimer) and $1$ of infinite multiplicity (atom--dimer). The reduction
of the perturbation $E_{\bK}=\eps(\bp)+\eps(\bq)+\eps(\bK-\bp-\bq)$
onto the atom--dimer subspace gives the explicit effective operators
$h_B$ and $h_F$ of Section~\ref{sec:effective}.

\medskip

\subsection{Statement of the problem}\label{ssec:problem}

Let $\eps(\bp)=2-\cos p_1-\cos p_2$ on $\T^2=(-\pi,\pi]^2$, fix
$\bK\in\T^2$, a~coupling $\mu>0$ and a mass ratio $\gamma>0$. We
consider the fibre operators
\begin{equation}\label{eq:models}
H^{\sigma}_\mu(\bK)=E^{\sigma}_\bK-\mu V^{\sigma},\qquad
\sigma\in\{3B,\ 2{+}1F,\ 2{+}1B\},
\end{equation}
where $E^{3B}_\bK(\bp,\bq)=\eps(\bp)+\eps(\bq)+\eps(\bK-\bp-\bq)$ on
the bosonic subspace $\Ltwo$ of \S\ref{sec:setup} with
$V^{3B}=V_1+V_2+V_3$, and
$E^{2+1\,F/B}_\bK=\eps(\bp)+\eps(\bq)+\gamma\,\eps(\bK-\bp-\bq)$ on the
subspace antisymmetric (F) or symmetric (B) under
$\bp\leftrightarrow\bq$ with $V^{2+1}=V_1+V_2$ (only the two
interspecies pairs interact); the $V_i$ are defined in
\S\ref{sec:setup}. For large $\mu$ the spectrum of $H^\sigma_\mu(\bK)$
splits into clusters near $-3\mu$ (trimer, $\sigma=3B$), near $-2\mu$
($\sigma=2{+}1B$ only: triple occupancy, two interspecies pairs; this
configuration is Pauli-forbidden for $\sigma=2{+}1F$), near
$-\mu$ (atom--dimer) and near $0$. Let $\theta^\sigma_\mu(\bK)$ denote
the bottom of the essential spectrum in the atom--dimer cluster (the
atom--dimer threshold). We call $\bK$ an \emph{atom--dimer point} for
$\sigma$ if
\[
N^\sigma(\bK):=\lim_{\epsilon\downarrow0}\ \liminf_{\mu\to\infty}\
\#\bigl\{\text{eigenvalues of }H^\sigma_\mu(\bK)\text{ in }
[-\mu+1,\ \theta^\sigma_\mu(\bK)-\epsilon]\bigr\}\ \ge1
\]
(eigenvalues counted with multiplicity); the inner count stabilises
by Theorem~\ref{thm:feshbach}, so $N^\sigma(\bK)$ is the number of
atom--dimer states bound uniformly in $\mu$. The problems solved in this
article are:
\begin{enumerate}[label=(P\arabic*)]
\item \emph{Reduction:} find the limit operator $h^\sigma(\bK)$ on
$\ell^2(\Z^2\setminus\{0\})$ governing the atom--dimer cluster and
bound the error between $\spec H^\sigma_\mu(\bK)$ and
$-\mu+\spec h^\sigma(\bK)$ uniformly in $\bK$.
\item \emph{Zones:} determine the sets
$Z^\sigma=\{\bK: N^\sigma(\bK)\ge1\}$ and the numbers $N^\sigma(\bK)$
in closed form.
\item \emph{Critical mass ratio:} for $\sigma=2{+}1F,\,2{+}1B$
determine $\gamma_c^\sigma(\bK)=\inf\{\gamma:N^\sigma(\bK)\ge1\}$ as a~function on $\T^2$.
\item \emph{Onset:} determine the asymptotics of the binding energy
$\delta^\sigma(\bK)$ as $\bK$ or $\gamma$ approaches the boundary of
the zone.
\item \emph{Statistics:} relate $h^{3B}$, $h^{2+1F}$, $h^{2+1B}$ to
one another.
\end{enumerate}
We write $h_B=h^{3B}$, $h_F=h^{2+1F}$, $h_{2+1B}=h^{2+1B}$ and
$\delta_B$, $\delta_F$, $\delta_{2+1B}$ for the corresponding binding
energies measured from the threshold of the limit operator.
Problem (P1) is solved by Lemma~\ref{lem:heff}, Proposition~\ref{prop:mixed}
and Theorem~\ref{thm:feshbach}; (P2) by Theorems~\ref{thm:count},
\ref{thm:zoneB} and Corollary~\ref{cor:twostate}; (P3) by
Theorem~\ref{thm:zoneF}; (P4) by Theorems~\ref{thm:crit} and
\ref{thm:boundary}; (P5) by Theorem~\ref{thm:dual} and
Proposition~\ref{prop:mixed}.

\medskip

\textbf{Main results.} In the notation of \eqref{eq:models}:
\begin{enumerate}
\item[$\bullet$] \textbf{(Effective Hamiltonians, Lemma~\ref{lem:heff})}
Reduction of $\mu^{-1}H_\mu(\bK)$ onto the atom--dimer sector gives
\[
h_B=4+\varepsilon_D-S_\bK,\qquad
h_F=2+2\gamma+\varepsilon_D+\tfrac{\gamma}{2}S_\bK
\]
on $\ell^2(\Z^2\setminus\{0\})$, where $\varepsilon_D$ is the Dirichlet
Laplacian and $S_\bK$ is the first-shell exchange operator with phase
$e^{i\bK\cdot\be}$.
\item[$\bullet$] \textbf{(Threshold matrix, Lemma~\ref{lem:M})}
The $4\times4$ matrix $M(\delta)=[(\eps_D+\delta)^{-1}(\be,\be')]$
has the regular limit
$M_0=J-D$ with
$A=2-\tfrac4\pi$, $B=\tfrac2\pi$, $A+2B=2$, and eigenvalues
$2$ ($s$), $A,A$ ($p$), $8/\pi-2$ ($d$). Here \(J=\mathbf1\mathbf1^T\) and \(D\) is the distance-resistance
matrix on the four nearest-neighbour sites. Since
\[
2>0,\qquad
A=2-\frac4\pi>0,\qquad
\frac8\pi-2>0,
\]
the matrix \(M_0\) is strictly positive definite.
\item[$\bullet$] \textbf{(Bosonic zone, Theorem~\ref{thm:zoneB})}
Atom--dimer states exist iff $F_B(u,v)<0$ with
$F_B$ an~explicit quadratic in $u=\cos K_1$, $v=\cos K_2$. In
particular $\bK=\mathbf 0$ and $(\pi,0)$ are in the zone, while
$\bK=\bpi$ is not.
\item[$\bullet$] \textbf{(Fermionic zone, Theorem~\ref{thm:zoneF})}
Critical mass ratios
\[
\gamma_c(\mathbf 0)=\frac{\pi}{\pi-2},\quad
\gamma_c(\bpi)=1,\quad
\gamma_c(\pi,0)=\sqrt{\frac{\pi}{4-\pi}},\quad
\gamma_c(\tfrac\pi2,\tfrac\pi2)=\sqrt{\frac{\pi}{\pi-2}}.
\]
For $\gamma<1$ no
atom--dimer state exists for any $\bK$; for
$\gamma>\pi/(\pi-2)$
the limiting atom--dimer operator has a bound state for every $\bK$. The Feshbach estimate then transfers every isolated interior
branch to the three-body problem for sufficiently large \(\mu\).
\item[$\bullet$] \textbf{(Critical behaviour, Theorems~\ref{thm:crit}, \ref{thm:boundary})}
At $\bK=\bpi$, $\delta_F\simeq16\,e^{-\pi\gamma/(\gamma-1)}$
(infinite-order onset), and more generally an essential singularity
along the zone boundary whenever the threshold mode has an $s$-wave
component; at $\bK=\mathbf 0$ (pure $p$-wave),
$\delta_F\simeq\displaystyle\frac{2\pi}{\gamma_c^2}(\gamma-\gamma_c)/\ln(1/(\gamma-\gamma_c))$.
\item[$\bullet$] \textbf{(Convergence, Theorem~\ref{thm:feshbach})}
$\spec H_\mu(\bK)$ near $-\mu+[2,10]$ lies within
$36/(\mu-10)$ of $-\mu+\spec h_B(\bK)$, with multiplicities for isolated eigenvalues (Theorem~\ref{thm:feshbach}(b)).
\item[$\bullet$] \textbf{(Dualities, Theorem~\ref{thm:dual},
Proposition~\ref{prop:mixed})}
\[
\underbrace{h_F(\bK+\bpi;2)=h_B(\bK)+2}_{\text{Theorem~\ref{thm:dual}}},\qquad
\underbrace{h_{2+1B}(\bK;\gamma)=h_F(\bK+\bpi;\gamma)\ \ (\text{all }\gamma)}_{\text{Proposition~\ref{prop:mixed}}},
\]
hence $\gamma_c^{2+1B}(\bK)=\gamma_c(\bK+\bpi)$.
\item[$\bullet$] \textbf{(Number of states, Corollary~\ref{cor:twostate})}
$N_F(\bK;\gamma)\le2$ for all $\bK,\gamma$; the second state appears
at $\gamma_c^{(2)}(\bK)\in[\pi/(\pi-2),\pi/(4-\pi)]$.
\item[$\bullet$] \textbf{(External couplings, Propositions~\ref{prop:nn}--\ref{prop:floqAD})}
Nearest-neighbour attraction, three-body force and Floquet modulation
shift the zone boundaries and, in the case of Floquet driving, allow
$\bK\to\bK+\bpi$ inversion of the effective band.
\end{enumerate}

The methods are purely operator-theoretic: a Feshbach--Schur
reduction onto the infinitely degenerate eigenvalue $1$ of $V$ (which
replaces analytic perturbation theory for this non-isolated-multiplicity
situation), an exact finite-rank Birman--Schwinger analysis with the
Dirichlet lattice Green function, and independent numerical
verification. The order-matching criterion of \cite{arXiv2608}, developed there
for the trimer sector,
is not
used in the present atom--dimer analysis; the~elliptic reduction of
\cite{arXiv2608} is likewise not needed here, since the atom--dimer
sector is governed by the Dirichlet lattice Green function rather
than by the full lattice Green function.

The results obtained here have a bearing beyond the immediate model.
The exact Brillouin-zone dependence of the atom--dimer thresholds and
the closed-form critical mass ratios provide benchmark quantities for
quantum simulation of mass-imbalanced three-body systems in optical
lattices~\cite{Bloch2012,Binegar2026}, where the hopping ratio
$\gamma=t_3/t_f$ is tunable and quasimomentum-resolved spectroscopy is
available. The
statistics--quasimomentum duality identifies a lattice-specific
equivalence that can be exploited in Floquet engineering, allowing one
to tune through the critical ratio without changing atomic species.
The onset laws connect the few-body lattice problem to the
renormalization-group paradigm of marginal couplings and dimensional
transmutation, and the non-BKT character of the zone boundary
distinguishes the lattice mechanism from the Efimov and
fixed-point-annihilation scenarios.

\autoref{tab:tasks} places the problem solved here among the
neighbouring problems of lattice few-body theory, and
\autoref{tab:novelty} summarises what distinguishes it from the
continuum theory and from the lattice trimer analyses
\cite{Trimer2026,Companion,arXiv2608} on which it builds.

\begin{table}[t]\centering\footnotesize
\caption{The problems of the strong-coupling three-body lattice theory
on $\Z^2$, their status and their difficulty. ``Order'' refers to the
power of $\mu^{-1}$ at which the quantity first appears. The last two
rows are solved in the present article; the first two are the content
of \cite{Trimer2026,arXiv2608} and are used here only as
context and for self-checks; the value $\gamma_c(\mathbf0)\approx2.75194$ in the last row was
found numerically in \cite{Companion}.}
\label{tab:tasks}
\setlength{\tabcolsep}{4pt}
\begin{tabular}{L{3.2cm}L{2cm}L{1.5cm}L{3.2cm}L{4.9cm}}
\toprule
Problem & Sector & Order in $\mu^{-1}$ & Difficulty & Status \\
\midrule
Trimer level $z_1^{\bK,s}(\mu)$ &
$\nu=3$, bosons only &
$\mu^{1}$, $\mu^{0}$; ...$\mu^{-2}$ &
moderate: isolated, non-degenerate &
leading terms $-3\mu+6$ known~\cite{Trimer2026}; used here only as background \\
\addlinespace
Second bosonic level at $\bK=\mathbf0$, constant $C$ &
$\nu=1$, $s$-wave &
$\mu^{0}$ &
one transcendental equation &
known \cite{Trimer2026}; independent validation in \cite{arXiv2608}; re-derived here as $C=4-\delta_\infty$ (self-check) \\
\addlinespace
\textbf{Atom--dimer zones for all $\bK$} &
$\nu=1$, four-dimensional threshold &
$\mu^{0}$ &
high: infinite degeneracy, non-isolated spectrum &
\textbf{this work}: Lemma~\ref{lem:heff}, Theorems~\ref{thm:count}, \ref{thm:zoneB}, \ref{thm:zoneF} \\
\addlinespace
\textbf{Critical mass ratios and onset laws} &
$2+1$ fermions and mixtures &
$\mu^{0}$ &
high: $2$D threshold logarithm survives in the $s$-wave only &
\textbf{this work}: Theorems~\ref{thm:zoneF}, \ref{thm:crit},\,\ref{thm:boundary} \\
\bottomrule
\end{tabular}
\end{table}

\begin{table}[t]\centering\footnotesize
\caption{Relevance, novelty and the principal differences of the present
problem from the situations in which three-body critical mass ratios are
classically discussed. ``KM'' denotes the universal Kartavtsev--Malykh
trimers of the
$3$D $2+1$ fermionic problem with $L^P=1^-$ \cite{KM2007}:
one state
for $8.17260<M/m<12.91743$, two for $12.91743<M/m<13.607$; for
$M/m>13.607$ the Efimov effect gives infinitely many states
\cite{Efimov1973,Petrov2003}. ``PP'' refers to the planar
$(1+2)$-body trimers of \cite{PP2010}.}
\label{tab:novelty}
\setlength{\tabcolsep}{4pt}
\begin{tabular}{L{2.6cm}L{3.3cm}L{3.5cm}L{4.7cm}}
\toprule
Aspect & $3$D continuum, zero range & $2$D continuum, zero range & $2$D lattice $\Z^2$ (this work) \\
\midrule
Total momentum &
irrelevant (Galilean invariance) &
irrelevant (Galilean invariance) &
\textbf{essential}: zones exist only for $\bK$ in a subregion of $\T^2$ \\
\hline
\addlinespace
Efimov effect &
present for $|a|=\infty$; for $2{+}1$ fermions at $M/m>13.607$ \cite{Efimov1973,Petrov2003} &
absent &
absent; replaced by a finite, explicitly countable set ($N\le2$, Corollary~\ref{cor:twostate}) \\
\hline
\addlinespace
Critical mass ratio &
KM: $8.17260$ (ground), $12.91743$ (excited) &
PP: thresholds in $M/m$, odd $L$ (F) / even $L$ (B), no $\bK$ dependence &
$\gamma_c(\bK)$ \textbf{is a function on the Brillouin zone}, in closed form, $1\le\gamma_c\le\pi/(\pi-2)$ \\
\hline
\addlinespace
Threshold singularity &
power law &
logarithmic &
logarithm survives in the $s$-channel only; essential singularity at the zone boundary (Theorem~\ref{thm:boundary}) \\
\hline
\addlinespace
Role of statistics &
selects partial waves &
selects parity of $L$ \cite{PP2010} &
\textbf{the effective atom--dimer operator is related
to a~quasimomentum shift $\bK\mapsto\bK+\bpi$} (Theorem~\ref{thm:dual}, Proposition~\ref{prop:mixed}) \\
\hline
\addlinespace
Mechanism of the zone &
short-range attraction &
short-range attraction &
Dirichlet condition (triple occupancy removed: deep trimer for $3B$, $2{+}1B$;
Pauli exclusion for $2{+}1F$) $+$ rank-four exchange~$S_\bK$ \\
\bottomrule
\end{tabular}
\end{table}

\begin{remark}[Scope and separation from the trimer analysis]
The trimer sector associated with the isolated eigenvalue \(3\) of
\(V\) is not the subject of this paper. Its strong-coupling expansion
is used only to identify the spectral scale and to separate the
\(\nu=3\) cluster from the atom--dimer cluster. The new analysis begins
at the infinitely degenerate eigenvalue \(1\), where ordinary isolated
eigenvalue perturbation theory is inapplicable and the effective
Hamiltonians \(h_B,h_F,h_{2+1B}\) emerge from a Feshbach--Schur
reduction. All Brillouin-zone zones, critical ratios, threshold
onset laws, and statistics--quasimomentum dualities proved below
belong to this atom--dimer sector.
\end{remark}

\section{Setup and notation}\label{sec:setup}

Let $\T^2=(-\pi,\pi]^2$ and
$\eps(\bp)=2-\cos p_1-\cos p_2$. The two-dimensional fibre Hamiltonian
in the symmetric bosonic sector is
\[
H_\mu(\bK)=E_\bK-\mu(V_1+V_2+V_3)\quad\text{on }\Ltwo,
\]
with
\[
(V_1 f)(\bp,\bq)=\!\int_{\T^2}\!\!f(\bp,\mathbf s)\,d\mathbf s,\quad
(V_2 f)(\bp,\bq)=\!\int_{\T^2}\!\!f(\mathbf s,\bq)\,d\mathbf s,
\]
\[
(V_3 f)(\bp,\bq)=\!\int_{\T^2}\!\!f(\mathbf s,\bp+\bq-\mathbf s)\,d\mathbf s,
\]
where each integral is understood as an average with respect to the
normalized Haar measure on $\T^2$:
$\displaystyle\frac{1}{4\pi^2}\int_{\T^2}\dots\,d\mathbf s$.
Each $V_i$ is an orthogonal projection, none depends on $\bK$, and
$0\le V\le 3I$.

For the $2+1$ fermionic configuration, two identical fermions of mass
$m_f$ and one particle of mass $m_3$, the mass ratio is
$\gamma=m_f/m_3$, \footnote{On the lattice $\gamma$ is, strictly, the hopping ratio $t_3/t_f$; it equals
$m_f/m_3$ in the effective-mass sense and in optical lattices depends also on
species-selective lattice depths.} the two identical fermions interact with the third
particle but not with each other (contact interaction is Pauli-forbidden
in the antisymmetric spatial sector), and the fibre dispersion becomes
\[
E_\gamma(\bp,\bq)=\eps(\bp)+\eps(\bq)+\gamma\,\eps(\bK-\bp-\bq).
\]
The Hilbert space is the antisymmetric sector
$L^2_a((\T^2)^2)$ with respect to $\bp\leftrightarrow\bq$, and the
interaction reduces to $V=V_1+V_2$. The $2+1$ \emph{bosonic} mixture
(two identical bosons without mutual interaction and a third particle)
has the same $E_\gamma$ and $V=V_1+V_2$ on the symmetric subspace
$L^2_s((\T^2)^2)$ with respect to $\bp\leftrightarrow\bq$.

For $z$ below the essential spectrum, the Birman--Schwinger operator
$B_\mu(\bK,z)=\mu\,V^{1/2}(E_\bK-z)^{-1}V^{1/2}$ is self-adjoint; the
Birman--Schwinger principle relates its eigenvalues above $1$ to
eigenvalues of $H_\mu(\bK)$ below $z$, provided $z$ lies below the
bottom of the essential spectrum of $E_\bK$ itself.

\begin{remark}
The Birman--Schwinger operator $B_\mu(\bK,z)$ is \emph{not} compact: it
dominates $(\mu/(12-z))V$, and $V$ has infinite-dimensional eigenspaces.
Counts of eigenvalues above $1$ refer to the discrete spectrum of
$B_\mu$ above its essential-spectrum edge; in this article they are
controlled by the Feshbach--Schur reduction of
Theorem~\ref{thm:feshbach} rather than by compactness.
\end{remark}

(The reduced Faddeev-type operator on $L^2(\T^2)$ used in \cite{Trimer2026}, with
kernel $\propto[\Delta_\mu(\bp)\Delta_\mu(\bq)]^{-1/2}(E_\bK-z)^{-1}$, is compact; both
formulations give the same count of eigenvalues of $H_\mu(\bK)$ below $z$.)

\section{The pair-contact operator in Fourier modes}\label{sec:fourier}

In relative coordinates $\bx=\mathbf r_1-\mathbf r_3$,
$\by=\mathbf r_2-\mathbf r_3$ conjugate to $(\bp,\bq)$, the operator
$V$ is diagonal in the two-dimensional Fourier basis
$e_{(\bm,\bn)}(\bp,\bq)=e^{i(\bm\cdot\bp+\bn\cdot\bq)}$,
$(\bm,\bn)\in\Z^2\times\Z^2$.

\begin{lemma}[Spectrum of $V$]\label{lem:Vdiag}
$V$ is multiplication by $\nu(\bx,\by)=[\by=0]+[\bx=0]+[\bx=\by]$;
equivalently, in Fourier modes
\[
V e_{(\bm,\bn)}=\bigl([\bn=\mathbf 0]+[\bm=\mathbf 0]+[\bm=\bn]\bigr)
e_{(\bm,\bn)}.
\]
Consequently $\sigma(V)=\{0,1,3\}$, the eigenvalue $3$ is simple
(constant mode), the eigenvalue $1$ has infinite multiplicity (support
on $\{(\bm,0),(0,\bn),(\bm,\bm)\}$ with $\bm,\bn\ne0$), and
$\dist(3,\sigma(V)\setminus\{3\})=2$.
\end{lemma}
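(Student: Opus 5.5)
The plan is to compute the action of each $V_i$ on the Fourier basis directly from the definitions in \S\ref{sec:setup}, using only orthogonality of characters with respect to normalized Haar measure, $\int_{\T^2} e^{i\bm\cdot\mathbf s}\,d\mathbf s=[\bm=\mathbf0]$.

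\textbf{Step 1: the three projections in Fourier modes.} For $V_1$,
\[
(V_1e_{(\bm,\bn)})(\bp,\bq)=e^{i\bm\cdot\bp}\int e^{i\bn\cdot\mathbf s}\,d\mathbf s=[\bn=\mathbf0]\,e_{(\bm,\bn)}(\bp,\bq).
\]
By the symmetric computation, $V_2e_{(\bm,\bn)}=[\bm=\mathbf0]\,e_{(\bm,\bn)}$. For $V_3$,
\[
(V_3e_{(\bm,\bn)})(\bp,\bq)=e^{i\bn\cdot(\bp+\bq)}\int e^{i(\bm-\bn)\cdot\mathbf s}\,d\mathbf s=[\bm=\bn]\,e^{i(\bm\cdot\bp+\bn\cdot\bq)}.
\]
The last equality holds because on the support $\bm=\bn$ one has $e^{i\bn\cdot(\bp+\bq)}=e_{(\bm,\bn)}$. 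Summing gives the displayed formula.

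\textbf{Step 2: identification with multiplication by $\nu$.} The family $\{e_{(\bm,\bn)}\}$ is an orthonormal basis of $L^2((\T^2)^2)$. Under the inverse Fourier transform, $(\bm,\bn)$ corresponds to the lattice point $(\bx,\by)$. Hence $V$ is unitarily the multiplication operator by $\nu(\bx,\by)=[\by=0]+[\bx=0]+[\bx=\by]$, where the three indicators come from $V_1,V_2,V_3$ in that order. The bosonic sector $\Ltwo$ is spanned by the symmetrized modes $e_{(\bm,\bn)}+e_{(\bn,\bm)}$. Since $\nu$ is invariant under $\bm\leftrightarrow\bn$, $V$ stays diagonal on this subspace. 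The spectrum of a multiplication operator on $\ell^2$ is the closure of its range, so it remains to find the values of $\nu$.

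\textbf{Step 3: values of $\nu$ and multiplicities.}
\begin{itemize}
\item If two of the conditions $\bm=0$, $\bn=0$, $\bm=\bn$ hold, the third holds too. So $\nu=2$ never occurs.
\item $\nu=3$ exactly at $(\mathbf0,\mathbf0)$, which is the constant function. Being symmetric, it lies in every sector, and the eigenvalue $3$ is simple.
\item $\nu=1$ on the three disjoint rays $(\bm,0)$, $(0,\bn)$, $(\bm,\bm)$ with nonzero entries. Each is infinite, and their symmetrized combinations are still infinitely many, so the eigenvalue $1$ has infinite multiplicity in $\Ltwo$.
\item $\nu=0$ elsewhere, also with infinite multiplicity.
\end{itemize}
Therefore $\sigma(V)=\{0,1,3\}$, and $\dist(3,\{0,1\})=2$.

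The only point needing care, and the one I would check most closely, is the $V_3$ computation together with the symmetric-sector bookkeeping: one must confirm that the $(\bm,\bm)$ modes are symmetric and survive restriction to $\Ltwo$. Everything else is routine.
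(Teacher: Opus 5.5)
Your proposal is correct and follows the paper's proof: the same direct character-orthogonality computation of $V_1,V_2,V_3$ on $e_{(\bm,\bn)}$, followed by descent to the symmetric sector. You spell out steps the paper leaves implicit (e.g.\ that $\nu=2$ never occurs), and one refinement is worth noting: for three identical bosons the paper's $\Ltwo$ is the fibre invariant under both $\tau_{12}$ and $\tau_{13}$ (cf.\ Lemma~\ref{lem:Pi1}), not just the span of $e_{(\bm,\bn)}+e_{(\bn,\bm)}$, but since $\tau_{13}$ maps $e_{(\bm,\bn)}$ to $e^{i\bm\cdot\bK}e_{(-\bm,\bn-\bm)}$, which preserves $\nu$, and the stabiliser of every $\nu=0$, $\nu=1$ or $\nu=3$ mode acts on it with trivial phase, the diagonalisation and all multiplicity claims descend to that smaller space as well.
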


\begin{proof}
Direct computation: $V_1 e_{(\bm,\bn)}=[\bn=0]e_{(\bm,\bn)}$, etc.
Symmetric in $\bp\leftrightarrow\bq$; the diagonalisation descends to
$\Ltwo$.
\end{proof}

\begin{remark}
This statement is elementary; it has appeared in similar form in earlier
work on this model, and is included for self-containedness. The
non-trivial content of the present article is the atom--dimer reduction
of Section~\ref{sec:effective}.
\end{remark}

\section{Effective atom--dimer Hamiltonians}\label{sec:effective}

Fix $\bK\in\T^2$ and rescale the fibre operator as
$T(\kappa):=\mu^{-1}H_\mu(\bK)=-V+\kappa E_\bK$, $\kappa=1/\mu$. The
atom--dimer sector corresponds to the eigenvalue $-1$ of the unperturbed
operator $-V$; this eigenvalue is infinitely degenerate and not
isolated in the spectrum of the perturbed operator, so~we use a
Feshbach--Schur reduction (\S\ref{sec:effective}.2).

\subsection{Structure of the atom--dimer subspace}

\begin{lemma}[$\Ran\Pi_1$ in relative coordinates]\label{lem:Pi1}
Let $\Pi_1$ denote the spectral projection of $V$ onto the eigenvalue
$1$. In relative coordinates,
\[
\Ran\Pi_1=\bigl\{\,f:\supp f\subset\{\nu(\bx,\by)=1\}\,\bigr\}.
\]
Restricted to the bosonic sector 
$\Ltwo$, we have
\[
\Ran\Pi_1\cap\Ltwo=\bigl\{\varphi(\bp)+\varphi(\bq)+\varphi(\bK-\bp-\bq)
:\ \langle\varphi\rangle=0\bigr\},
\]
where the constant-free condition arises from the exclusion of the
$(0,0)$ Fourier mode. The Fourier convention and the phase of $\tau_{13}$ are those of
\S\ref{sec:S1}. Restricted to the fermionic sector $L^2_a$, one has
$\Ran\Pi_1\cap L^2_a=\{\varphi(\bp)-\varphi(\bq):\ \langle\varphi\rangle=0\}$,
and for the $2+1$ bosonic mixture
$\Ran\Pi_1\cap L^2_s=\{\varphi(\bp)+\varphi(\bq):\ \langle\varphi\rangle=0\}$.
\end{lemma}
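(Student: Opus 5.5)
The first claim follows directly from Lemma~\ref{lem:Vdiag}. Under the unitary Fourier identification $L^2((\T^2)^2)\cong\ell^2(\Z^2\times\Z^2)$, $V$ acts as multiplication by $\nu$. The spectral projection onto the eigenvalue $1$ is therefore multiplication by the indicator of $\{\nu=1\}$, and its range is exactly the set of functions supported there. The main work is describing this set inside each symmetry sector. I would split $\{\nu=1\}$ into three disjoint ``arms'':
\[
\Gamma_1=\{(\bm,\mathbf 0):\bm\ne0\},\qquad
\Gamma_2=\{(\mathbf 0,\bn):\bn\ne0\},\qquad
\Gamma_3=\{(\bm,\bm):\bm\ne0\}.
\]
By Lemma~\ref{lem:Vdiag} these are pairwise disjoint, and none contains the origin.

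Next I would compute what a function supported on a single arm looks like in momentum space.
\begin{itemize}
\item A function supported on $\Gamma_1$ is $\sum_{\bm\ne0}c_\bm e^{i\bm\cdot\bp}=\varphi_1(\bp)$ with $\langle\varphi_1\rangle=0$.
\item A function supported on $\Gamma_2$ is likewise $\varphi_2(\bq)$ with zero mean.
\item A function supported on $\Gamma_3$ is $\sum_\bm c_\bm e^{i\bm\cdot(\bp+\bq)}=\psi(\bp+\bq)$. I would rewrite this as $\varphi_3(\bK-\bp-\bq)$ by setting $\varphi_3(\mathbf s)=\psi(\bK-\mathbf s)$; the mean zero is preserved.
\end{itemize}
This is where the phase convention of $\tau_{13}$ (\S\ref{sec:S1}) enters. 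The coefficients of $\varphi_3$ pick up factors $e^{i\bm\cdot\bK}$, which is harmless for describing the range. Hence
\[
\Ran\Pi_1=\{\varphi_1(\bp)+\varphi_2(\bq)+\varphi_3(\bK-\bp-\bq):\langle\varphi_j\rangle=0\},
\]
with the three summands orthogonal and uniquely determined. The mean-zero conditions are exactly the exclusion of the $(0,0)$ mode, and all three functions are $L^2$.

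Then I would impose the symmetry. On the full bosonic sector, $\Ltwo$ is meant with the natural action of the full permutation group $S_3$ in the relative coordinates $(\bx,\by)$. That action permutes the arms transitively: the swap $\bx\leftrightarrow\by$ exchanges $\Gamma_1$ and $\Gamma_2$ and fixes $\Gamma_3$, while the other transpositions exchange $\Gamma_3$ with $\Gamma_1$ or $\Gamma_2$. In momentum language these are $\bp\leftrightarrow\bq$ and $\bp\mapsto\bK-\bp-\bq$. Invariance then forces $\varphi_1=\varphi_2=\varphi_3=:\varphi$. Conversely, every such sum is symmetric and lies in $\Ran\Pi_1$, which gives the stated form.

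For $L^2_a$ and the $2+1$ mixture, the interaction is $V_1+V_2$. In that case the relevant eigenspace involves only $\Gamma_1\cup\Gamma_2$, since $\Gamma_3$ carries no interaction; equivalently $\nu$ is replaced by $[\by=0]+[\bx=0]$, whose level set $\{=1\}$ is $\Gamma_1\cup\Gamma_2$. Antisymmetry or symmetry under $\bp\leftrightarrow\bq$ swaps these two arms. This gives $\varphi_2=-\varphi_1$ or $\varphi_2=\varphi_1$ respectively, hence the forms $\varphi(\bp)\mp\varphi(\bq)$.

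The expected obstacle is bookkeeping rather than analysis. First, one must be consistent about which sector ``bosonic'' means: full $S_3$ symmetry for the three bosons, not merely $\bp\leftrightarrow\bq$. Second, one must track the $\bK$-dependent phase identifying the third arm with a function of $\bK-\bp-\bq$. I would fix both by checking directly that the transposition $\tau_{13}$ maps $\varphi(\bp)$ to $\varphi(\bK-\bp-\bq)$ under the convention of \S\ref{sec:S1}.
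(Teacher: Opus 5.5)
Your proposal is correct and follows the paper's own route. The first claim is read off from the diagonalisation in Lemma~\ref{lem:Vdiag}, and the sector descriptions come from imposing the permutation symmetries on the three one-pair branches. Your version is more careful than the paper's one-line proof on two points: you make explicit that the three-boson form needs full $S_3$ invariance, with $(\tau_{13}f)(\bp,\bq)=f(\bK-\bp-\bq,\bq)$ under the convention of \S\ref{sec:S1}; and you note that in the $2+1$ cases $\Pi_1$ must be the projection for $V_1+V_2$, since otherwise the mixture would keep a free component on the diagonal arm.
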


\begin{proof}
Immediate from Lemma~\ref{lem:Vdiag}: the condition $\nu=1$ selects
exactly the configurations with one coincident pair. Under the
permutation group $S_3$ acting on the bosonic sector, the atom--dimer
condition is $\mathbf r_1=\mathbf r_2$ or $\mathbf r_2=\mathbf r_3$
or $\mathbf r_1=\mathbf r_3$; the corresponding wavefunction takes the
symmetric form $\varphi(\bp)+\varphi(\bq)+\varphi(\bK-\bp-\bq)$.
Antisymmetry in $\bp\leftrightarrow\bq$ gives
$\varphi(\bp)-\varphi(\bq)$.
\end{proof}

\subsection{Reduction of the perturbation}

Since $\dim\Ran\Pi_1=\infty$ and the restriction of $E_\bK$ to
$\Ran\Pi_1$ has continuous spectrum, the analytic perturbation theory
of isolated eigenvalues of finite multiplicity \cite{Kato1995} does
not apply verbatim. We instead (i) compute the first-order effective
operator $h_1:=\Pi_1 E_\bK\Pi_1$ on $\Ran\Pi_1$ \emph{exactly}
(Lemma~\ref{lem:heff}), and (ii) control the remainder by a
Feshbach--Schur (Schur-complement) argument with explicit constants
(Theorem~\ref{thm:feshbach}).

\begin{lemma}[Effective operators $h_B$, $h_F$]\label{lem:heff}
On the bosonic sector $\Ran\Pi_1\cap\Ltwo$, the first-order effective
Hamiltonian $h_1=\Pi_1 E_\bK\Pi_1$ is unitarily equivalent to the
operator
\[
h_B(\bK):=4+\eps_D-S_\bK
\]
on $\ell^2(\Z^2\setminus\{0\})$, where $\eps_D$ is the Dirichlet
Laplacian
\[
(\eps_D\varphi)(\bx)=2\varphi(\bx)-\tfrac12\sum_{\be:\,|\be|=1}\varphi(\bx+\be),
\qquad \varphi(0)=0,
\]
and $S_\bK$ is the first-shell exchange operator
\[
(S_\bK\varphi)(\bx)=[\bx\in\{\pm\be_1,\pm\be_2\}]\,e^{i\bK\cdot\bx}\varphi(-\bx),
\]
a self-adjoint partial isometry of rank four with $S_\bK^2=P_{\rm sh}$
(the projection onto the first shell) and $S_{\bK+\bpi}=-S_\bK$.
Spectral quantities depend on $\bK$ only through $(\cos K_1,\cos K_2)$.

On the fermionic sector $\Ran\Pi_1\cap L^2_a$, the operator $h_1$ is
unitarily equivalent to
\[
h_F(\bK;\gamma):=2+2\gamma+\eps_D+\tfrac{\gamma}{2}S_\bK
\]
on $\ell^2(\Z^2\setminus\{0\})$, again with the Dirichlet condition
$\varphi(0)=0$.
\end{lemma}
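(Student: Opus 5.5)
I would work in the position representation $(\bx,\by)\in\Z^2\times\Z^2$ throughout. There $V$ is multiplication by $\nu$ (Lemma~\ref{lem:Vdiag}) and $E_\bK$ is a hopping operator, because each $\eps(\cdot)$ is $2$ minus half the sum of unit translations in the conjugate variable. The dispersion $\eps(\bp)$ moves $\bx$ by a unit vector, $\eps(\bq)$ moves $\by$, and $\eps(\bK-\bp-\bq)$ moves $(\bx,\by)$ jointly by $-(\be,\be)$ with the phase $e^{i\bK\cdot\be}$; the phase is fixed by the Fourier convention of \S\ref{sec:S1}.

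By Lemma~\ref{lem:Pi1}, $\Ran\Pi_1$ is spanned by functions supported on the three lines $\{\by=0,\bx\ne0\}$, $\{\bx=0,\by\ne0\}$ and $\{\bx=\by\ne0\}$. A vector in $\Ran\Pi_1\cap\Ltwo$ is determined by its restriction $\varphi(\bx)=f(\bx,0)$, $\bx\ne0$, to the first line; the values on the other two lines are obtained from the $S_3$ symmetry through $\tau_{13}$, which carries a phase. I define $U$ as $\sqrt3$ times this restriction map. This is an isometry from $\Ran\Pi_1\cap\Ltwo$ onto $\ell^2(\Z^2\setminus\{0\})$, since the three lines are disjoint and each carries the same $\ell^2$ mass. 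The quantity to compute is then $\langle f,E_\bK f\rangle$ for such $f$, written in terms of $\varphi$.

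This quadratic form splits into two kinds of hopping terms.

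\textbf{Intra-line terms.} On the line $\by=0$, a $\bp$-hop keeps the configuration on the line and gives $\tfrac12$ of the lattice Laplacian in $\bx$. The $\bq$-hop and the joint hop both leave the line: they take $(\bx,0)$ to $(\bx,\be)$ or to $(\bx-\be,-\be)$. These points lie on a $\nu=1$ line only in special cases, and the generic ones are killed by $\Pi_1$. Among the diagonal parts, each of the three dispersions contributes $2$, but for the $\bq$-hop and the joint hop only the constant part survives compression. This gives the constant $2\cdot 2=4$ together with $\eps_D$ from the $\bp$-part.

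The Dirichlet condition comes from the hop into $\bx=0$. That point has $\nu=3$, so it is not in $\Ran\Pi_1$, and the hop is removed.

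\textbf{Inter-line terms.} The exceptional hops land on another $\nu=1$ line. A $\bq$-hop from $(\bx,0)$ reaches $(\bx,\be)$, which lies on the line $\bx=\by$ exactly when $\be=\bx$, i.e.\ when $|\bx|=1$. A joint hop from $(\bx,0)$ reaches $(\bx-\be,-\be)$, which lies on the line $\bx=0$ exactly when $\be=\bx$, i.e.\ again when $|\bx|=1$. I will transport the value on the target line back to the first line using the symmetry. This should produce the term $-\tfrac12\cdot 2\cdot e^{i\bK\cdot\bx}\varphi(-\bx)$, supported on the first shell, so the total is $-S_\bK$ after normalisation by $U$. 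The sign is negative because both the bosonic symmetry and the hopping amplitude $-\tfrac12$ are negative contributions here.

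Once $h_B$ is obtained, the remaining claims follow directly. From the definition of $S_\bK$ one checks $S_\bK^2=P_{\rm sh}$ (using $e^{i\bK\cdot\bx}e^{-i\bK\cdot\bx}=1$), self-adjointness, and $S_{\bK+\bpi}=-S_\bK$ (since $e^{i\bpi\cdot\be}=-1$ on the shell). For dependence only on $(\cos K_1,\cos K_2)$, I will conjugate by gauge and reflection maps that commute with $\eps_D$ and send $K_j\mapsto -K_j$ or swap the pairs $\pm\be_j$.

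For the fermionic case I repeat the computation with two lines, $\by=0$ and $\bx=0$, related by antisymmetry; the $\bx=\by$ line does not interact here. The line $\by=0$ is now fed by the $\bp$-hop, with weight $1$ giving $\eps_D$, and by constants $2$ from $\eps(\bq)$ and $2\gamma$ from $\gamma\eps(\bK-\bp-\bq)$. The only inter-line coupling is the joint hop $(\bx,0)\to(\bx-\be,-\be)$ with $\be=\bx$, which lands on $\bx=0$ and carries weight $\gamma$. Antisymmetry flips the sign, producing $+\tfrac\gamma2S_\bK$.

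The main obstacle is the bookkeeping of phases and signs in the inter-line terms: identifying $\tau_{13}$ correctly with its $\bK$-dependent phase, and making sure that the direct $\bq$-hop and the joint hop each contribute once, with the right combined factor. I would check the result against small explicit cases, such as $\bK=0$ with constant shell vectors.
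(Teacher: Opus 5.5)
Your proposal is correct and follows the paper's proof essentially step by step: the position representation; parametrisation of the symmetric branch functions by $\varphi(\bx)=\psi(\bx,\mathbf 0)$, with the other branches filled in by $\tau_{12}$ and $\tau_{13}$; removal of the $\nu=3$ (resp.\ $\nu_F=2$) site, which gives $\eps_D$; and the two shell exchange processes ($\bq$-hop onto $\bx=\by$, joint hop onto $\bx=\mathbf 0$), each worth $-\tfrac12 e^{i\bK\cdot\bx}\varphi(-\bx)$. Your reflection argument for the $(\cos K_1,\cos K_2)$ dependence is also valid. Two small fixes are needed: bosonic symmetry contributes the factor $+1$, not a sign, so the minus in $-S_\bK$ comes from the hopping amplitude $-\tfrac12$ alone, which is what makes your fermionic ``antisymmetry flips the sign'' consistent; and ``onto'' requires checking that the extension of an arbitrary $\varphi$ is actually $\tau_{12}$- and $\tau_{13}$-invariant, which disjointness of the lines alone does not give (it is true, as verified in Theorem~\ref{thm:S-heff}).
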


\begin{proof}
We use the position representation of the fibre (Supplementary
Material, \S\ref{sec:S1}): $\psi(\bx,\by)$ with $\bx=\mathbf r_1-\mathbf r_3$,
$\by=\mathbf r_2-\mathbf r_3$, on which $E_\bK$ acts by
\[
(E_\bK\psi)(\bx,\by)=6\psi(\bx,\by)-\tfrac12\sum_{|\be|=1}\bigl[
\psi(\bx+\be,\by)+\psi(\bx,\by+\be)+e^{-i\bK\cdot\be}\psi(\bx+\be,\by+\be)\bigr]
\]
(the three terms are the hops of particles $1$, $2$, $3$). By
Lemma~\ref{lem:Pi1}, $\Ran\Pi_1$ consists of the $\psi$ supported on
the three branches $\{\by=0\ne\bx\}$, $\{\bx=0\ne\by\}$,
$\{\bx=\by\ne0\}$. In the bosonic sector the $S_3$-symmetry
($\psi(\bx,\by)=\psi(\by,\bx)=e^{i\bK\cdot\bx}\psi(-\bx,\by-\bx)$)
determines $\psi$ on all three branches from
$\varphi(\bx):=\psi(\bx,0)$, $\bx\ne0$:
$\psi(0,\by)=\varphi(\by)$, $\psi(\bx,\bx)=e^{i\bK\cdot\bx}\varphi(-\bx)$,
and $\|\psi\|^2=3\|\varphi\|^2$; no constraint is imposed on
$\varphi\in\ell^2(\Z^2\setminus\{0\})$. Evaluate $(\Pi_1E_\bK\psi)(\bx,0)$:
the diagonal gives $6\varphi(\bx)$; hops of particle~1 give
$-\frac12\sum_{\bx+\be\ne0}\varphi(\bx+\be)$ (the target $\bx+\be=0$ is
the trimer site, $\nu=3$, and is removed by $\Pi_1$); a hop of
particle~2 lands in $\Ran\Pi_1$ only if $\by+\be=\bx$, i.e.\
$\bx=\be$ is on the first shell, with value
$\psi(\be,\be)=e^{i\bK\cdot\be}\varphi(-\be)$; a hop of particle~3
lands in $\Ran\Pi_1$ only if $\bx+\be=0$, with value
$e^{i\bK\cdot\bx}\psi(0,-\bx)=e^{i\bK\cdot\bx}\varphi(-\bx)$. Both
exchange processes carry amplitude $-\tfrac12$ and the same value, so
\[
(h_1\varphi)(\bx)=6\varphi(\bx)-\tfrac12\!\!\sum_{\bx+\be\ne0}\!\!\varphi(\bx+\be)
-[\bx\in\{\pm\be_1,\pm\be_2\}]\,e^{i\bK\cdot\bx}\varphi(-\bx)
=\bigl((4+\eps_D-S_\bK)\varphi\bigr)(\bx).
\]
For $2+1$ fermions the diagonal is $4+2\gamma$, particle~3 hops with
amplitude $\gamma/2$, the pair $\{1,2\}$ does not interact (and
$\psi(\bx,\bx)$ is not in $\Ran\Pi_1$), so only the particle-3
exchange survives; antisymmetry $\psi(0,\by)=-\varphi(\by)$ flips its
sign and gives $h_F=2+2\gamma+\eps_D+\tfrac\gamma2S_\bK$, with
$\|\psi\|^2=2\|\varphi\|^2$. The~normalised maps
$\varphi\mapsto\psi/\sqrt3$, resp.\ $\psi/\sqrt2$, are the required
unitaries. Full bookkeeping, including the verification that the
$S_3$-relations are mutually consistent, is in the proof of
Theorem~\ref{thm:S-heff} below and in \S\ref{sec:S1}.
\end{proof}

\begin{remark}[Independent check]
The same operator is obtained without the position representation by
passing to $\mu\to\infty$ in the Faddeev (Skornyakov--Ter-Martirosyan)
equation: at $\bK=\mathbf0$ the resulting $s$-channel condition is
$2+\delta-1/b_0(\delta)=1$, identical to the $s$-eigenvalue condition of
Lemma~\ref{lem:M} below, and its root reproduces the constant
$C=4-\delta_\infty=3.96458\ldots$ of \cite{Trimer2026}.
\end{remark}

\begin{theorem}[Convergence with explicit rate]\label{thm:feshbach}
Let $\mu>16$ and $\eta_B:=36/(\mu-10)$. For every $\bK\in\T^2$, in
the bosonic sector:
\begin{enumerate}[label=(\alph*)]
\item $\spec H_\mu(\bK)\cap[-\mu+2,-\mu+10]\subset
-\mu+\spec h_B(\bK)+[-\eta_B,\eta_B]$;
\item if $4-\delta$ is an eigenvalue of $h_B(\bK)$ of multiplicity $m$
below its essential spectrum $[4,8]$ with $\delta>2\eta_B$ and at
distance $>2\eta_B$ from the rest of $\spec h_B(\bK)$, then
$H_\mu(\bK)$ has exactly $m$ eigenvalues (with multiplicity) in
$[-\mu+4-\delta-\eta_B,\,-\mu+4-\delta+\eta_B]$, all below its
essential spectrum.
\end{enumerate}
For the $2+1$ systems ($h_F$ or $h_{2+1B}$) the same holds with the
window $-\mu+[1+\frac32\gamma,\,7+3\gamma]$, threshold $2+2\gamma$ in
place of $4$, $\mu>2(7+3\gamma)$ and
$\eta_F:=(4+2\gamma)^2/(\mu-7-3\gamma)$.
\end{theorem}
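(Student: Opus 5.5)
Let $P:=\Pi_1$ and $Q:=1-P$, both restricted to the bosonic sector. Work with the rescaled operator $T(\kappa)=\mu^{-1}H_\mu(\bK)=-V+\kappa E_\bK$, $\kappa=1/\mu$. For a spectral parameter $z=-1+\kappa w$ with $w\in[2,10]$, I would use the Schur complement
\[
F(z)=P T P - PTQ\,(QTQ-z)^{-1}QTP ,
\]
so that $z\in\spec T$ iff $0\in\spec F(z)$, with equal multiplicities for isolated eigenvalues. The identity $PTP=-P+\kappa h_1$ and Lemma~\ref{lem:heff} identify $h_1$ with $h_B(\bK)$. The equation then reads $0\in\spec\bigl(h_B-w-\kappa R(w)\bigr)$, where
\[
R(w)=PE_\bK Q\bigl(Q(-V+\kappa E_\bK)Q+1-\kappa w\bigr)^{-1}QE_\bK P .
\]
Everything therefore reduces to a uniform bound on $\|R(w)\|$.

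\textbf{Step 1 (invertibility of $QTQ-z$).} On $\Ran Q$ the operator $V$ takes only the values $0$ and $3$. Split $Q=Q_0+Q_3$.
\begin{itemize}
\item On $\Ran Q_0$: $-V+1-\kappa w+\kappa E\ge 1-10\kappa$, using $E\ge0$.
\item On $\Ran Q_3$ (one-dimensional, the trimer mode): $-3+1-\kappa w+\kappa E\le -2+\kappa(12-2)$, which is $\le -(1-10\kappa)$ for $\mu>16$.
\end{itemize}
Rather than separating blocks, I would bound the inverse by a Neumann/resolvent estimate, since the off-diagonal part $\kappa Q_0EQ_3$ is small. This gives
\[
\|(QTQ-z)^{-1}\|\le (1-10\kappa-O(\kappa))^{-1}.
\]
A cleaner route is to note $\dist(z,\spec(QTQ))\ge 1-10\kappa$, because $\spec(QTQ)\subset[-3,-3+12\kappa]\cup[0,12\kappa]$ by min-max on $\kappa$-perturbations of $-V|_Q$. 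In $\mu$-units this gives the factor $\mu/(\mu-10)$.

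\textbf{Step 2 (bound on the coupling).} I need $\|PE_\bK Q\|\le 6$. Write $PEQ=P(E-6)Q$, since $PQ=0$. In the position representation the off-diagonal part $E-6$ is a sum of $18$ hops of amplitude $\tfrac12$ in total. By Schur's test its norm is at most $6$, which gives $\|PEQ\|\le 6$. Combining,
\[
\kappa\|R\|\le \kappa\cdot 36\cdot \frac{1}{1-10\kappa}=\frac{36}{\mu-10}=\eta_B .
\]

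\textbf{Step 3 (localisation).} Suppose $-\mu+w\in\spec H_\mu$. Then $w\in\spec(h_B-\kappa R(w))$. Since $\kappa R(w)$ is self-adjoint with norm at most $\eta_B$, Weyl's inequality gives $\dist(w,\spec h_B)\le\eta_B$, which is part (a).

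For part (b), use the standard Feshbach counting argument. On an isolated cluster the operator family $w\mapsto h_B-\kappa R(w)$ is monotone: its $w$-derivative is $\le -1+O(\kappa)$, because
\[
\partial_w R=-\kappa\,PEQ(QTQ-z)^{-2}QEP
\]
is small. Hence each eigenvalue branch $\lambda_j(w)$ of $h_B-\kappa R(w)$ inside $[4-\delta-\eta_B,4-\delta+\eta_B]$ crosses $w$ exactly once. The eigenvalue counting is continuous in $w$, and the gap condition $>2\eta_B$ keeps the other spectrum away. Together these give exactly $m$ solutions with multiplicity. The condition $\delta>2\eta_B$ places them below the essential spectrum. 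That the essential spectrum of $H_\mu$ in this window starts at $\ge -\mu+4-\eta_B$ follows from (a) applied to the essential spectrum, which Feshbach also preserves.

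\textbf{The $2+1$ case.} The argument is identical, with $V=V_1+V_2$ having spectrum $\{0,1,2\}$ in the bosonic mixture and $\{0,1\}$ in the fermionic case. The hop weight becomes $2\cdot2\cdot\tfrac12+\gamma\cdot2=2+2\gamma$ on either side of the diagonal, giving $\|PEQ\|\le 4+2\gamma$. The window is $w\le 7+3\gamma$, and $\sup E=8+4\gamma$. The gap to the $\nu=2$ cluster is $1-\kappa(7+3\gamma)$, which yields $\eta_F$.

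\textbf{Main obstacle.} I expect the main obstacle to be the rigorous part (b) multiplicity count in the presence of the non-isolated essential spectrum $[4,8]$ of $h_B$. This requires the monotonicity argument for $w\mapsto h_B-\kappa R(w)$, made precise via the isospectrality of the Feshbach map. The exact constant $6$ in Step 2 also needs care, since the trimer-site removal must be handled.
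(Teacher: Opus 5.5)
Your proposal is correct and is essentially the paper's own proof written in the rescaled units $T=\mu^{-1}H_\mu$. It uses the same Schur complement with $P=\Pi_1$, $Q=1-P$, the same bounds $\|P(E_\bK-6)Q\|\le6$ and $\dist(z,\spec QHQ)\ge\mu-10$ (from $\spec QHQ\subset[-3\mu,-3\mu+12]\cup[0,12]$) giving $\eta_B=36/(\mu-10)$, the same perturbative spectral inclusion for (a), and the same monotone fixed-point count for (b).

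Only small slips need fixing:
\begin{itemize}
\item $E_\bK-6$ consists of $12$ hops, not $18$. No care about the trimer site is needed, since simply $\|E_\bK-6\|\le6$.
\item $\partial_wR=+\kappa\,PE_\bK Q(QTQ-z)^{-2}QE_\bK P\ge0$, so the eigenvalue branches are non-increasing.
\item The hypothesis $\mu>16$ enters exactly as the Lipschitz bound $36/(\mu-10)^2<1$, not in Step 1.
\item In the $2+1$ case the binding gap $\mu-7-3\gamma$ is to the $\nu=0$ cluster, not the $\nu=2$ one.
\end{itemize}
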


\begin{proof}[Idea; full proof in Supplementary Material, \S\ref{sec:S3}]
With $P=\Pi_1$, $Q=1-P$: $PH_\mu P=-\mu+h_1$, $PH_\mu Q=P(E_\bK-6)Q$
has norm $\le6$, and $\spec(QH_\mu Q)\subset[-3\mu,-3\mu+12]\cup[0,12]$
is at distance $\ge\mu-10$ from the window. The Schur complement
$F(z)=PH_\mu P-PH_\mu Q(QH_\mu Q-z)^{-1}QH_\mu P$ satisfies
$\|F(z)+\mu-h_1\|\le\eta_B$ and $-\eta_B/(\mu-10)\le F'(z)\le0$;
Feshbach--Schur isospectrality \cite{BFS,GH} $\dim\ker(H_\mu-z)=\dim\ker(F(z)-z)$ and a monotone
fixed-point argument give (a), (b).
\end{proof}

\begin{remark}[What the rate implies]\label{rem:rate}
Theorem~\ref{thm:feshbach}(b) transfers an atom--dimer state of
$h_B(\bK)$ to the three-body problem only when $\mu\gtrsim72/\delta_B(\bK)$.
At $\bK=\mathbf0$ ($\delta_B=0.0354$) this is a crude sufficient bound;
numerically the state exists already for $\mu\ge25$
(\S\ref{sec:numerical}). Near the zone boundary, where
$\delta_B(\bK)$ is exponentially small (Theorem~\ref{thm:boundary}),
the zone of $h_B$ is a strict $\mu\to\infty$ statement: e.g.\ at
$\bK=(\pi,0)$, $\delta_B=1.45\cdot10^{-15}$, and the $O(\mu^{-1})$
corrections from Theorem~\ref{thm:feshbach} decide existence at any physical $\mu$.
\end{remark}

\begin{remark}[Physical interpretation]
The effective Hamiltonian $h_B$ describes an atom propagating on the
lattice with Dirichlet boundary condition at the dimer position
(preventing the atom from sitting on top of the dimer, which would
form a trimer), plus an exchange term $S_\bK$ accounting for the
possibility that the dimer's constituent hops to the atom's position
and the atom becomes the new constituent of the dimer. The Dirichlet
condition is a hard core: the configuration with the atom on top of
the dimer has energy $-3\mu$ and is removed by $\Pi_1$.
\end{remark}

\section{Threshold matrix and counting criterion}\label{sec:threshold}

The effective Hamiltonians $h_B$ and $h_F$ are dominated by the
Dirichlet Laplacian $\eps_D$, whose Green function on the first
shell has a finite limit at the band edge (approached logarithmically
slowly in the $s$-channel only, Remark~\ref{rem:slow}). This is the
crucial simplification over the full two-body lattice Green function,
and leads to an explicit algebraic counting criterion.

\subsection{Dirichlet Green function on the first shell}

Set
\[
G(\bx,\delta)=\frac{1}{4\pi^2}\int_{\T^2}\frac{e^{i\bp\cdot\bx}}{\eps(\bp)+\delta}\,d\bp,
\qquad \delta>0.
\]
$G(\bx,\delta)$ diverges logarithmically as $\delta\downarrow0$, at
the same rate for every $\bx$ (the divergence originates entirely
from $\bp\to0$, where $e^{i\bp\cdot\bx}\to1$ irrespective of $\bx$).
Consequently the difference $G(\mathbf0,\delta)-G(\bx,\delta)$,
$\bx\ne0$, has a finite $\delta\downarrow0$ limit, equal to the
two-point resistance $R(\bx)$ of the infinite square lattice of unit
resistors \cite{Cserti,Atk}; we verify this limit independently by
direct numerical quadrature in the Supplementary Material, \S\ref{sec:S4}.
The Dirichlet Green
function on $\Z^2\setminus\{0\}$ is
\[
G_D(\bx,\by;\delta)=G(\bx-\by,\delta)
-\frac{G(\bx,\delta)G(\by,\delta)}{G(0,\delta)}.
\]

\begin{lemma}[Threshold matrix]\label{lem:M}
Let $b_0=G(0,\delta)$, $g_1=G(\be_1,\delta)$, $g_2=G(2\be_1,\delta)$,
$g_{11}=G(\be_1+\be_2,\delta)$, $A(\delta):=b_0-g_2$,
$B(\delta):=b_0-g_{11}$, and order the shell as
$(\be_1,-\be_1,\be_2,-\be_2)$. The matrix
$M(\delta)=\bigl[(\eps_D+\delta)^{-1}(\be,\be')\bigr]$ is, exactly,
\begin{equation}\label{eq:Mexact}
M(\delta)_{\be\be'}=G(\be-\be',\delta)-\frac{g_1^2}{b_0}.
\end{equation}
It is invariant under the point group $C_{4v}$ of the shell and has
the exact eigenvalues
\begin{equation}\label{eq:Meig}
m_s(\delta)=2+\delta-\frac1{b_0(\delta)},\qquad
m_p(\delta)=A(\delta)\ (\times2),\qquad
m_d(\delta)=2B(\delta)-A(\delta),
\end{equation}
on $\mathbf1=(1,1,1,1)$, on $(1,-1,0,0),(0,0,1,-1)$, and on
$(1,1,-1,-1)$ respectively. As $\delta\downarrow0$,
\begin{equation}\label{eq:Mlimit}
M(\delta)=M_0-\frac{1}{4b_0(\delta)}\,\mathbf1\mathbf1^{T}+O(\delta\ln\tfrac1\delta),
\qquad
M_0=\mathbf1\mathbf1^{T}-\begin{pmatrix}
0&A&B&B\\ A&0&B&B\\ B&B&0&A\\ B&B&A&0\end{pmatrix},
\end{equation}
with $A=2-\tfrac4\pi$, $B=\tfrac2\pi$, $A+2B=2$,
$b_0(\delta)=\tfrac1{2\pi}\ln\tfrac{16}\delta+O(\delta\ln\tfrac1\delta)$.
Thus $M_0$ has eigenvalues $2$ ($s$), $A,A$ ($p$), $\tfrac8\pi-2$ ($d$),
$M(\delta)$ is strictly decreasing in $\delta$, and \emph{the approach to
$M_0$ is logarithmically slow, but only in the $s$-component.}
\end{lemma}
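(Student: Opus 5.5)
The plan is to derive everything from two ingredients: the rank-one (Krein) formula for the Dirichlet resolvent, and the lattice Green identity $(\eps+\delta)G(\cdot,\delta)=\delta_{\mathbf0}$ in position space.

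\emph{Exact form of $M(\delta)$.} First I would justify $G_D(\bx,\by;\delta)=G(\bx-\by)-G(\bx)G(\by)/G(0)$. Imposing $\varphi(0)=0$ is the $\lambda\to\infty$ limit of adding $\lambda|\delta_0\rangle\langle\delta_0|$, so the Krein resolvent formula gives exactly this kernel on $\Z^2\setminus\{0\}$. Alternatively one checks directly that the kernel vanishes at $\bx=0$ and satisfies $(\eps+\delta)G_D=\delta_{\by}$ off the origin. By the $C_{4v}$ symmetry of $\eps$, $G(\be,\delta)=g_1$ for every shell vector $\be$. Hence $M_{\be\be'}=G(\be-\be')-g_1^2/b_0$, which is \eqref{eq:Mexact}.

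\emph{Eigenvalues.} The matrix $M(\delta)$ has diagonal $b_0-g_1^2/b_0$, entry $g_2-g_1^2/b_0$ for opposite sites and $g_{11}-g_1^2/b_0$ for adjacent sites. It is therefore $C_{4v}$-invariant, and the listed vectors are eigenvectors.
\begin{itemize}[label=$\cdot$]
\item The $p$-eigenvalue is the diagonal minus the opposite entry, $b_0-g_2=A(\delta)$.
\item The $d$-eigenvalue is $b_0+g_2-2g_{11}=2B-A$.
\item The $s$-eigenvalue is $b_0+g_2+2g_{11}-4g_1^2/b_0$. To reduce it I would use the Green identity at $\bx=0$ and at $\bx=\be_1$:
\[
(2+\delta)b_0-2g_1=1,\qquad (2+\delta)g_1=\tfrac12(b_0+g_2+2g_{11}).
\]
Substituting gives $m_s=2(2+\delta)g_1-4g_1^2/b_0=(2g_1/b_0)\bigl((2+\delta)b_0-2g_1\bigr)=2g_1/b_0=2+\delta-1/b_0$.
\end{itemize}

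\emph{Limit $\delta\downarrow0$.} Write $c:=b_0-g_1=(1-\delta b_0)/2$, which is exact by the first identity. Then
\[
M_{\be\be'}=-\bigl(b_0-G(\be-\be')\bigr)+2c-c^2/b_0 .
\]
The differences $b_0-G(\bx)$ converge to the resistances $0,A,B$; the integrand $(1-\cos\bp\cdot\bx)/(\eps+\delta)$ is bounded, so the error is $O(\delta\ln\frac1\delta)$. Also $2c=1-\delta b_0$ and $c^2/b_0=\frac1{4b_0}+O(\delta)$. Together these give \eqref{eq:Mlimit}, with $J=\mathbf1\mathbf1^T$ carrying the constant $1$. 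For the explicit constants I would evaluate $\frac1{4\pi^2}\int(1-\cos 2p_1)/\eps$ and $\frac1{4\pi^2}\int(1-\cos(p_1+p_2))/\eps$ by the classical Cserti/Atkinson computation, rescaled to our normalisation $\eps=2-\cos p_1-\cos p_2$. The identity $A+2B=2$ then follows from the $\bx=\be_1$ Green identity in the limit, using $b_0-g_1\to\frac12$, so it serves as a consistency check. The asymptotics $b_0=\frac1{2\pi}\ln\frac{16}\delta+O(\delta\ln\frac1\delta)$ come from the standard elliptic-integral representation $b_0=\frac{2}{\pi(2+\delta)}\Ell\bigl(\frac{2}{2+\delta}\bigr)$ together with $\Ell(k)=\ln\frac{4}{k'}+O(k'^2\ln k')$.

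\emph{Limit eigenvalues, monotonicity, and the main obstacle.} The limit eigenvalues $2,A,A,\frac8\pi-2$ follow by letting $\delta\downarrow0$ in \eqref{eq:Meig}. Strict monotonicity comes from
\[
\frac{d}{d\delta}\langle v,M(\delta)v\rangle=-\|(\eps_D+\delta)^{-1}v\|^2<0
\]
for shell vectors $v\ne0$. Only $m_s$ carries $1/b_0$, so the slow logarithmic approach is confined to the $s$-channel. I expect the main obstacle to be the precise error terms: proving a uniform $O(\delta\ln\frac1\delta)$ bound for the resistance differences and for the expansion of $b_0$, and pinning the normalisation of the resistance values.
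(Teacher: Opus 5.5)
Your proposal is correct and follows essentially the same route as the paper: the Krein formula for the Dirichlet resolvent, the Green identities at $\mathbf0$ and $\be_1$ to reduce the row sum to $2+\delta-1/b_0$, the rewriting $M_{\be\be'}=2R_1-R(\be-\be',\delta)-R_1^2/b_0$ with $R_1=(1-\delta b_0)/2$, the elliptic form of $b_0$, and $M'(\delta)=-P(\eps_D+\delta)^{-2}P$ for monotonicity. Two small tightenings. First, boundedness of $(1-\cos\bp\cdot\bx)/(\eps+\delta)$ alone gives only convergence, not the rate; the rate comes from $|R(\bx,\delta)-R(\bx)|=\delta\langle(1-\cos\bp\cdot\bx)/(\eps(\eps+\delta))\rangle\le C_\bx\,\delta\,b_0(\delta)$, using that $(1-\cos\bp\cdot\bx)/\eps$ is bounded. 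Second, no rescaling of the Cserti values is needed: the unit-resistor Kirchhoff Laplacian has symbol $2\eps$, so its two-point resistance $2(L^{-1}(\mathbf0)-L^{-1}(\bx))$ is exactly $\langle(1-\cos\bp\cdot\bx)/\eps\rangle$.
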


\begin{proof}
\eqref{eq:Mexact} is the rank-one (Krein) formula for the resolvent of
the Laplacian with the site $\mathbf 0$ removed. The lattice equation
$(\eps+\delta)G=\delta_{\mathbf0}$ at $\bx=\mathbf0$ and $\bx=\be_1$
reads, with $a=2+\delta$,
\begin{equation}\label{eq:green-id}
ab_0-2g_1=1,\qquad ag_1-\tfrac12(b_0+g_2+2g_{11})=0 .
\end{equation}
Row sums of \eqref{eq:Mexact} equal $b_0+g_2+2g_{11}-4g_1^2/b_0
=2ag_1-4g_1^2/b_0=2g_1(ab_0-2g_1)/b_0=2g_1/b_0=a-1/b_0$, which is
$m_s$; in $m_p$, $m_d$ the rank-one term cancels. Writing
$R(\bx,\delta)=b_0-G(\bx,\delta)$ and $R_1:=R(\be_1,\delta)=(1-\delta b_0)/2$
by \eqref{eq:green-id}, one has
$M_{\be\be'}=2R_1-R(\be-\be',\delta)-R_1^2/b_0$, whence
\eqref{eq:Mlimit}, since $R(\bx,\delta)=R(\bx)+O(\delta\ln\frac1\delta)$
with $R(\bx)$ the two-point resistance of the square lattice
\cite{Cserti,Atk}: $R(\be_1)=\frac12$, $R(2\be_1)=2-\frac4\pi$,
$R(\be_1+\be_2)=\frac2\pi$ (reproduced independently by quadrature,
Supplementary Material \S\ref{sec:S41}). The identity $A+2B=2$ follows from
\eqref{eq:green-id}: $A+2B=4b_0-a(ab_0-1)=2+\delta-\delta(4+\delta)b_0$.
Monotonicity: $M'(\delta)=-P(\eps_D+\delta)^{-2}P<0$ on the shell.
The asymptotics of $b_0$ follows from
$b_0(\delta)=\displaystyle\frac{2}{\pi a}\,\Ell\bigl(\tfrac2a\bigr)$ and
$\Ell(k)=\ln\frac4{k'}+O(k'^2\ln k')$.
\end{proof}

\begin{remark}\label{rem:slow}
The $s$-channel correction $-\displaystyle\frac1{4b_0}\approx-\frac{\pi}{2\ln(16/\delta)}$
in \eqref{eq:Mlimit} is not small at any practical $\delta$
($-0.13$ at $\delta=10^{-4}$, $-0.09$ at $\delta=10^{-6}$). It is
responsible for the essential-singularity onsets of
Section~\ref{sec:critical}, and it makes finite-volume
determinations of weak bindings unreliable (Table~\ref{tab:effective}).
\end{remark}

\begin{theorem}[Counting criterion]\label{thm:count}
For every $\delta>0$, the number of eigenvalues of $h_B$ (resp.\ $h_F$) below
its threshold minus $\delta$ equals the number of eigenvalues of
$M(\delta)^{1/2} C M(\delta)^{1/2}$ less than $-1$, where
$C=-S_\bK$ (bosons) or $C=\tfrac{\gamma}{2}S_\bK$ (fermions). In the
limit $\delta\downarrow0$ this becomes
\[
N(\bK)=\#\{\text{eigenvalues of }M_0^{1/2}CM_0^{1/2}\text{ less than }-1\},
\]
provided $-1$ is not an eigenvalue of $M_0^{1/2}CM_0^{1/2}$ (this
excludes exactly the zone boundary).
Equivalently, the binding energies $\delta$ are the roots of
$\det(I+M(\delta)C)=0$.
\end{theorem}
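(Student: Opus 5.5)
The plan is the Birman--Schwinger principle for the finite-rank perturbation $C$ of the Dirichlet operator. Write $h=c_0+\eps_D+C$, where $c_0=4$ for bosons and $c_0=2+2\gamma$ for fermions. The essential spectrum of $h$ is $c_0+[0,4]$, because $C$ has rank four (Lemma~\ref{lem:heff}) and $\eps_D$ is a rank-one perturbation of the free Laplacian $\eps$. Eigenvalues of $h$ below the threshold therefore have the form $c_0-\delta'$ with $\delta'>0$.

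\textbf{Determinant and counting.} Since $C=P_{\rm sh}CP_{\rm sh}$, the factorisation $(\eps_D+\delta+C)=(\eps_D+\delta)\bigl(I+(\eps_D+\delta)^{-1}C\bigr)$ and Sylvester's identity $\det(I+XY)=\det(I+YX)$ reduce invertibility to that of $I+M(\delta)C$ on the four-dimensional shell space. Thus $c_0-\delta$ is an eigenvalue iff $\det(I+M(\delta)C)=0$, with multiplicity $\dim\ker$. Because $M(\delta)>0$, this kernel is isomorphic to $\ker(I+M^{1/2}CM^{1/2})$ via $u\mapsto M^{-1/2}u$. Positivity holds since $M(\delta)$ is a compression of the positive resolvent, and its limit $M_0$ is positive definite by Lemma~\ref{lem:M}.

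For the count, I would use the standard Birman--Schwinger monotonicity argument with a coupling parameter. Consider $h_t=c_0+\eps_D+tC$ for $t\in[0,1]$. At $t=0$ there are no eigenvalues below $c_0-\delta$. As $t$ increases, an eigenvalue crosses the level $c_0-\delta$ exactly when $-1/t$ is an eigenvalue of $K(\delta):=M(\delta)^{1/2}CM(\delta)^{1/2}$. Then:
\begin{itemize}
\item Each negative eigenvalue $\kappa<-1$ of $K(\delta)$ gives a crossing at $t=-1/\kappa\in(0,1)$.
\item The crossing is downward, since the eigenvalues of $h_t$ depend analytically on $t$ and the equivalence with $\ker$ preserves multiplicity.
\item Eigenvalues $\kappa\ge-1$ of $K(\delta)$ produce no crossing for $t<1$.
\end{itemize}
This gives the finite-$\delta$ count. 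A cleaner alternative is the variational form: the number of eigenvalues of $h$ below $c_0-\delta$ equals the number of negative eigenvalues of the form $\langle(\eps_D+\delta)u,u\rangle+\langle Cu,u\rangle$. One then minimises over $u$ with prescribed shell values $w$; the minimiser is $(\eps_D+\delta)^{-1}P^*M^{-1}w$, which reduces the form to $\langle(M^{-1}+C)w,w\rangle$. By Sylvester's law of inertia, the negative index of $M^{-1}+C$ equals that of $I+M^{1/2}CM^{1/2}$. I would present this second route, since it gives the multiplicity directly.

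\textbf{Limit $\delta\downarrow0$.} The count is nondecreasing as $\delta\downarrow0$, and it is bounded by four. The total number of eigenvalues strictly below $c_0$ equals the limit of the counts. By Lemma~\ref{lem:M}, $M(\delta)\to M_0$, and hence $K(\delta)\to M_0^{1/2}CM_0^{1/2}$ by continuity of the matrix square root on positive definite matrices. If $-1\notin\spec(M_0^{1/2}CM_0^{1/2})$, eigenvalue continuity gives that the count of eigenvalues below $-1$ stabilises to the limit value.

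\textbf{Main obstacle.} The delicate step is the logarithmically slow $s$-channel convergence, where the correction is of order $1/\ln(1/\delta)$. This does not break continuity; it only affects rates. The real issue is the degenerate case, when $-1$ is an eigenvalue of the limit matrix. In that case eigenvalues of $K(\delta)$ may approach $-1$ from either side, and the count is undetermined. This is exactly the zone-boundary exclusion in the statement, so no further work is needed there.
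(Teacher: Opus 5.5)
Your proof is correct. The route you say you would present (the variational one) reduces the problem to the shell in a different way from the paper.

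The paper factors the operator as the congruence $\eps_D+\delta+C=(\eps_D+\delta)^{1/2}(I+Y)(\eps_D+\delta)^{1/2}$. It applies Sylvester's law of inertia to this bounded invertible congruence on $\ell^2(\Z^2\setminus\{0\})$. It then transfers the spectrum of the finite-rank operator $Y=TCT^*$, with $T=(\eps_D+\delta)^{-1/2}P_{\rm sh}$, to $CT^*T=CM(\delta)$ by the $AB/BA$ identity.

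You instead split the quadratic form. Put $w=Pu$ and $v=u-(\eps_D+\delta)^{-1}P^*M^{-1}w\in\ker P$. The cross term $\langle P^*M^{-1}w,v\rangle=\langle M^{-1}w,Pv\rangle$ vanishes, so
\[
q(u)=\langle(M^{-1}+C)w,w\rangle+\langle(\eps_D+\delta)v,v\rangle ,
\]
and the second term is positive definite. Since $u\mapsto(w,v)$ is a linear bijection, Glazman's lemma and the finite-dimensional congruence by $M^{1/2}$ give the count. This cross-term check is what ``minimising with prescribed shell values'' amounts to, and it is worth writing out.

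Your version needs only the variational characterisation and $4\times4$ linear algebra. It also makes the Schur complement $M(\delta)^{-1}+C$ explicit. Combined with your kernel isomorphism $\ker(I+XY)\cong\ker(I+YX)$, it actually proves the clause ``equivalently, $\det(I+M(\delta)C)=0$'' with multiplicities, which the paper only asserts. The paper's version is shorter. The $\delta\downarrow0$ step is the same in both.

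If you kept the coupling-constant route, analyticity alone does not make the crossings downward, since $C$ is indefinite. Downwardness follows from
\[
\langle\psi,C\psi\rangle=-t^{-1}\langle\psi,(\eps_D+\delta)\psi\rangle<0
\]
at any crossing of the level $c_0-\delta$.
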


\begin{proof}
Use $\eps_D+\delta=(\eps_D+\delta)^{1/2}(I+Y)(\eps_D+\delta)^{1/2}$
with $Y=(\eps_D+\delta)^{-1/2}C(\eps_D+\delta)^{-1/2}$; Sylvester's
law of inertia (for the bounded, boundedly invertible congruence by
$(\eps_D+\delta)^{1/2}$) and the fact that the non-zero spectrum of
$Y=TCT^*$, $T=(\eps_D+\delta)^{-1/2}P$, coincides with that of
$CT^*T=CM(\delta)$, hence with that of $M(\delta)^{1/2}CM(\delta)^{1/2}$,
give the count at fixed $\delta>0$. As $\delta\downarrow0$ the count is
non-decreasing, $M(\delta)\to M_0$ by Lemma~\ref{lem:M}, and the
eigenvalues depend continuously on $M$; if $-1\notin\spec$ at
$\delta=0$ the count stabilises at its $\delta=0$ value.
\end{proof}

\section{Brillouin-zone structure and critical mass ratios}\label{sec:zones}

The counting criterion is algebraic: it reduces to the distribution of
eigenvalues of a $4\times4$ matrix $M_0S_\bK$ depending on
$(u,v)=(\cos K_1,\cos K_2)$.

\subsection{Characteristic polynomial}

\begin{lemma}[Quartic $P_\bK$]\label{lem:char}
The eigenvalues of $M_0 S_\bK$ are the roots of
\[
P_\bK(x)=x^4+2(A-1)(u+v)x^3
+\bigl[(3A^2-8A+4)uv+2A^2-4A\bigr]x^2
+A(A^2-6A+4)(u+v)x+4A^2(1-A).
\]
Special values:
\[
P_\bK(x)\big|_{\bK=0}=(x-2)\bigl(x-(8/\pi-2)\bigr)(x+A)^2,
\]
\[
P_\bK(x)\big|_{\bK=\bpi}=(x+2)\bigl(x+(8/\pi-2)\bigr)(x-A)^2,
\]
\[
P_\bK(x)\big|_{\bK=(\pi,0)}=(x^2-A^2)\bigl(x^2-4(4/\pi-1)\bigr),
\]
\[
P_\bK(x)\big|_{\bK=(\pi/2,\pi/2)}=(x^2-2A)\bigl(x^2-2A(1-A)\bigr).
\]

Moreover, identically in $A,u,v$,
\begin{equation}\label{eq:PA}
P_\bK(\pm A)=A^2(A-2)(3A-2)(1\pm u)(1\pm v),
\end{equation}
\begin{equation}\label{eq:Pd}
P_\bK\bigl(-2(1-A)\bigr)=2(A-2)(A-1)(3A-2)\bigl[(3A-2)(1+u)(1+v)+A(1-uv)\bigr],
\end{equation}
and $P_{\bK+\bpi}(x)=P_\bK(-x)$. For $A=2-4/\pi\in(\frac23,1)$ the right-hand
side of \eqref{eq:PA} is $\le0$ and that of \eqref{eq:Pd} is $\ge0$ on
$[-1,1]^2$, with equality in \eqref{eq:Pd} only at $u=v=-1$. Finally, on the line $v=1$,
\begin{equation}\label{eq:PprimeA}
\partial_xP_\bK(x)\big|_{x=-A}=A(1-u)\,(4-4A-A^2),\qquad
4-4A-A^2=\displaystyle\tfrac{8(4\pi-\pi^2-2)}{\pi^2}>0,
\end{equation}
and the same holds on $u=1$ with $u$ replaced by $v$ ($P_\bK$ is symmetric in $u,v$);
by $P_{\bK+\bpi}(x)=P_\bK(-x)$, $\partial_xP_\bK(A)|_{v=-1}=-A(1+u)(4-4A-A^2)$.
\end{lemma}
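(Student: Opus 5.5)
The plan is a direct computation of $\det(xI-M_0S_\bK)$, made tractable by a gauge transformation and the $C_{4v}$ symmetry of the shell.

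\textbf{Reduction to a determinant.} In the ordered shell basis $(\be_1,-\be_1,\be_2,-\be_2)$, Lemma~\ref{lem:heff} gives $S_\bK$ block-diagonal:
\[
\begin{pmatrix}0&e^{iK_1}\\ e^{-iK_1}&0\end{pmatrix}\oplus\begin{pmatrix}0&e^{iK_2}\\ e^{-iK_2}&0\end{pmatrix}.
\]
Hence $S_\bK^2=I$ on the shell and $\det S_\bK=1$. This gives
\[
P_\bK(x)=\det(xI-M_0S_\bK)=\det\bigl((xS_\bK-M_0)S_\bK\bigr)=\det(xS_\bK-M_0).
\]
To remove the phases, conjugate by $U=\mathrm{diag}(e^{iK_1/2},e^{-iK_1/2},e^{iK_2/2},e^{-iK_2/2})$. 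Then $U^*S_\bK U=\sigma$ is the real involution swapping the members of each pair. The entries of $U^*M_0U$ are $M_0$'s entries (diagonal $1$, antipodal $1-A$, orthogonal $1-B=A-1$ using $A+2B=2$) multiplied by phases $e^{\pm iK_1}$ and $e^{\pm i(K_1\pm K_2)/2}$.

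\textbf{Diagonalising $\sigma$.} Pass to the basis $(1,\pm1)/\sqrt2$ inside each pair. There $\sigma=\mathrm{diag}(1,-1,1,-1)$, and $U^*M_0U$ becomes a Hermitian $4\times4$ matrix. Its entries are built from $1\pm(1-A)\cos K_j$ and, in the coupling between the two pairs, from $(A-1)\cos(K_1/2)\cos(K_2/2)$, $(A-1)\sin(K_1/2)\sin(K_2/2)$ and similar products. In the determinant these appear only through $\cos^2(K_j/2)=(1+u_j)/2$ and $\sin^2(K_j/2)=(1-u_j)/2$, with $u_1=u$, $u_2=v$. Expanding $\det(x\sigma-U^*M_0U)$, for instance by cofactors along the $2\times2$ block structure, should produce a polynomial in $x,u,v,A$. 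After eliminating $B$ I expect it to match the stated quartic.

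This expansion is the main bookkeeping obstacle. I would cross-check it in three ways:
\begin{itemize}[label=--]
\item at $\bK=\mathbf0$ the $C_{4v}$ eigenvalues of Lemma~\ref{lem:M} must be recovered, with $S_{\mathbf0}$ acting as $+1$ on $s,d$ and $-1$ on $p$, giving roots $2$, $8/\pi-2$ and $-A$ (double);
\item at $\bK=\bpi$ the same holds with all signs reversed;
\item the result must be symmetric in $u\leftrightarrow v$.
\end{itemize}

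\textbf{Special values, identities and signs.} The four factorisations follow by substituting $(u,v)=(1,1)$, $(-1,-1)$, $(-1,1)$, $(0,0)$ and using $A=2-4/\pi$; for example $8/\pi-2=2-2A$ and $4(4/\pi-1)=4(1-A)$. The duality $P_{\bK+\bpi}(x)=P_\bK(-x)$ holds because $S_{\bK+\bpi}=-S_\bK$ and the degree is even; equivalently, each odd-degree coefficient is odd in $(u,v)$ and each even-degree coefficient is even. Identities \eqref{eq:PA} and \eqref{eq:Pd} are obtained by substituting $x=\pm A$ and $x=-2(1-A)$ into the quartic and factoring. This is routine computer algebra, and the factor $(1\pm u)(1\pm v)$ can be anticipated from the vanishing at the corners, where $\pm A$ is a double root.

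For the signs, $A\in(\tfrac23,1)$ because $\tfrac4\pi\in(1,\tfrac43)$. Thus $A-2<0$ and $3A-2>0$, so \eqref{eq:PA} is $\le0$ on $[-1,1]^2$. In \eqref{eq:Pd}, $(A-2)(A-1)(3A-2)>0$, and the bracket is a sum of two nonnegative terms. It vanishes only if $(1+u)(1+v)=0$ and $uv=1$, that is, only at $u=v=-1$.

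\textbf{Derivative on the line $v=1$.} Differentiate the quartic in $x$, set $v=1$ and $x=-A$, and factor. The result should be $A(1-u)(4-4A-A^2)$, which must vanish at $u=1$ because $-A$ is a double root there; this is a useful check. Substituting $A=2-4/\pi$ gives
\[
4-4A-A^2=\frac{8(4\pi-\pi^2-2)}{\pi^2},
\]
which is positive since $4\pi-\pi^2\approx2.70>2$. The companion statement at $x=A$, $v=-1$ follows from $P_{\bK+\bpi}(x)=P_\bK(-x)$ by the chain rule.
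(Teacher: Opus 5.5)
Your route is the paper's own. A diagonal phase conjugation turns $S_\bK$ into the $\bK$-independent swap, the $4\times4$ determinant is expanded with $B$ eliminated through $A+2B=2$, and the special factorisations and \eqref{eq:PA}--\eqref{eq:PprimeA} come from substitution and computer algebra. Your rewriting $\det(xI-M_0S_\bK)=\det(xS_\bK-M_0)$ and the passage to the eigenbasis of $\sigma$ are harmless refinements. Your sign arguments for \eqref{eq:PA}, \eqref{eq:Pd}, for $4-4A-A^2>0$ and for the $x\mapsto-x$ duality are correct.

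\textbf{The error.} The one input datum you write down is wrong. $A+2B=2$ gives $1-B=A/2$ ($=1-2/\pi\approx0.363$), not $A-1$ ($\approx-0.273$). This is not cosmetic. Let $c$ be the entry of $M_0$ between orthogonal shell sites. Then
$e_2=\tfrac12\bigl[(\operatorname{tr}M_0S_\bK)^2-\operatorname{tr}(M_0S_\bK)^2\bigr]=4\bigl[(1-A)^2-c^2\bigr]uv+2A^2-4A$.
With $c=A-1$ the $uv$ term of the $x^2$-coefficient vanishes identically. Also, at $\bK=\mathbf0$ the $s$-root would be $1+(1-A)+2c=A$ instead of $2$. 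Your first cross-check would expose this, but the expansion as you set it up would not produce $P_\bK$. With $c=A/2$ it gives exactly $(3A^2-8A+4)uv+2A^2-4A$.

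\textbf{Corrections to the $\sigma$-eigenbasis picture.}
\begin{itemize}
\item The inter-pair coupling carries the coefficient $2c=A$. Its entries are $A\cos\frac{K_1}2\cos\frac{K_2}2$, $A\sin\frac{K_1}2\sin\frac{K_2}2$ and mixed terms $\pm iA\cos\frac{K_1}2\sin\frac{K_2}2$, $\pm iA\sin\frac{K_1}2\cos\frac{K_2}2$.
\item The intra-pair blocks are not diagonal. Besides $1\pm(1-A)\cos K_j$ on the diagonal, they have off-diagonal entries $\pm i(1-A)\sin K_j$. You omitted these, and they contribute to the determinant.
\item The factor $(1\pm u)(1\pm v)$ in \eqref{eq:PA} is not explained by the double roots at the corners. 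It comes from vanishing on the whole lines $u=\mp1$ and $v=\mp1$. There $(1,-1,0,0)$, respectively $(0,0,1,-1)$, is an exact eigenvector of $M_0S_\bK$ with eigenvalue $\pm A$.
\end{itemize}

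\textbf{A shortcut.} You can avoid the determinant bookkeeping by reading the coefficients off traces:
\begin{itemize}
\item $e_1=\operatorname{tr}(M_0S_\bK)=2(1-A)(u+v)$;
\item $e_2$ as above;
\item $e_3=\det M_0\,\operatorname{tr}(S_\bK M_0^{-1})$, using $S_\bK^{-1}=S_\bK$ on the shell;
\item $e_4=\det M_0=4A^2(1-A)$.
\end{itemize}
For $e_3$, the antipodal entry of $M_0^{-1}$ follows from the $s,p,d$ spectrum of $M_0$, giving $e_3=A(6A-A^2-4)(u+v)$. Together these reproduce the stated quartic.
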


\begin{proof}
With $U=\mathrm{diag}(e^{-iK_1/2},e^{iK_1/2},e^{-iK_2/2},e^{iK_2/2})$
one has $US_\bK U^*=S_{\mathbf 0}$ (the $\bK$-independent swap
$\pm\be_j\leftrightarrow\mp\be_j$), so
$\det(x-M_0S_\bK)=\det(x-UM_0U^*S_{\mathbf0})$, and all $\bK$-dependence
sits in the phases of the off-diagonal entries of $UM_0U^*$. Expanding
this $4\times4$ determinant, the phases combine into $u=\cos K_1$,
$v=\cos K_2$, and $A+2B=2$ eliminates $B$; this gives $P_\bK$. The
expansion and the four special factorisations were checked as
polynomial identities in $A,u,v$ (SymPy, \S\ref{sec:S43}).
\end{proof}

\subsection{Bosonic zone}

\begin{theorem}[Atom--dimer zone: bosons]\label{thm:zoneB}
For three identical bosons, an atom--dimer state exists at total
quasimomentum $\bK$ if and only if
\[
F_B(u,v):=\bigl[(1-A)^4-A^4\bigr]
+(A^3-6A^2+6A-2)(u+v)
+(3A-2)(A-2)uv<0.
\]
Numerically,
\[
F_B(u,v)
=
-0.2734008215
-0.42466\ldots(u+v)
-0.22954\ldots uv.
\]
In
particular:
\begin{enumerate}
\item[(i)] $\bK=\mathbf 0$ belongs to the zone, with binding energy
$\delta_B(\mathbf 0)=\delta_\infty=0.0354199892\ldots$, reproducing
the transcendental constant of \cite{Trimer2026} by an independent
route;
\item[(ii)] $\bK=(\pi,0)$ belongs to the zone, but only marginally:
$x_{\max}=2\sqrt{4/\pi-1}=1.04545$, and by
Theorem~\ref{thm:boundary} the binding is exponentially small,
$\delta_B(\pi,0)=1.4535\cdot10^{-15}$ (see Remark~\ref{rem:rate});
\item[(iii)] $\bK=\bpi$ does \emph{not} belong to the zone
($F_B=+0.3464$): no atom--dimer state exists at $\bK=\bpi$;
\item[(iv)] on the diagonal $\bK=(K,K)$ the boundary is
$\cos K^*=2-3\pi/4$, $K^*=0.61593\pi$;
\item[(v)] at $\bK=\mathbf 0$ the $s$-channel binds, the $d$-channel
does not, and the $p$-channel is repulsive, giving exactly two bound
states (trimer and atom--dimer).
\end{enumerate}
For three identical bosons, the limiting atom--dimer Hamiltonian
\(h_B(\bK)\) has a bound state below its threshold if and only if
\[
F_B(u,v)<0.
\]
Consequently, for every fixed \(\bK\) in the interior of this zone,
the corresponding three-body atom--dimer eigenvalue persists for all
sufficiently large \(\mu\).
\end{theorem}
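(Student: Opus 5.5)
By Theorem~\ref{thm:count} with $C=-S_\bK$, the number of bound states of $h_B(\bK)$ equals the number of eigenvalues of $-M_0^{1/2}S_\bK M_0^{1/2}$ below $-1$, away from the zone boundary. Since $M_0>0$ (Lemma~\ref{lem:M}), the matrix $M_0^{1/2}S_\bK M_0^{1/2}$ is Hermitian and similar to $M_0S_\bK$. Hence $N_B(\bK)=\#\{\text{roots of }P_\bK\text{ greater than }1\}$, with $P_\bK$ from Lemma~\ref{lem:char}, and all four roots are real. The plan is therefore to locate the four roots of $P_\bK$ well enough to count those exceeding $1$.

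\emph{Step 1: identify $F_B$.} Substituting $x=1$ into $P_\bK$ and collecting terms gives:
\begin{itemize}
\item constant term $1-4A+6A^2-4A^3=(1-A)^4-A^4$;
\item coefficient of $u+v$: $2(A-1)+A(A^2-6A+4)=A^3-6A^2+6A-2$;
\item coefficient of $uv$: $3A^2-8A+4=(3A-2)(A-2)$.
\end{itemize}
So $F_B(u,v)=P_\bK(1)$ identically. The numerical coefficients then follow by inserting $A=2-4/\pi$.

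\emph{Step 2: interlacing of the roots.} I use the sign data of Lemma~\ref{lem:char} for $A\in(\tfrac23,1)$:
\begin{itemize}
\item $P_\bK(\pm A)\le0$ by \eqref{eq:PA};
\item $P_\bK(-2(1-A))\ge0$ by \eqref{eq:Pd}, where $-A<-2(1-A)<0<A$;
\item $P_\bK(x)\to+\infty$ as $x\to\pm\infty$.
\end{itemize}
In the generic case of strict inequalities, the intermediate value theorem gives one root in each of $(-\infty,-A)$, $(-A,-2(1-A))$, $(-2(1-A),A)$ and $(A,\infty)$. Since $P_\bK$ is a quartic, there are no others. Thus exactly one root, $x_{\max}$, exceeds $A$, and $A<1$.

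It follows that $N_B(\bK)\in\{0,1\}$. Moreover $N_B(\bK)=1$ iff $x_{\max}>1$, iff $P_\bK(1)<0$, because $P_\bK>0$ on $(x_{\max},\infty)$ and $P_\bK<0$ on $(A,x_{\max})$.

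The degenerate cases need separate treatment. Equality in \eqref{eq:PA} occurs on the lines $u=\pm1$ or $v=\pm1$, and equality in \eqref{eq:Pd} only at $u=v=-1$. I would handle these by continuity of the ordered roots in $(u,v)$. The interlacing then holds in the closed sense, and the strict comparison with $1>A$ is unaffected. At $\bK=\bpi$ the explicit factorisation can be used directly. This closed-set bookkeeping is the step I expect to be the main obstacle. In particular, one must make sure that a double root at $\pm A$ never migrates past $1$, and that $P_\bK(1)=0$ is exactly the excluded zone boundary of Theorem~\ref{thm:count}.

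\emph{Step 3: the special points.}
\begin{itemize}
\item For (i) and (v), the factorisation at $\bK=\mathbf0$ gives roots $2,\ 8/\pi-2,\ -A,\ -A$. These are the $s$, $d$ and $p$ channels, so only $s$ exceeds $1$. The binding energy solves $m_s(\delta)=1$, i.e.\ $2+\delta-1/b_0(\delta)=1$, by \eqref{eq:Meig}; solving it numerically yields $\delta_\infty$.
\item For (ii), the factorisation at $(\pi,0)$ gives $x_{\max}=2\sqrt{4/\pi-1}>1$. The exponentially small value of $\delta_B$ is deferred to Theorem~\ref{thm:boundary}.
\item For (iii), $F_B(-1,-1)>0$ is evaluated directly, consistent with the factorisation at $\bpi$, whose roots are all $\le A$.
\item For (iv), solve the quadratic $F_B(w,w)=0$ for $w=\cos K$ and select the root in $[-1,1]$. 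I expect it to simplify to $2-3\pi/4$.
\end{itemize}

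\emph{Step 4: persistence.} For $\bK$ in the open zone, $\delta_B(\bK)>0$ is an isolated eigenvalue below $[4,8]$. Theorem~\ref{thm:feshbach}(b), applied once $2\eta_B$ is smaller than $\delta_B$ and smaller than the gap to the rest of the spectrum, transfers it to $H_\mu(\bK)$ for all large $\mu$.
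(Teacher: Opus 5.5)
Your proposal is correct. It shares the paper's skeleton: Theorem~\ref{thm:count} with $C=-S_\bK$, reduction to the roots of $P_\bK$ exceeding $1$, and $F_B=P_\bK(1)$ by direct expansion. You prove the key step, that at most one root exceeds $1$, by a different route.

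The paper uses Sylvester's law of inertia. The matrix $M_0^{1/2}S_\bK M_0^{1/2}$ is congruent to $S_\bK$, whose eigenvalues are $+1,+1,-1,-1$, so $P_\bK$ has exactly two positive roots $x_2^+\le x_{\max}$. Then $P_\bK(A)=\prod_i(A-x_i)\le0$, with the two negative-root factors positive, forces $x_2^+\le A<1$. This needs only \eqref{eq:PA} and works verbatim on the degenerate lines $u=\pm1$, $v=\pm1$.

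Your sign pattern $+,-,+,-,+$ at $-\infty,\,-A,\,-2(1-A),\,A,\,+\infty$ avoids the inertia argument. It costs the extra identity \eqref{eq:Pd} and a density step. In return it yields the full interlacing $x_{\min}\le-A\le x_2\le-2(1-A)\le x_3\le A\le x_{\max}$, which also contains the root bounds behind Corollary~\ref{cor:twostate}.

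The step you flag as the main obstacle is short:
\begin{itemize}
\item On $(-1,1)^2$ all inequalities in \eqref{eq:PA} and \eqref{eq:Pd} are strict.
\item The ordered real roots depend continuously on $(u,v)$, so $x_3\le A$ on all of $[-1,1]^2$.
\item Then $P_\bK(1)=\prod_i(1-x_i)$ has the sign of $1-x_{\max}$, so no root other than $x_{\max}$ can reach $1$.
\end{itemize}

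Two loose ends in your plan close easily.
\begin{itemize}
\item For (iv), $F_B(u,u)=\bigl((A-2)u-2A+1\bigr)\bigl((3A-2)u+2A^2-2A+1\bigr)$. The first factor vanishes at $u=(2A-1)/(A-2)=2-3\pi/4$; the second vanishes only near $u\approx-3.34$, outside $[-1,1]$. So your expectation is right.
\item On $F_B=0$ there is no bound state; the paper's proof also leaves this implicit. By the strict monotonicity in Lemma~\ref{lem:M}, $0<M(\delta)<M_0$ for $\delta>0$, hence $M(\delta)^{-1}>M_0^{-1}$. So the top eigenvalue $\max_v\langle v,S_\bK v\rangle/\langle v,M(\delta)^{-1}v\rangle$ of $M(\delta)^{1/2}S_\bK M(\delta)^{1/2}$ is strictly below $x_{\max}=1$. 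The count of Theorem~\ref{thm:count} therefore vanishes for every $\delta>0$, and the boundary is indeed excluded from the zone.
\end{itemize}
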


\begin{proof}
By Theorem~\ref{thm:count} with $C=-S_\bK$: the number of bound
states is the number of eigenvalues of $M_0S_\bK$ less than $-1$. By
Lemma~\ref{lem:char} the sign changes exactly when $P_\bK(1)=0$, and
$F_B$ is $P_\bK(1)$ rearranged. The numerical values in (i)--(iv)
follow by substituting the explicit $A=2-4/\pi$. That $P_\bK(1)<0$ is equivalent to $x_{\max}(\bK)>1$ uses that at most
one root of $P_\bK$ exceeds $1$ for every $\bK$: since $M_0>0$ and $S_\bK$ has signature $(2,2)$, $P_\bK$ has exactly
two positive roots $x_{\max}\ge x_2^+$; and since
$P_\bK(0)=4A^2(1-A)>0$ (as $A=2-4/\pi\in(2/3,1)$) while
$P_\bK(A)\le0$ by \eqref{eq:PA}, the smaller positive root satisfies
$x_2^+\le A=2-4/\pi<1$, with equality iff $K_1=\pi$ or $K_2=\pi$ (there $P_\bK(A)=0$, and by \eqref{eq:PprimeA} the other positive root lies strictly above $A$, so the root $A$ is $x_2^+$). For (iv), one has exactly
\[
F_B(u,u)
=
\bigl((A-2)u-2A+1\bigr)
\bigl((3A-2)u+2A^2-2A+1\bigr),
\] and the second factor has no zero in $[-1,1]$. Case (v): at~$\bK=\mathbf0$ the
eigenvalues of $M_0S_{\mathbf0}$ are $2$ ($s$), $8/\pi-2\approx0.546$
($d$) and $-A,-A$ ($p$); only $2>1$.
\end{proof}

\subsection{Fermionic zone and closed-form critical mass ratios}

\begin{theorem}[Critical mass ratios]\label{thm:zoneF}
For the $2+1$ fermionic configuration with mass ratio $\gamma$, an~atom--dimer state exists at total quasimomentum $\bK$ if and only if
$\gamma>\gamma_c(\bK)=2/|x_{\min}(\bK)|$, where $x_{\min}(\bK)$ is the
smallest root of $P_\bK$. The number of bound states equals the
number of roots less than $-2/\gamma$.
The closed-form values are
\[
\boxed{\begin{gathered}
\gamma_c(\mathbf 0)=\frac{\pi}{\pi-2}\approx2.7519384,\qquad
\gamma_c(\bpi)=1,\\
\gamma_c(\pi,0)=\displaystyle\sqrt{\frac{\pi}{4-\pi}}\approx1.9130584,\qquad
\gamma_c\bigl(\tfrac\pi2,\tfrac\pi2\bigr)=\sqrt{\frac{\pi}{\pi-2}}\approx1.6588967.
\end{gathered}}
\]
Globally $\min_\bK\gamma_c=1$ (at $\bK=\bpi$),
$\max_\bK\gamma_c=\pi/(\pi-2)$ , attained only at $\bK=\mathbf 0$. Hence:
\begin{enumerate}
\item[(i)] for $\gamma<1$ there is no atom--dimer state for any
$\bK$;
\item[(ii)] for $1<\gamma<\pi/(\pi-2)$ the zone
$\{\bK:\gamma_c(\bK)<\gamma\}$ is a non-empty open proper subset of
$\T^2$ containing $\bpi$ and not containing $\mathbf0$;
\item[(iii)] for $\gamma>\pi/(\pi-2)$ a state exists for every
$\bK$.
\end{enumerate}
\end{theorem}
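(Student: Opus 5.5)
The plan is to specialise the counting criterion of Theorem~\ref{thm:count} to the fermionic coupling $C=\tfrac{\gamma}{2}S_\bK$. The matrix $M_0^{1/2}CM_0^{1/2}=\tfrac\gamma2\,M_0^{1/2}S_\bK M_0^{1/2}$ is Hermitian. It has the same spectrum as $\tfrac\gamma2 M_0S_\bK$, since $M_0>0$ by Lemma~\ref{lem:M} and the two matrices are similar. Hence the roots of $P_\bK$ are real, and by Lemma~\ref{lem:char} they are the eigenvalues of $M_0S_\bK$. Theorem~\ref{thm:count} then says that the number of bound states equals the number of roots $x$ of $P_\bK$ with $\tfrac\gamma2 x<-1$, i.e.\ $x<-2/\gamma$. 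In particular a state exists iff $x_{\min}(\bK)<-2/\gamma$. Because $S_\bK$ has signature $(2,2)$ on the shell, $M_0^{1/2}S_\bK M_0^{1/2}$ has two negative eigenvalues, so $x_{\min}<0$ and the condition reads $\gamma>2/|x_{\min}(\bK)|$. The boundary value $\gamma=\gamma_c$ is the excluded case $-1\in\spec$ in Theorem~\ref{thm:count}.

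The four boxed values come directly from the factorisations in Lemma~\ref{lem:char}.
\begin{itemize}
\item At $\bK=\mathbf0$ the roots are $2$, $8/\pi-2$ and $-A$ (double), so $x_{\min}=-A$ and $\gamma_c=2/A=\pi/(\pi-2)$.
\item At $\bK=\bpi$ the roots are $-2$, $-(8/\pi-2)$ and $A$ (double), giving $\gamma_c=1$.
\item At $(\pi,0)$ the roots are $\pm A$ and $\pm2\sqrt{4/\pi-1}$. Since $2\sqrt{4/\pi-1}\approx1.045>A\approx0.727$, we get $\gamma_c=1/\sqrt{4/\pi-1}=\sqrt{\pi/(4-\pi)}$.
\item At $(\tfrac\pi2,\tfrac\pi2)$ the roots are $\pm\sqrt{2A}$ and $\pm\sqrt{2A(1-A)}$. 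The larger modulus is $\sqrt{2A}=\sqrt{4-8/\pi}$, so $\gamma_c=\sqrt{\pi/(\pi-2)}$.
\end{itemize}

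For the global bounds I show $-2\le x_{\min}(\bK)\le-A$.

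\emph{Lower bound on $x_{\min}$ (so $\gamma_c\ge1$).} We have $|x|\le\|M_0^{1/2}S_\bK M_0^{1/2}\|\le\|M_0\|\,\|S_\bK\|=2$, because the top eigenvalue of $M_0$ is $2$ and $S_\bK$ is a partial isometry. Equality would force the extremal vector to be the $s$-vector $\mathbf1$ with $S_\bK\mathbf1=-\mathbf1$. That means $e^{i\bK\cdot\be}=-1$ for all four $\be$, i.e.\ $\bK=\bpi$. So $\min\gamma_c=1$ is attained only at $\bpi$.

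\emph{Upper bound on $x_{\min}$ (so $\gamma_c\le\pi/(\pi-2)$).} By \eqref{eq:PA}, $P_\bK(-A)=A^2(A-2)(3A-2)(1-u)(1-v)\le0$, because $A\in(\tfrac23,1)$. Since $P_\bK(x)\to+\infty$ as $x\to-\infty$, there is a root $\le-A$, hence $\gamma_c\le2/A$. For strictness away from $\bK=\mathbf0$ there are two cases.
\begin{itemize}
\item If $u\ne1$ and $v\ne1$, then $P_\bK(-A)<0$, so some root is strictly below $-A$.
\item On the line $v=1$ with $u\ne1$, $-A$ is itself a root. By \eqref{eq:PprimeA}, $P_\bK'(-A)=A(1-u)(4-4A-A^2)>0$, so $P_\bK$ is negative just left of $-A$ and another root lies strictly below $-A$. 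The line $u=1$ follows by symmetry.
\end{itemize}
Hence the maximum of $\gamma_c$ is attained only at $\mathbf0$. This strictness on the axes, where $-A$ is itself a root, is the delicate step, and \eqref{eq:PprimeA} is exactly what resolves it.

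Items (i)--(iii) then follow.
\begin{enumerate}
\item[(i)] If $\gamma<1$, then $\gamma<\gamma_c(\bK)$ for every $\bK$, so no state exists.
\item[(iii)] If $\gamma>\pi/(\pi-2)$, then $\gamma>\gamma_c(\bK)$ for every $\bK$, so a state exists everywhere.
\item[(ii)] The smallest root of $P_\bK$ depends continuously on $(u,v)$, because the coefficients do and all roots are real; hence $\gamma_c$ is continuous. Therefore $\{\bK:\gamma_c(\bK)<\gamma\}$ is open. For $1<\gamma<\pi/(\pi-2)$ it contains $\bpi$ (where $\gamma_c=1$) and excludes $\mathbf0$ (where $\gamma_c=\pi/(\pi-2)$), so it is non-empty and proper.
\end{enumerate}
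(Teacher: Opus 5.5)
Your proposal is correct and follows the paper's proof. Both use the counting criterion with $C=\tfrac\gamma2S_\bK$, the factorisations of Lemma~\ref{lem:char} for the boxed values, the norm bound $|x|\le2$, and \eqref{eq:PA} with \eqref{eq:PprimeA} for $x_{\min}\le-A$ and its strictness away from $\bK=\mathbf 0$. The paper defers that strictness to Corollary~\ref{cor:twostate} and argues via the sign of the positive-root factor of $P_\bK$, where you use the intermediate value theorem from $x\to-\infty$.

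Your equality-case argument that $\gamma_c=1$ only at $\bpi$, and your continuity argument for openness in (ii), are additions the paper does not spell out. The one point neither you nor the paper proves is absence of a state at $\gamma=\gamma_c$ itself, where Theorem~\ref{thm:count} does not apply. It follows from $M(\delta)<M_0$ for $\delta>0$ (strict monotonicity in Lemma~\ref{lem:M}), which keeps the smallest eigenvalue of $\tfrac{\gamma_c}2M(\delta)^{1/2}S_\bK M(\delta)^{1/2}$ strictly above $-1$.
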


\begin{proof}
Theorem~\ref{thm:count} with $C=\tfrac{\gamma}{2}S_\bK$: the condition
$\tfrac{\gamma}{2}x_i<-1$ reads $x_i<-2/\gamma$. Lemma~\ref{lem:char}
gives $x_{\min}$ at each $\bK$. At $\bK=\mathbf 0$: $x_{\min}=-A$ and
$\gamma_c=2/A=2/(2-4/\pi)=\pi/(\pi-2)$. At $\bK=\bpi$: $x_{\min}=-2$,
$\gamma_c=1$. At $\bK=(\pi,0)$: $x_{\min}=-2\sqrt{4/\pi-1}$,
$\gamma_c=\sqrt{\pi/(4-\pi)}$. At $\bK=(\pi/2,\pi/2)$:
$x_{\min}=-\sqrt{2A}$, $\gamma_c=\sqrt{2/A}=\sqrt{\pi/(\pi-2)}$.
The lower bound is exact: $|x|\le\|M_0^{1/2}S_\bK M_0^{1/2}\|\le
\|M_0\|\,\|S_\bK\|=2$ for every root, so~$\gamma_c\ge1$, with equality
at $\bK=\bpi$. The upper bound $\gamma_c\le\pi/(\pi-2)$, i.e.\ $x_{\min}(\bK)\le-A$, follows
from \eqref{eq:PA}: for $x<0$ the factor of $P_\bK$ coming from its two
positive roots is positive, so~$P_\bK(-A)\le0$ places $-A$ between the two
negative roots, $x_{\min}\le-A\le x_2$. It is attained only at $\bK=\mathbf0$ (Corollary~\ref{cor:twostate}; numerical confirmation: \autoref{fig:gammacmap}, \autoref{fig:zonepath}b,
\S\ref{sec:S44}.
\end{proof}

\begin{remark}[Consistency with earlier work]\label{rem:consistency_companion}
In \cite{Companion} the ratio at $\bK=\mathbf0$ is defined by the integral
equation (5) there, $\gamma_c^{-1}=\langle(\sin q_1+\sin q_2)^2/(2\eps(\bq))\rangle$.
The cross term is odd and $\langle2\sin^2q_1/\eps\rangle=R(2\be_1)=A$, so the
integral equals $A/2=(\pi-2)/\pi$ exactly: the value $2.75194\pm0.00001$ of~\cite{Companion} is $\pi/(\pi-2)$, and the doubly degenerate level found there is
the $p$-wave atom--dimer pair of Theorem~\ref{thm:crit}(b). Two statements of
\cite{Companion} are refined: (i) at $\lambda=\infty$ the onset is
$\delta_F\simeq(2\pi/\gamma_c^2)\,h/\ln(1/h)$, $h=\gamma-\gamma_c$, not a pure
power law $Ch^\nu$; (ii) $\gamma>\gamma_c(\mathbf0)$ is the binding criterion at
$\bK=\mathbf0$ only: at~$\bK\ne\mathbf0$ binding requires $\gamma>\gamma_c(\bK)$,
and globally already $\gamma>1$ suffices (Theorem~\ref{thm:zoneF}).
\end{remark}

\begin{corollary}[Two-state structure of the fermionic zones]\label{cor:twostate}
For every $\bK\in\T^2$ the quartic $P_\bK$ has exactly two negative
roots $x_{\min}(\bK)\le x_2(\bK)<0$. Consequently
$N_F(\bK;\gamma)\le2$ for all $\bK,\gamma$, and the second state exists
iff $\gamma>\gamma_c^{(2)}(\bK):=2/|x_2(\bK)|$. On the lines $K_1=0$ or $K_2=0$ one has $x_2\equiv-A$ exactly, i.e.\
$\gamma_c^{(2)}\equiv\pi/(\pi-2)$ (a transverse $p$-mode that decouples
there); off these lines $P_\bK(-A)<0$ by \eqref{eq:PA}, so they are
exactly the set where $\gamma_c^{(2)}=\pi/(\pi-2)$;
on them $x_{\min}(\bK)<-A$ strictly except at $\bK=\mathbf 0$, so the maximum of
$\gamma_c$ is attained only at~$\bK=\mathbf 0$; at~$\bK=\bpi$, $\gamma_c^{(2)}=2/(8/\pi-2)=\pi/(4-\pi)\approx3.6598$.
Moreover $\pi/(\pi-2)\le\gamma_c^{(2)}(\bK)\le\pi/(4-\pi)$ for every $\bK$, the
upper bound being attained only at $\bpi$; so for $\gamma>\pi/(4-\pi)$ there are exactly two fermionic
atom--dimer states at every $\bK$.
\end{corollary}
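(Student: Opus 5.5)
The plan is to work entirely with the quartic $P_\bK$ of Lemma~\ref{lem:char} and the signature argument already used in the proof of Theorem~\ref{thm:zoneB}.

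\textbf{Step 1: exactly two negative roots.} The eigenvalues of $M_0S_\bK$ are real, since they coincide with those of the Hermitian matrix $M_0^{1/2}S_\bK M_0^{1/2}$. Because $M_0>0$ (Lemma~\ref{lem:M}), Sylvester's law of inertia gives this matrix the same inertia as $S_\bK$. Now $S_\bK$ is a unitary conjugate of the swap $S_{\mathbf0}$, so its signature is $(2,2)$. Moreover $\det M_0S_\bK=P_\bK(0)=4A^2(1-A)>0$, so $0$ is not a root. Hence $P_\bK$ has exactly two negative roots $x_{\min}\le x_2<0$. Theorem~\ref{thm:count} with $C=\tfrac\gamma2S_\bK$ then gives $N_F=\#\{x_i<-2/\gamma\}\le2$, and the second state exists iff $\gamma>2/|x_2|$.

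\textbf{Step 2: the lower bound and the axes.} On $[-1,1]^2$, \eqref{eq:PA} gives $P_\bK(-A)\le0$. For $x<0$ the factor of $P_\bK$ belonging to the positive roots is positive, so $(x_{\min}+A)(x_2+A)\le0$, that is, $x_{\min}\le-A\le x_2$. This yields $\gamma_c^{(2)}\ge 2/A=\pi/(\pi-2)$.

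Equality $x_2=-A$ forces $P_\bK(-A)=0$, which by \eqref{eq:PA} happens iff $u=1$ or $v=1$, i.e.\ on the axes. Conversely, on $v=1$ we have $P_\bK(-A)=0$. By \eqref{eq:PprimeA}, $\partial_xP_\bK(-A)=A(1-u)(4-4A-A^2)$, which is nonzero for $u\ne1$. So $-A$ is a simple root there, and it must be $x_2$ rather than $x_{\min}$. To see this, note that $P_\bK(x)\to+\infty$ as $x\to-\infty$, so on the negative half-line $P_\bK>0$ left of $x_{\min}$ and $P_\bK<0$ between the two negative roots; the positive-root factor is positive for $x<0$. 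Hence at $x_{\min}$ the derivative is negative and at $x_2$ it is positive. Since $\partial_xP_\bK(-A)>0$, the root $-A$ is $x_2$, and therefore $x_{\min}<-A$ strictly for $u\ne1$.

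At $u=v=1$ the factorisation $(x+A)^2$ gives $x_{\min}=x_2=-A$. This also shows that the maximum of $\gamma_c$ is attained only at $\mathbf0$, completing Theorem~\ref{thm:zoneF}. The value at $\bpi$ is read off from the factorisation $P_{\bpi}=(x+2)(x+8/\pi-2)(x-A)^2$, which gives $x_2=-(8/\pi-2)$.

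\textbf{Step 3: the upper bound $x_2\le-(8/\pi-2)=-2(1-A)$.} This is the main obstacle. I would use \eqref{eq:Pd}, which gives $P_\bK(-2(1-A))\ge0$ with equality only at $\bpi$. By the sign pattern from Step 2, $P_\bK(x)\ge0$ on the negative half-line holds exactly for $x\le x_{\min}$ or $x\ge x_2$. So the point $-2(1-A)$ lies either to the left of $x_{\min}$ or to the right of $x_2$.

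To exclude the first alternative I would invoke Step 2: $x_{\min}\le-A<-2(1-A)$, where the last inequality holds because $A>2/3$ means $A>2-2A$. Therefore $-2(1-A)\ge x_2$, which gives $\gamma_c^{(2)}\le 2/(2(1-A))=\pi/(4-\pi)$.

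For the equality case, equality forces $P_\bK(-2(1-A))=0$, and by \eqref{eq:Pd} this happens only at $u=v=-1$, i.e.\ at $\bpi$. Finally, if $\gamma>\pi/(4-\pi)$, both negative roots lie below $-2/\gamma$, so there are exactly two states at every $\bK$.
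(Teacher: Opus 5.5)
Your proposal is correct and follows the paper's proof essentially step for step: Sylvester's law gives the two negative roots; \eqref{eq:PA} places $-A$ between $x_{\min}$ and $x_2$; \eqref{eq:PprimeA} identifies $-A$ as $x_2$ on the axes, where your sign-of-derivative argument is the paper's $Q(-A)=\partial_xP_\bK(-A)>0$ in other words; and \eqref{eq:Pd} with $x_{\min}\le -A<-2(1-A)$ gives the upper bound and its equality case at $\bpi$. The one step you leave slightly implicit is that off the axes the strict inequality $P_\bK(-A)<0$ also rules out $x_{\min}=-A$, which the uniqueness of the maximum of $\gamma_c$ at $\mathbf 0$ needs; it follows from the same line you wrote for $x_2$.
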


\begin{proof}
$M_0>0$, so $M_0^{1/2}S_\bK M_0^{1/2}$ is congruent to $S_\bK$, whose
eigenvalues are $+1,+1,-1,-1$ (two $2\times2$ blocks
$\bigl(\begin{smallmatrix}0&e^{-iK_j}\\e^{iK_j}&0\end{smallmatrix}\bigr)$);
Sylvester's law gives exactly two negative eigenvalues, which are the
negative roots of $P_\bK$. For $v=\cos K_2=1$, $P_\bK(-A)=0$ identically
in $u$ (computer algebra; the~eigenvector is $(0,0,1,-1)$, on which
$S_\bK$ acts as $-1$ and $M_0$ as $A$). Writing $P_\bK=(x+A)Q$ on $v=1$: for $x<0$ the factor of $Q$ coming from the two
positive roots is positive, and $Q(-A)=\partial_xP_\bK(-A)>0$ for $u<1$ by
\eqref{eq:PprimeA}; hence the other negative root lies strictly below $-A$, i.e.\
$x_{\min}<-A=x_2$. Off the lines $u=1$, $v=1$ one has $P_\bK(-A)<0$ by \eqref{eq:PA},
so $x_{\min}<-A<x_2$ there. The lower bound is $x_2\ge-A$ (proof of Theorem~\ref{thm:zoneF}). For the upper
bound, $-2(1-A)=-(8/\pi-2)\in(-A,0)$ and $P_\bK<0$ on $(x_{\min},x_2)$; since
$P_\bK(-2(1-A))\ge0$ by \eqref{eq:Pd}, $x_2\le-2(1-A)$, i.e.\
$\gamma_c^{(2)}\le1/(1-A)=\pi/(4-\pi)$, with equality only at $\bK=\bpi$.
The $601\times601$ scan of \S\ref{sec:S44} confirms both bounds.
\end{proof}

The resulting fermionic zone structure is shown in
\autoref{fig:fphase} along the high-symmetry path and in
\autoref{fig:zonemaps}(b)--(f) over the whole Brillouin zone: the zone
opens at $M=\bpi$ at $\gamma=1$, reaches $X$ at
$\gamma=\sqrt{\pi/(4-\pi)}$, reaches $\Gamma$ (and thus covers $\T^2$)
at $\gamma=\pi/(\pi-2)$,
and a second band appears first on the $\Gamma$--$X$ lines (hatched in
\autoref{fig:zonemaps}(e)).

\begin{figure}[t]\centering
\includegraphics[width=0.94\textwidth]{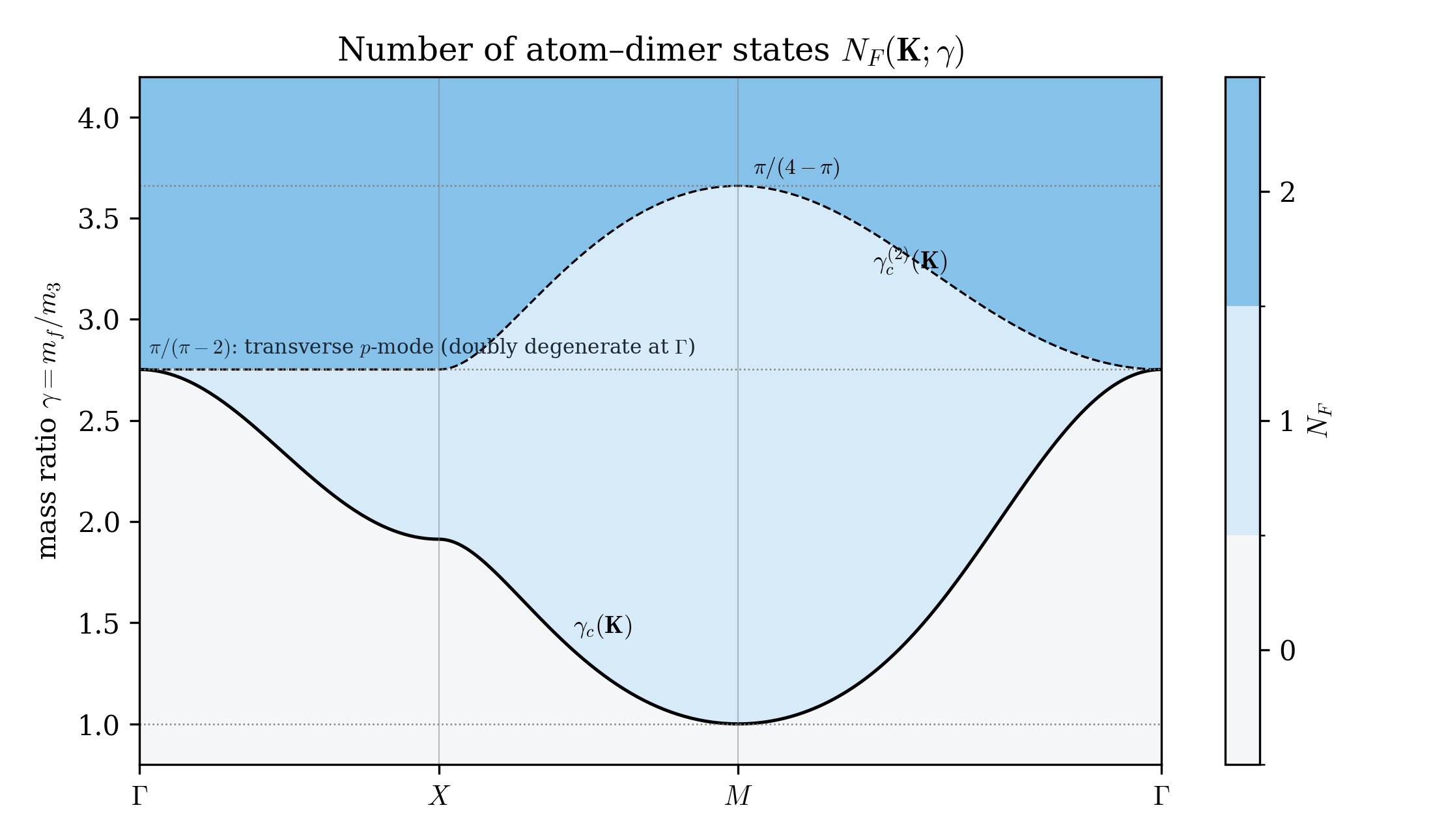}
\caption{Number $N_F(\bK;\gamma)$ of $2+1$ fermionic atom--dimer states
along $\Gamma$--$X$--$M$--$\Gamma$ as a function of the mass ratio
$\gamma=m_f/m_3$, from the counting criterion (Theorem~\ref{thm:count})
with $C=\frac\gamma2S_\bK$. Solid line: the critical mass ratio
$\gamma_c(\bK)$ of Theorem~\ref{thm:zoneF}; dashed line: the second
threshold $\gamma_c^{(2)}(\bK)$ of Corollary~\ref{cor:twostate}, flat
at $\pi/(\pi-2)$ on $\Gamma$--$X$ and maximal, $\pi/(4-\pi)$, at $M$.
$N_F\le2$ everywhere.}
\label{fig:fphase}
\end{figure}

\begin{figure}[t]\centering
\includegraphics[width=\textwidth]{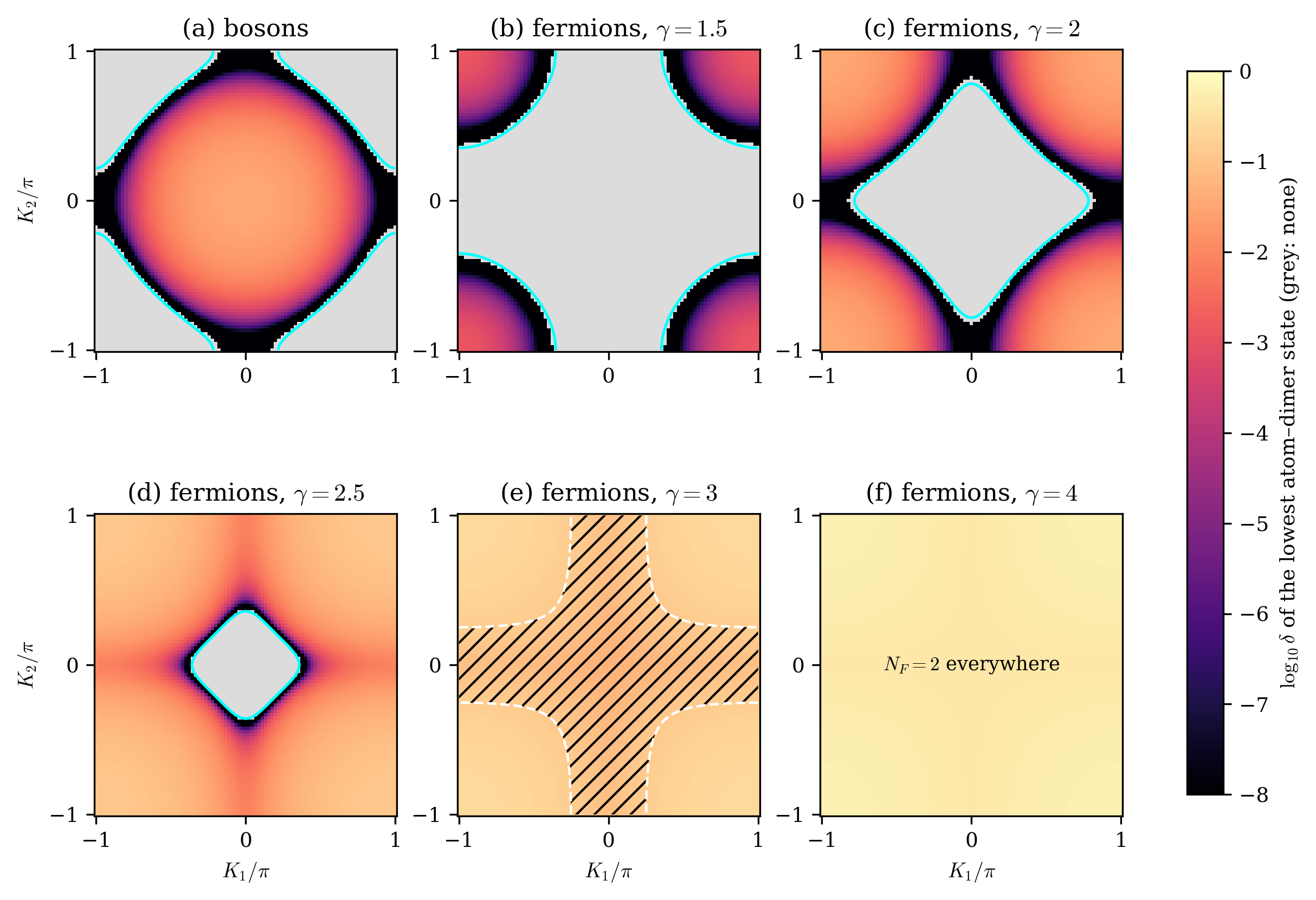}
\caption{Brillouin-zone maps of the binding energy of the lowest
atom--dimer state, $\log_{10}\delta$, computed in the infinite lattice
from the exact threshold matrix $M(\delta)$ of Lemma~\ref{lem:M}
($91\times91$ grid; grey: no state). \textbf{(a)} Bosons, $h_B$.
\textbf{(b)--(f)} $2+1$ fermions, $h_F$, at $\gamma=1.5,2,2.5,3,4$.
Cyan: exact zone boundary from $P_\bK$ ($P_\bK(1)=0$ for bosons,
$\gamma=\gamma_c(\bK)$ for fermions); white dashed and hatched: region
with a second fermionic state, $\gamma>\gamma_c^{(2)}(\bK)$. Panel (c)
is panel (a) translated by $\bpi$ up to a constant shift in energy, as required by Theorem~\ref{thm:dual} (the duality gives $h_F(\bK+\bpi;\gamma=2) = h_B(\bK)+2$, so binding energies coincide but thresholds differ by $2$). The black
band inside each boundary, where $\delta<10^{-8}$, is the essential
singularity of Theorem~\ref{thm:boundary}.}
\label{fig:zonemaps}
\end{figure}

\section{Critical behaviour at the threshold}\label{sec:critical}

The onset of binding near $\gamma=\gamma_c(\bK)$ has different
qualitative character in the $s$-wave and $p$-wave channels.

\begin{theorem}[Critical onset]\label{thm:crit}
(a) \emph{$s$-wave at $\bK=\bpi$.} As $\gamma\downarrow1$,
\begin{equation}\label{eq:onset-s}
\delta_F(\bpi,\gamma)=16\,\exp\!\Bigl(-\,\frac{\pi\gamma}{\gamma-1}\Bigr)
\bigl(1+O(\gamma-1)\bigr),
\end{equation}
i.e.\ the binding energy vanishes faster than any power of
$(\gamma-1)$: an onset of \emph{infinite order}.

(b) \emph{$p$-wave at $\bK=\mathbf 0$.} As $\delta\downarrow0$,
$A(\delta)=A-\tfrac{\delta}\pi\bigl(\ln\tfrac1\delta+\kappa_0\bigr)+o(\delta)$
with $\kappa_0=0.6310\ldots$, and as $\gamma\downarrow\gamma_c=\pi/(\pi-2)$
the (doubly degenerate) binding energy satisfies
\begin{equation}\label{eq:onset-p}
\delta_F\Bigl(\ln\frac1{\delta_F}+\kappa_0\Bigr)
=\frac{2\pi(\gamma-\gamma_c)}{\gamma\gamma_c}\bigl(1+o(1)\bigr),
\qquad\text{hence}\qquad
\delta_F=\frac{2\pi}{\gamma_c^2}\,
\frac{\gamma-\gamma_c}{\ln\bigl(1/(\gamma-\gamma_c)\bigr)}\bigl(1+o(1)\bigr),
\end{equation}
where $2\pi/\gamma_c^2=2(\pi-2)^2/\pi\approx0.8297$. The onset is linear up to a logarithm, the two-dimensional $p$-wave ($\ell=1$)
threshold law; in particular it is neither the square-root law ($\nu=\tfrac12$)
anticipated in~\cite{Companion} nor the quadratic law of Fig.~1 of~\cite{Companion}.
\end{theorem}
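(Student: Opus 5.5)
The plan is to work directly with the exact secular equation of Theorem~\ref{thm:count}, $\det\bigl(I+M(\delta)C\bigr)=0$ with $C=\tfrac\gamma2 S_\bK$, and to diagonalise it by symmetry at the two points in question. At $\bK=\mathbf0$ the operator $S_{\mathbf0}$ is the swap $\pm\be_j\leftrightarrow\mp\be_j$. It acts as $+1$ on $\mathbf1$ ($s$), as $-1$ on the $p$-vectors $(1,-1,0,0),(0,0,1,-1)$, and as $+1$ on $(1,1,-1,-1)$ ($d$). These are exactly the eigenvectors of $M(\delta)$ in Lemma~\ref{lem:M}. Since $S_{\bK+\bpi}=-S_\bK$, at $\bK=\bpi$ the signs are reversed. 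So in both cases $M(\delta)C$ is diagonal in the $s,p,d$ basis, and the secular equation splits into three scalar equations. Each left-hand side is strictly monotone in $\delta$ (Lemma~\ref{lem:M}), so each channel has at most one root.

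For (a), at $\bK=\bpi$ the $s$-channel eigenvalue of $M(\delta)C$ is $-\tfrac\gamma2 m_s(\delta)$. The $p$-channel eigenvalue is $+\tfrac\gamma2 A(\delta)>0$, which never binds. The $d$-channel eigenvalue is $-\tfrac\gamma2 m_d(\delta)$, bounded in modulus by $\tfrac\gamma2(8/\pi-2)<1$ for $\gamma$ near $1$. Only the $s$-channel can bind, and its condition $\tfrac\gamma2\bigl(2+\delta-1/b_0(\delta)\bigr)=1$ rearranges to
\[
\frac1{b_0(\delta)}=\frac{2(\gamma-1)}{\gamma}+\delta .
\]
Next I will insert $b_0(\delta)=\frac1{2\pi}\ln\frac{16}\delta+O(\delta\ln\frac1\delta)$, which gives $\ln\frac{16}{\delta}=\frac{\pi\gamma}{\gamma-1}\bigl(1+O(\delta\gamma/(\gamma-1))\bigr)+O(\delta\ln\frac1\delta)$. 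A first pass shows that $\delta$ is exponentially small in $1/(\gamma-1)$. Hence every correction term is $o((\gamma-1)^N)$ for all $N$, and exponentiating yields \eqref{eq:onset-s}. The $O(\gamma-1)$ error stated there is in fact generous. Existence and uniqueness of the root for $\gamma\downarrow1$ follow from the monotonicity of $m_s(\delta)$ together with $m_s(0^+)=2>2/\gamma$.

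For (b), at $\bK=\mathbf0$ the binding channel is $p$. Its eigenvalue is $-\tfrac\gamma2A(\delta)$, and it carries the twofold degeneracy. The $s$-channel eigenvalue $+\tfrac\gamma2 m_s$ is positive. The $d$-channel eigenvalue is $\tfrac\gamma2 m_d>0$. The binding condition is therefore $\tfrac\gamma2A(\delta)=1$, that is,
\[
A-A(\delta)=A-\frac2\gamma=\frac{2(\gamma-\gamma_c)}{\gamma\gamma_c},
\]
using $A=2/\gamma_c$.

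The key input is the small-$\delta$ expansion of $A(\delta)=b_0-g_2=\bigl\langle 2\sin^2p_1/(\eps(\bp)+\delta)\bigr\rangle$. I will write
\[
A-A(\delta)=\delta\Bigl\langle\frac{2\sin^2p_1}{\eps(\eps+\delta)}\Bigr\rangle .
\]
Near $\bp=0$ one has $2\sin^2p_1\approx2p_1^2$ and $\eps\approx|\bp|^2/2$, so after angular averaging the integrand behaves like $4/(|\bp|^2(|\bp|^2/2+\delta))$ with Haar weight $d\bp/4\pi^2$. This produces $\frac1\pi\ln\frac1\delta$ plus a finite constant. Subtracting this singular model integrand and letting $\delta\to0$ in the remainder by dominated convergence identifies $\kappa_0$ as an explicit convergent integral. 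Alternatively, $\kappa_0$ can be obtained from the closed forms of $b_0$ and $g_2$ in terms of complete elliptic integrals $\Ell(k)$, $E(k)$ with $k=2/(2+\delta)$, using $g_2$ from the recursion \eqref{eq:green-id} and its analogue. Finally I will invert $\delta(\ln\frac1\delta+\kappa_0)=\frac{2\pi(\gamma-\gamma_c)}{\gamma\gamma_c}(1+o(1))$. Setting $h=\gamma-\gamma_c$ gives $\ln\frac1\delta=\ln\frac1h(1+o(1))$, hence $\delta\simeq\frac{2\pi}{\gamma_c^2}\,h/\ln(1/h)$.

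The main obstacle is step (b): establishing the $\delta\ln\frac1\delta$ expansion of $A(\delta)$ rigorously with a controlled $o(\delta)$ remainder and the exact constant $\kappa_0$. Part (a) needs only the already stated asymptotics of $b_0$.
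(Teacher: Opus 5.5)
Your proposal is correct and follows the paper's proof essentially step by step. The steps match: the same $s,p,d$ diagonalisation at $\bK=\bpi$ and $\bK=\mathbf0$, the same scalar $s$-channel equation $1/b_0(\delta)=2(\gamma-1)/\gamma+\delta$ combined with Lemma~\ref{lem:S-b0}, and the same $p$-channel condition $A(\delta)=2/\gamma$ handled by model subtraction and dominated convergence. The only variation is that you expand the difference quotient $(A-A(\delta))/\delta=\langle 2\sin^2p_1/(\eps(\eps+\delta))\rangle$ directly, whereas the paper expands $-A'(\delta)$ and integrates; both give the same constant $\kappa_0=1+\pi c_A$.

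One slip needs fixing: the angular average of your model integrand is $2/(|\bp|^2/2+\delta)$, not $4/(|\bp|^2(|\bp|^2/2+\delta))$. You dropped the factor $p_1^2$, and as written the model is not integrable at the origin. With the correct model the integral over $|\bp|<r$ is exactly $\frac1\pi\ln\frac{r^2/2+\delta}{\delta}$, which gives the claimed $\frac1\pi\ln\frac1\delta$ plus a constant.
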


\begin{proof}[Proof sketch; details in Supplementary Material, \S\ref{sec:S23}]
(a) At $\bK=\bpi$ the $s$-vector $\mathbf1$ is an eigenvector of
$S_{\bpi}$ with eigenvalue $-1$ and of $M(\delta)$ with eigenvalue
$m_s(\delta)$ (Lemma~\ref{lem:M}), so the condition
$\tfrac\gamma2m_s(\delta)=1$ reads $1/b_0(\delta)=2(\gamma-1)/\gamma+\delta$.
With $b_0=\frac1{2\pi}\ln\frac{16}\delta+O(\delta\ln\frac1\delta)$ this gives
$\ln(16/\delta)=\pi\gamma/(\gamma-1)+O(\delta\ln^2\delta)$, whence
\eqref{eq:onset-s}; the relative error is exponentially small, and the
$O(\gamma-1)$ written covers the $\delta$ term.
(b) The $p$-vectors are eigenvectors of $S_{\mathbf0}$ with eigenvalue
$-1$ and of $M(\delta)$ with eigenvalue $A(\delta)$, so the condition is
$A(\delta)=2/\gamma$. Since
$A'(\delta)=-\langle2\sin^2p_1/(\eps+\delta)^2\rangle
=-\frac1\pi\ln\frac1\delta-c_A+o(1)$ with the lattice constant
$c_A=(\kappa_0-1)/\pi=-0.11747\ldots$ (\S\ref{sec:S42}),
integration gives the stated expansion of $A(\delta)$ with
$\kappa_0=1+\pi c_A=0.63100\ldots$, and
$A-2/\gamma=2(\gamma-\gamma_c)/(\gamma\gamma_c)$ gives \eqref{eq:onset-p}.
\end{proof}

Table~\ref{tab:crit} reports an independent numerical verification of
the two asymptotic laws \eqref{eq:onset-s} and \eqref{eq:onset-p}: on~the left the exact roots of $\tfrac\gamma2m_s(\delta)=1$ are compared
with $16\displaystyle\,e^{-\pi\gamma/(\gamma-1)}$, on the right the ratios
$\rho_{\rm impl}$ and $\rho_{\rm expl}$ quantify the convergence of the
implicit and explicit leading-log laws for the $p$-wave onset.

\begin{table}[h]\centering\small
\caption{Numerical check of Theorem~\ref{thm:crit}. Left: $s$-wave at
$\bK=\bpi$, exact root of $\tfrac\gamma2m_s(\delta)=1$ with $b_0$ in
closed elliptic form (40 digits). Right: $p$-wave at $\bK=\mathbf0$;
$\rho_{\rm impl}$ is the ratio of the two sides of the implicit law in
\eqref{eq:onset-p}, $\rho_{\rm expl}$ that of the explicit leading-log
law, which converges only as $1-O(\ln\ln\frac1h/\ln\frac1h)$,
$h=\gamma-\gamma_c$.}
\label{tab:crit}
\begin{tabular}{cccc}
\toprule
$\gamma$ & $\delta_F(\bpi,\gamma)$ & $16\,e^{-\pi\gamma/(\gamma-1)}$ & ratio \\
\midrule
1.50 & $1.31169\cdot10^{-3}$ & $1.29119\cdot10^{-3}$ & 1.015875 \\
1.30 & $1.95905\cdot10^{-5}$ & $1.95804\cdot10^{-5}$ & 1.000516 \\
1.20 & $1.04199\cdot10^{-7}$ & $1.04199\cdot10^{-7}$ & 1.0000054 \\
1.15 & $5.54500\cdot10^{-10}$ & $5.54500\cdot10^{-10}$ & 1.00000005 \\
\bottomrule
\end{tabular}
\quad
\begin{tabular}{cccc}
\toprule
$h$ & $\delta_F(\mathbf0,\gamma_c+h)$ & $\rho_{\rm impl}$ & $\rho_{\rm expl}$ \\
\midrule
$10^{-2}$ & $1.1120\cdot10^{-3}$ & 0.9998 & 0.617 \\
$10^{-3}$ & $8.2674\cdot10^{-5}$ & 1.0000 & 0.688 \\
$10^{-4}$ & $6.6061\cdot10^{-6}$ & 1.0000 & 0.733 \\
$10^{-5}$ & $5.5158\cdot10^{-7}$ & 1.0000 & 0.765 \\
\bottomrule
\end{tabular}
\end{table}

The mechanism behind (a) is general. Put $\xi(\bK):=x_{\max}(\bK)$ for
bosons and $\xi(\bK):=\tfrac\gamma2|x_{\min}(\bK)|$ for fermions, so
that the zone is $\{\xi>1\}$, and let $v_\bK$ be the corresponding
eigenvector of $M_0S_\bK$ (resp.\ $M_0(-S_\bK)$). Define
\[
c(\bK):=\xi(\bK)\,
\frac{|\langle\mathbf1,M_0^{-1}v_\bK\rangle|^2}{\langle v_\bK,M_0^{-1}v_\bK\rangle}\ \ge0 .
\]

\begin{theorem}[Essential singularity at the zone boundary]\label{thm:boundary}
Let $\bK_0$ lie on the zone boundary, $\xi(\bK_0)=1$, with
$c(\bK_0)>0$ and $\xi(\bK_0)$ a simple eigenvalue. Then, as
$\bK\to\bK_0$ from inside the zone,
\[
\ln\delta(\bK)=-\frac{\pi\,c(\bK)}{2\,(\xi(\bK)-1)}+O(1).
\]
In particular $\delta$ vanishes faster than any power of the distance
to the boundary. If $c(\bK_0)=0$ (threshold mode orthogonal to the $s$-wave,
as for the $p$-modes at $\bK=\mathbf0$) the onset is instead governed by the regular $O(\delta\ln\frac1\delta)$ ($p$) or $O(\delta)$ ($d$) terms, as in Theorem~\ref{thm:crit}(b).
\end{theorem}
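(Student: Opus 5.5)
The plan is to reduce the bound-state condition to a scalar equation by first-order perturbation in the small quantity $\epsilon:=1/(4b_0(\delta))$. Write $M(\delta)=M_0-\epsilon\,\mathbf1\mathbf1^T+R(\delta)$ with $\|R(\delta)\|=O(\delta\ln\frac1\delta)$ (Lemma~\ref{lem:M}). By Theorem~\ref{thm:count}, a binding energy $\delta$ is a root of $\det(I+M(\delta)C)=0$. Since $M(\delta)>0$ for small $\delta$, this is equivalent to asking that the largest eigenvalue $\lambda(\delta)$ of the Hermitian matrix $M(\delta)^{1/2}(-C)M(\delta)^{1/2}$, i.e.\ the eigenvalue of $M(\delta)(-C)$ continuing $\xi$, equal $1$. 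Here $-C=S_\bK$ for bosons and $-C=-\tfrac\gamma2 S_\bK$ for fermions; in the fermionic case the normalisation of $\xi$ absorbs the factor $\tfrac\gamma2$. The simplicity of $\xi(\bK_0)$ and continuity give an analytic branch $\lambda(\bK,\epsilon,R)$ near $(\bK_0,0,0)$.

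The first step is to compute $\partial\lambda/\partial\epsilon$ at $\epsilon=0$. It is cleaner to work with the generalized eigenproblem $(-C)w=\lambda^{-1}M^{-1}w$. Set $W=-C$ and $w=M_0^{-1}v_\bK$ in the unperturbed problem, so that $M_0Ww=\xi w$. Perturb $M_0\to M_0-\epsilon\mathbf1\mathbf1^T$ and use the left eigenvector $M_0^{-1}v_\bK$ of $M_0W$. Left and right eigenvectors of $M_0W$ are related by $W$ and $M_0^{-1}$, which is why the normalising inner product becomes $\langle v,M_0^{-1}v\rangle$. Hellmann--Feynman applied to the symmetrized form $M^{1/2}WM^{1/2}$ then yields
\[
\lambda=\xi(\bK)-\epsilon\,c(\bK)+O(\epsilon^2)+O(\delta\ln\tfrac1\delta),
\]
with $c$ exactly as defined before the theorem. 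The factor $\xi$ in $c$ arises from $Wv=\xi M_0^{-1}v$. I would verify this identity on the known case $\bK=\bpi$: there $v=\mathbf1$, $c=\xi\cdot4/2=2\xi$, and the result reproduces $m_s(\delta)=2-1/b_0+\delta$.

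The second step substitutes this expansion into $\lambda=1$, giving $\xi(\bK)-1=c(\bK)/(4b_0(\delta))+O(b_0^{-2})+O(\delta\ln\frac1\delta)$. Then use $b_0=\tfrac1{2\pi}\ln\tfrac{16}\delta+O(\delta\ln\frac1\delta)$. This gives $\ln\tfrac{16}\delta=\pi c/(2(\xi-1))\,(1+O(\xi-1))$, so that $\ln\delta=-\pi c/(2(\xi-1))+O(1)$, since $c(\bK)/(\xi-1)\cdot O(\xi-1)=O(1)$. Existence and uniqueness of the small root follow from monotonicity of $M(\delta)$ (Lemma~\ref{lem:M}) combined with the intermediate value theorem. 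At $\delta\to0$ we have $\lambda\to\xi>1$ inside the zone, while at moderate $\delta$ we have $\lambda<1$, because the correction $-\epsilon c$ with $c(\bK_0)>0$ dominates once $\epsilon\gg\xi-1$.

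The main obstacle is uniformity of the $O(\epsilon^2)$ remainder. This requires the spectral gap around $\xi$ to stay bounded away from zero as $\bK\to\bK_0$, which holds by the simplicity hypothesis and continuity. It also requires $c(\bK)$ to stay bounded below, which holds by $c(\bK_0)>0$ and continuity. Without the latter, the $O(\epsilon^2)$ term could compete with $\epsilon c$.

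For the case $c(\bK_0)=0$, the $s$-projection of the threshold mode vanishes at $\bK_0$, so the logarithmic term drops out at first order. The leading $\delta$-dependence then comes from the regular terms of $M(\delta)-M_0$: $A(\delta)-A$ for the $p$ channel and an $O(\delta)$ term for the $d$ channel. One then solves as in Theorem~\ref{thm:crit}(b). I would state this case only at that level of detail, referring to that proof.
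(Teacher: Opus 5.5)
Your proposal is correct and follows essentially the paper's argument: you split off the marginal rank-one term $-\epsilon\mathbf1\mathbf1^T$ with $\epsilon=1/(4b_0(\delta))$, apply Hellmann--Feynman to the simple eigenvalue of the Hermitian definite pencil to get $\partial_\epsilon\xi|_{\epsilon=0}=-c(\bK)$, solve $\epsilon^*=(\xi-1)/c+O((\xi-1)^2)$, and convert via $\ln(16/\delta)=2\pi b_0+o(1)$. The only thing to fix is notational: the pencil is $(-C)w=\lambda M^{-1}w$, not $\lambda^{-1}M^{-1}w$, and $w=M_0^{-1}v_\bK$ satisfies $WM_0w=\xi w$ rather than $M_0Ww=\xi w$, i.e.\ it is the left eigenvector of $M_0W$, which is exactly how you then (correctly) use it.
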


\begin{proof}[Proof sketch; details in Supplementary Material, \S\ref{sec:S24}]
By \eqref{eq:Mlimit}, up to $O(\delta\ln\frac1\delta)$ the binding
condition is that $M_\epsilon:=M_0-\epsilon\mathbf1\mathbf1^T$, with
$\epsilon=1/(4b_0)=\pi/(2\ln(16/\delta))$, has $\xi_\epsilon=1$. Writing
$M_\epsilon Sv=\xi v$ as the Hermitian pencil $Sv=\xi M_\epsilon^{-1}v$
($M_\epsilon>0$), the Hellmann--Feynman formula gives
$\partial_\epsilon\xi|_{\epsilon=0}=-c(\bK)$. Hence the root is
$\epsilon^*=(\xi-1)/c+O((\xi-1)^2)$ and
$\ln(16/\delta)=\pi/(2\epsilon^*)=\pi c/(2(\xi-1))+O(1)$.
\end{proof}

\begin{remark}\label{rem:boundary-numbers}
At $\bK=\bpi$ for fermions $v=\mathbf1$ and
Theorem~\ref{thm:boundary} reduces to \eqref{eq:onset-s}. For bosons at
$\bK=(\pi,0)$: $\xi=1.045446$, $c=1.0454$, and the exact root (with $b_0$
in elliptic form) is $\ln(16/\delta)=36.937$, i.e.\
$\delta_B(\pi,0)=1.4535\cdot10^{-15}$; the leading term
$\pi c/(2(\xi-1))=36.13$ is within the stated $O(1)$. At~$\bK=(\tfrac\pi2,\tfrac\pi2)$: $\xi=1.20562$ and $\delta_B=6.5049\cdot10^{-4}$.
\end{remark}

\begin{remark}[Renormalization-group interpretation]\label{rem:RG}
The threshold asymptotics admit a renorma\-li\-za\-tion-group interpretation, which we record
because it identifies the mechanism behind the essential singularity and places it in a
wider methodological context.

\emph{(a) The reduction is an exact decimation step.} The Schur
complement $F(z)$ of Theorem~\ref{thm:feshbach} is the elementary step
of the operator-theoretic RG of Bach, Fr\"ohlich and Sigal \cite{BFS}:
the modes in $\Ran\Pi_0\oplus\Ran\Pi_3$, at spectral distance at least
$\mu-10$, are integrated out exactly, in the spirit of Wilson's
decimation \cite{WilsonKogut} and of the Schrieffer--Wolff
transformation \cite{SchriefferWolff}. 

\emph{(b) The $s$-wave coupling is marginal; binding is dimensional
transmutation.} Put $\ell=\ln(16/\delta)$ and $g(\ell)=1/b_0(\delta)$.
By Lemma~\ref{lem:M}, $g=2\pi/\ell+O(\delta\ln\frac1\delta)$, hence
\[
\beta(g):=\frac{dg}{d\ell}=-\frac{g^2}{2\pi}+O(\delta\ln\tfrac1\delta),
\]
the one-loop flow of a marginal coupling, as for a contact interaction
in two dimensions, where scale invariance is broken by a quantum
anomaly \cite{PitaevskiiRosch,Olshanii2010}. The binding condition of
Theorem~\ref{thm:crit}(a), $\frac\gamma2m_s(\delta)=1$, reads
$g(\ell)=g_*+O(\delta)$ with the bare coupling
$g_*=2(\gamma-1)/\gamma$: the binding energy is the scale at which the
running coupling reaches its bare value, and
\[
\delta=16\,e^{-2\pi/g_*}\bigl(1+o(1)\bigr),
\]
which is \eqref{eq:onset-s}. This is dimensional transmutation of the
same type as the weak-coupling bound state in two dimensions
\cite{Simon1976}, the BCS gap and the Kondo scale; the number $16$ is
the lattice scale parameter (the analogue of $\Lambda$ in a given
scheme), fixed by $\Z^2$ rather than by a scattering length. At~a
general boundary point Theorem~\ref{thm:boundary} states the same with
$g_*=4(\xi(\bK)-1)/c(\bK)+O((\xi-1)^2)$.

\emph{(c) Channels are classified by relevance.} In the $p$- and
$d$-channels the threshold matrix elements have finite limits with corrections $O(\delta\ln\frac1\delta)$ ($p$) and
$O(\delta)$ ($d$); no marginal coupling is present, and the onset is linear up to
a logarithm in the $p$-channel (Theorem~\ref{thm:crit}(b)) and linear in the
$d$-channel (e.g.\ the second fermionic state at $\bpi$ as $\gamma\downarrow\pi/(4-\pi)$).
 The constant $c(\bK)$ of
Theorem~\ref{thm:boundary} is the weight of the unique marginal
direction $\mathbf 1$ in the threshold mode. Thus the onset law is
decided by one structural datum, the overlap with the marginal
operator, and not by the details of the dispersion.

\emph{(d) The zone boundary is not of BKT type.} Kaplan, Lee, Son and
Stephanov \cite{KLSS2009} showed that loss of conformality by
annihilation of an ultraviolet and an infrared fixed point produces
the Berezinskii--Kosterlitz--Thouless scaling
$\ln(\Lambda_{\rm IR}/\Lambda_{\rm UV})\sim-c/\sqrt{\alpha-\alpha_*}$;
the onset of the Efimov effect at $M/m=13.607$ \cite{Efimov1973,Petrov2003},
where two real scaling exponents merge and become complex, is of this
kind. Here, by contrast, $\ln\delta\sim-c/(\xi-1)$ with a
\emph{linear} denominator: a single marginal coupling whose bare value
passes through zero, with no fixed-point merger. The two classes are
distinguished directly by the data of Table~\ref{tab:crit} and
\autoref{fig:fonset}(a), where $\ln\delta_F$ is linear in
$1/(\gamma-1)$ over more than fifteen decades of $\delta_F$.
\end{remark}

Both onset laws are displayed in \autoref{fig:fonset} over many decades.

\begin{figure}[t]\centering
\includegraphics[width=\textwidth]{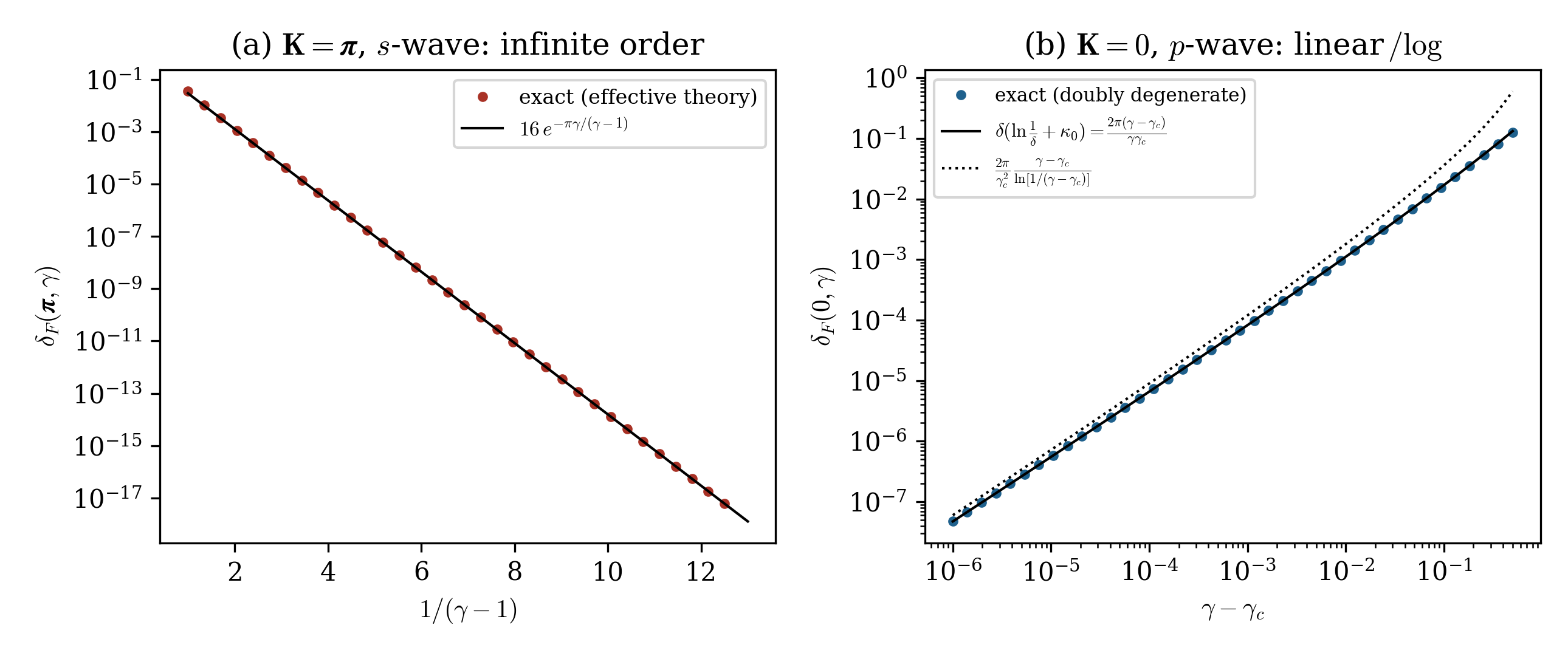}
\caption{Critical onset of the $2+1$ fermionic atom--dimer state
(Theorem~\ref{thm:crit}); dots are exact infinite-lattice roots of the
counting criterion. \textbf{(a)} $\bK=\bpi$, $s$-wave: against
$1/(\gamma-1)$ the data fall on the straight line
$16\,e^{-\pi\gamma/(\gamma-1)}$ of \eqref{eq:onset-s} down to
$\delta\sim10^{-18}$, an onset of infinite order. \textbf{(b)}
$\bK=\mathbf0$, $p$-wave (doubly degenerate): the~implicit law
\eqref{eq:onset-p} (solid) is exact to plotting accuracy over six
decades of $\gamma-\gamma_c$, while its explicit leading-log form
(dotted) converges only logarithmically (cf.\ Table~\ref{tab:crit}).}
\label{fig:fonset}
\end{figure}

\section{Duality between statistics and quasimomentum}\label{sec:duality}

The exchange operator $S_\bK$ is a Hermitian involution on the
first-shell space $\ell^2(\{\pm\be_1,\pm\be_2\})$; the~shift
$\bK\mapsto\bK+\bpi$ reverses its sign.

\begin{theorem}[Effective statistics-quasimomentum duality]\label{thm:dual}
For $\gamma=2$ the fermionic atom--dimer Hamiltonian at quasimomentum
$\bK+\bpi$ coincides with the bosonic one at $\bK$, up to a constant:
\[
h_F(\bK+\bpi;\gamma{=}2)-(2+2\gamma)
=h_B(\bK)-4=\eps_D-S_\bK.
\]
Consequently the binding energies satisfy
$\delta_F(\bK+\bpi;\gamma{=}2)=\delta_B(\bK)$.
\end{theorem}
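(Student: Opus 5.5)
The identity is algebraic and follows directly from the explicit forms in Lemma~\ref{lem:heff}, using the sign rule $S_{\bK+\bpi}=-S_\bK$ stated there. First I will verify that sign rule from the definition. For $\bx\in\{\pm\be_1,\pm\be_2\}$ one has $e^{i(\bK+\bpi)\cdot\bx}=e^{i\bK\cdot\bx}e^{i\bpi\cdot\bx}$, and $\bpi\cdot(\pm\be_j)=\pm\pi$. Hence the phase factor is $-1$ on every first-shell site. Off the shell both operators vanish, so $S_{\bK+\bpi}=-S_\bK$ holds exactly as operators on $\ell^2(\Z^2\setminus\{0\})$.

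Next I substitute $\gamma=2$ into $h_F(\bK';\gamma)=2+2\gamma+\eps_D+\tfrac\gamma2 S_{\bK'}$ with $\bK'=\bK+\bpi$. This gives
\[
h_F(\bK+\bpi;2)-(2+2\cdot2)=\eps_D+S_{\bK+\bpi}=\eps_D-S_\bK .
\]
Since $h_B(\bK)=4+\eps_D-S_\bK$, the right-hand side equals $h_B(\bK)-4$. Both operators act on the same space $\ell^2(\Z^2\setminus\{0\})$ with the same Dirichlet Laplacian, so this is an equality of operators, not merely a unitary equivalence. In particular $h_F(\bK+\bpi;2)=h_B(\bK)+2$, because $2+2\gamma=6$ at $\gamma=2$.

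For the consequence, the two operators differ by the constant $2$. Their spectra therefore coincide after a shift by $2$, and this applies to the essential spectra as well. The fermionic threshold at $\gamma=2$ is $2+2\gamma=6$, the bosonic threshold is $4$, and $6=4+2$, consistent with the shift. Binding energies are measured from the respective thresholds, so the eigenvalue $4-\delta$ of $h_B(\bK)$ corresponds to the eigenvalue $6-\delta$ of $h_F(\bK+\bpi;2)$, with the same multiplicities. Hence $\delta_F(\bK+\bpi;2)=\delta_B(\bK)$.

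As a cross-check, the counting criterion of Theorem~\ref{thm:count} gives the same result. With $\gamma=2$, the fermionic choice $C=\tfrac\gamma2 S_{\bK+\bpi}=-S_\bK$ is exactly the bosonic $C$. Consequently $\det(I+M(\delta)C)=0$ has identical roots in the two cases.

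There is no real obstacle. The only point needing care is the phase convention in $S_\bK$: the fermionic sign $+\tfrac\gamma2$ in Lemma~\ref{lem:heff} comes from antisymmetry, and the sign flip under the shift by $\bpi$ must hold on all four shell sites simultaneously. Both are settled by the one-line computation above.
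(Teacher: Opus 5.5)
Your proposal is correct and follows the paper's own argument: the sign rule $S_{\bK+\bpi}=-S_\bK$ (from $e^{i\bpi\cdot\be}=-1$ on the shell), substituted into $h_F$ at $\gamma=2$, gives $\eps_D-S_\bK=h_B(\bK)-4$. Your extra remarks on the matching thresholds $6=4+2$ and on the identical counting matrix $C=-S_\bK$ simply spell out the ``consequently'' clause, which the paper leaves implicit.
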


\begin{proof}
$S_{\bK+\bpi}=-S_\bK$; the fermionic Hamiltonian at $\gamma=2$ is
$2+2\gamma+\eps_D+\tfrac{\gamma}{2}S_\bK$; substituting $\bK\mapsto
\bK+\bpi$ and $\gamma=2$ gives $\eps_D-S_\bK$.
\end{proof}

\begin{remark}[Physical interpretation]
The sign reversal $S_{\bK+\bpi}=-S_\bK$ is a lattice counterpart of
the phase factor $e^{i\bpi\cdot\be}=-1$ acquired when the quasimomentum
is shifted by half a reciprocal-lattice vector. The value $\gamma=2$
is the one at which the fermionic exchange amplitude $\gamma/2$ equals
in modulus the bosonic one, $1$ (two exchange processes with
amplitude $\frac12$ each); the remaining difference of sign is exactly
the factor $e^{i\bpi\cdot\be}=-1$. Thus bosonic and fermionic
atom--dimer states have identical binding energies at quasimomenta
differing by $\bpi$.
\end{remark}

\autoref{fig:fpath} shows the fermionic binding energies along the
high-symmetry path for several mass ratios. At $\gamma=2$ the
fermionic curve coincides point by point with the bosonic curve
translated by $\bpi$, as Theorem~\ref{thm:dual} requires; for $\gamma=3$
and $\gamma=4$ the second state of Corollary~\ref{cor:twostate}
(dashed) is visible, degenerate with the first at $\Gamma$ for
$\gamma=3$ ($p$-wave pair, $\delta=0.0519$) and separated at $M$ for
$\gamma=4$ ($\delta=0.5868$ and $0.1473$, Table~\ref{tab:effective}).

\begin{figure}[t]\centering
\includegraphics[width=0.94\textwidth]{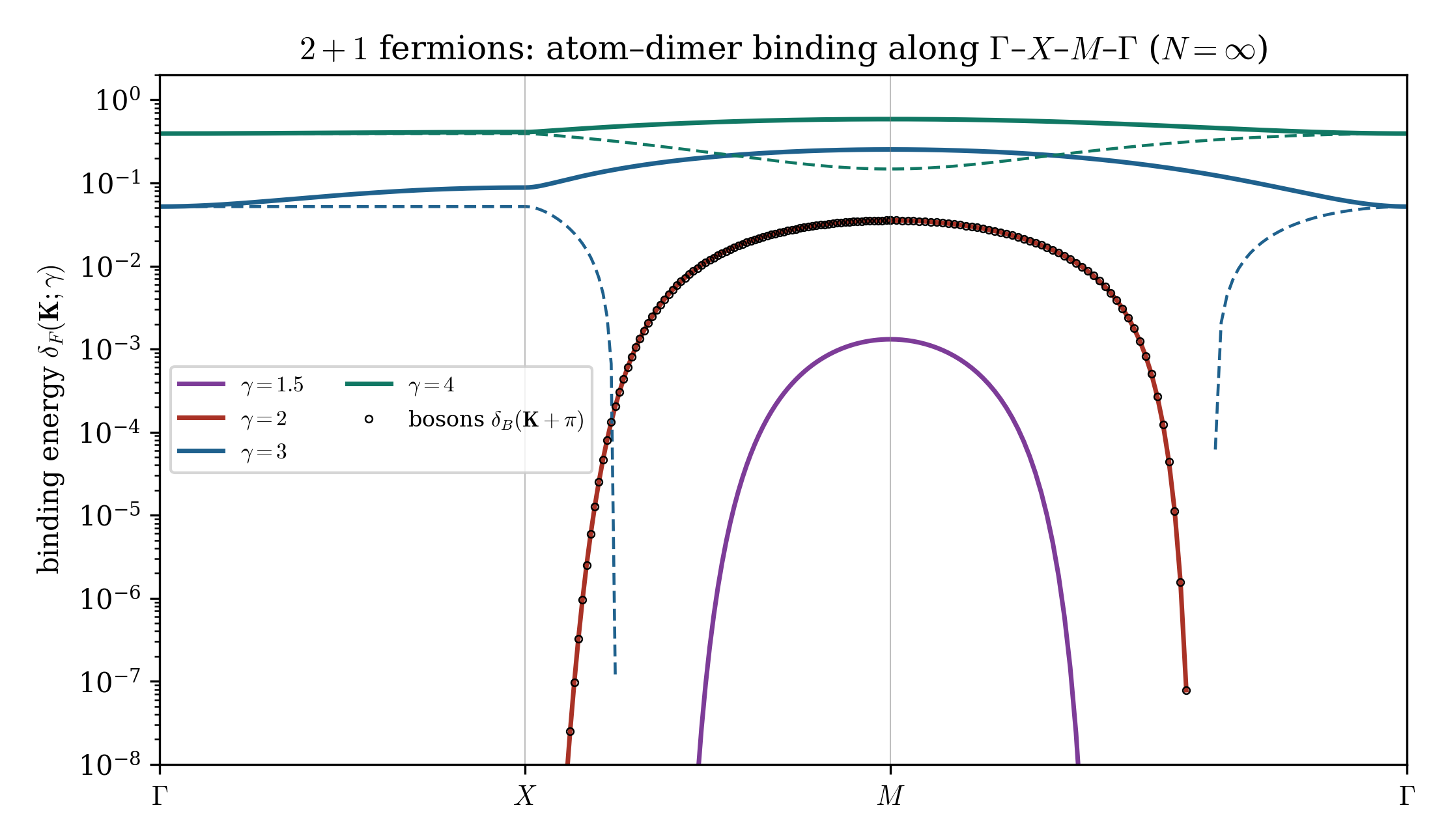}
\caption{Binding energies $\delta_F(\bK;\gamma)$ of the $2+1$ fermionic
atom--dimer states along $\Gamma$--$X$--$M$--$\Gamma$ in the infinite
lattice, for $\gamma=1.5,2,3,4$ (solid: lowest state; dashed: second
state). Open circles: bosonic $\delta_B(\bK+\bpi)$, which coincides with
the $\gamma=2$ curve (Theorem~\ref{thm:dual}). The steep descent at the
zone edges is the essential singularity of Theorem~\ref{thm:boundary}.
The values at $\Gamma$ and $M$ reproduce Table~\ref{tab:effective}.}
\label{fig:fpath}
\end{figure}

\begin{mainprop}[Mixed statistics: two bosons and one distinguishable particle]\label{prop:mixed}
Let particles $1,2$ be identical bosons with hopping $1$ and particle
$3$ be distinguishable with hopping $\gamma$, and let only the two
interspecies pairs interact. Then the first-order effective operator is
\[
h_{2+1B}(\bK;\gamma)=2+2\gamma+\eps_D-\tfrac\gamma2S_\bK
=h_F(\bK+\bpi;\gamma),
\]
so that for every $\gamma$ and every $\bK$
\[
\delta_{2+1B}(\bK;\gamma)=\delta_F(\bK+\bpi;\gamma),\qquad
\gamma_c^{2+1B}(\bK)=\frac{2}{x_{\max}(\bK)}=\gamma_c(\bK+\bpi).
\]
In particular $\gamma_c^{2+1B}(\mathbf0)=1$ and
$\gamma_c^{2+1B}(\bpi)=\pi/(\pi-2)$: the $2+1$ bosonic mixture binds
most easily at $\bK=\mathbf0$, exactly where the $2+1$ fermionic system
binds with the greatest difficulty.
\end{mainprop}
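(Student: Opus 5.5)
The plan is to repeat the position-space bookkeeping from the proof of Lemma~\ref{lem:heff}, now on the symmetric subspace $L^2_s$ with $V=V_1+V_2$ and dispersion $E_\gamma$. In relative coordinates $\bx=\mathbf r_1-\mathbf r_3$, $\by=\mathbf r_2-\mathbf r_3$, the hopping operator is
\[
(E_\gamma\psi)(\bx,\by)=(4+2\gamma)\psi-\tfrac12\sum_{|\be|=1}\bigl[\psi(\bx+\be,\by)+\psi(\bx,\by+\be)\bigr]-\tfrac\gamma2\sum_{|\be|=1}e^{-i\bK\cdot\be}\psi(\bx+\be,\by+\be).
\]
Since only $V_1$ and $V_2$ are present, $\nu=[\by=0]+[\bx=0]$. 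Hence $\Ran\Pi_1$ is spanned by functions supported on the two branches $\{\by=0\neq\bx\}$ and $\{\bx=0\neq\by\}$. The point $(0,0)$ has $\nu=2$ (the $-2\mu$ cluster) and is removed by $\Pi_1$, and the diagonal $\bx=\by\ne0$ has $\nu=0$. Bosonic symmetry $\psi(\bx,\by)=\psi(\by,\bx)$ gives $\psi(0,\by)=\varphi(\by)$ with $\varphi(\bx):=\psi(\bx,0)$. The map $\varphi\mapsto\psi/\sqrt2$ is therefore unitary from $\ell^2(\Z^2\setminus\{0\})$ onto $\Ran\Pi_1\cap L^2_s$, matching Lemma~\ref{lem:Pi1}.

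Next I evaluate $(\Pi_1E_\gamma\psi)(\bx,0)$ for $\bx\neq0$.
\begin{enumerate}[label=(\roman*)]
\item The diagonal contributes $(4+2\gamma)\varphi(\bx)$.
\item Hops of particle~1 contribute $-\tfrac12\sum_{\bx+\be\neq0}\varphi(\bx+\be)$, since the target $\bx+\be=0$ is the doubly occupied site and is projected out.
\item Hops of particle~2 move to $(\bx,\be)$. This lies in $\Ran\Pi_1$ only if $\bx=0$, which is excluded, so there is no contribution. This is where the mixture differs from the $3B$ case, which has the extra branch $\bx=\by$.
\item Hops of particle~3 move to $(\bx+\be,\be)$. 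This lies in $\Ran\Pi_1$ only if $\bx+\be=0$, i.e.\ $\bx=-\be$ is on the first shell. The value there is $\psi(0,-\bx)=\varphi(-\bx)$, with amplitude $-\tfrac\gamma2$ and phase $e^{-i\bK\cdot\be}=e^{i\bK\cdot\bx}$.
\end{enumerate}
Collecting terms gives $2+2\gamma+\eps_D-\tfrac\gamma2S_\bK$, using $4+2\gamma-\tfrac12\sum=2+2\gamma+\eps_D$. Comparing with the fermionic computation, the only difference is the sign from $\psi(0,\by)=\pm\varphi(\by)$. I will fix the phase convention to agree with the one used for $h_F$ and $\tau_{13}$ in \S\ref{sec:S1}, so that the two results differ exactly by the sign of the exchange term.

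The identity $h_{2+1B}(\bK;\gamma)=h_F(\bK+\bpi;\gamma)$ then follows immediately from $S_{\bK+\bpi}=-S_\bK$ (Lemma~\ref{lem:heff}), for every $\gamma$. Equal operators have equal spectra and thresholds $2+2\gamma$, so $\delta_{2+1B}(\bK;\gamma)=\delta_F(\bK+\bpi;\gamma)$. The same Feshbach estimate, Theorem~\ref{thm:feshbach} with $\eta_F$, covers both systems, since the window is unchanged.

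For the critical ratio I apply Theorem~\ref{thm:count} with $C=-\tfrac\gamma2S_\bK$. Binding occurs iff some eigenvalue $x$ of $M_0S_\bK$ satisfies $-\tfrac\gamma2x<-1$, i.e.\ $x>2/\gamma$, so $\gamma_c^{2+1B}=2/x_{\max}(\bK)$. By $P_{\bK+\bpi}(x)=P_\bK(-x)$ (Lemma~\ref{lem:char}) we have $x_{\max}(\bK)=|x_{\min}(\bK+\bpi)|$, which equals $\gamma_c(\bK+\bpi)$ by Theorem~\ref{thm:zoneF}. The special values follow from that theorem: $\gamma_c^{2+1B}(\mathbf0)=\gamma_c(\bpi)=1$, consistent with $x_{\max}=2$ at $\bK=\mathbf0$, and $\gamma_c^{2+1B}(\bpi)=\gamma_c(\mathbf0)=\pi/(\pi-2)$.

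The main obstacle is the sign and phase bookkeeping in step (iv) and in the symmetry relation. I must verify that the phase $e^{i\bK\cdot\bx}$ and the sign come out exactly opposite to the fermionic case, not merely opposite up to a unitary conjugation of the shell. I will check this against the explicit Fourier convention of \S\ref{sec:S1}. If a residual diagonal phase appears on the shell, it can be removed by the conjugation $U$ used in the proof of Lemma~\ref{lem:char}, which preserves spectra.
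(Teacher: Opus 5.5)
Your proposal is correct and follows the paper's proof in \S\ref{sec:S8} essentially step by step: the same two-branch parametrisation with $\psi(\mathbf0,\by)=+\varphi(\by)$, the same bookkeeping in which only the particle-$3$ exchange survives with coefficient $-\frac\gamma2$, then $S_{\bK+\bpi}=-S_\bK$ and the counting criterion with $C=-\frac\gamma2S_\bK$. Two small points. The phase worry you flag is moot, because $\tau_{13}$ plays no role when particle $3$ is distinguishable, and your step (iv) already gives exactly $-\frac\gamma2 S_\bK$ with no residual conjugation. In the last paragraph it is $2/x_{\max}(\bK)=2/|x_{\min}(\bK+\bpi)|$, not $x_{\max}(\bK)$ itself, that equals $\gamma_c(\bK+\bpi)$.
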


\begin{proof}[Proof sketch; details in \S\ref{sec:S8}]
As in the fermionic case of Lemma~\ref{lem:heff}, only the two
branches $\mathcal B_a$, $\mathcal B_b$ occur (the pair $\{1,2\}$ does
not interact), the diagonal is $4+2\gamma$, particle~$1$ hops give
$\eps_D-2$, and the single particle-$3$ exchange contributes
\[
-\tfrac\gamma2\,e^{i\bK\cdot\bx}\psi(\mathbf0,-\bx),\qquad
\bx\in\{\pm\be_1,\pm\be_2\}.
\]
Bose symmetry gives
$\psi(\mathbf0,\by)=+\varphi(\by)$ instead of $-\varphi(\by)$, which
flips the sign of the exchange term relative to $h_F$. The identity
$S_{\bK+\bpi}=-S_\bK$ then gives $h_{2+1B}(\bK)=h_F(\bK+\bpi)$.
\end{proof}

\begin{table}[t]\centering\footnotesize
\caption{The three statistics cases of the $\Z^2$ three-body problem at
strong coupling. Everywhere $\eps_D$ is the Dirichlet Laplacian on
$\Z^2\setminus\{0\}$, $S_\bK$ the rank-four exchange operator,
$x_{\max/\min}(\bK)$ the extreme roots of $P_\bK$ and
$\delta$ the binding below the atom--dimer threshold.}
\label{tab:statistics}
\setlength{\tabcolsep}{4pt}
\begin{tabular}{L{2.6cm}L{3.1cm}L{3.3cm}L{3.3cm}}
\toprule
 & Three identical bosons & $2+1$ fermions & $2+1$ bosons (mixture) \\
\midrule
Branches in $\Ran\Pi_1$ & three ($\mathcal B_a,\mathcal B_b,\mathcal B_c$) & two & two \\
\hline
Deep trimer channel & $\nu=3$, $\approx-3\mu$ & absent (Pauli) & $\nu=2$ (triple occupancy), $\approx-2\mu$ \\
\hline
Effective operator &
$4+\eps_D-S_\bK$ &
$2+2\gamma+\eps_D+\frac\gamma2S_\bK$ &
$2+2\gamma+\eps_D-\frac\gamma2S_\bK$ \\
\hline
Exchange amplitude & $-1$ (two processes) & $+\gamma/2$ & $-\gamma/2$ \\
\hline
Zone criterion & $x_{\max}(\bK)>1$ & $\frac\gamma2|x_{\min}(\bK)|>1$ & $\frac\gamma2x_{\max}(\bK)>1$ \\
\hline
Critical ratio & --- (no free parameter) & $\gamma_c=2/|x_{\min}|\in[1,\frac{\pi}{\pi-2}]$ & $2/x_{\max}\in[1,\frac{\pi}{\pi-2}]$ \\
\hline
Easiest $\bK$ & $\mathbf0$ ($\delta_B=0.03542$) & $\bpi$ ($\gamma_c=1$) & $\mathbf0$ ($\gamma_c=1$) \\
\hline
Hardest $\bK$ & $\bpi$ (no state) & $\mathbf0$ ($\gamma_c=\frac{\pi}{\pi-2}$) & $\bpi$ ($\gamma_c=\frac{\pi}{\pi-2}$) \\
\hline
Max.\ number of states & $1$ & $2$ (Corollary~\ref{cor:twostate}) & $2$ \\
\hline
Onset at the boundary & essential singularity & essential singularity ($s$), linear$/\log$ ($p$ at $\mathbf0$) & essential singularity ($s$), linear$/\log$ ($p$ at $\bpi$) \\
\hline
Map & \autoref{fig:zonemaps}(a) & \autoref{fig:zonemaps}(b)--(f) & (b)--(f) shifted by $\bpi$; at $\gamma=2$ equal to (a) \\
\bottomrule
\end{tabular}
\end{table}

\autoref{tab:statistics} collects the three statistics cases side by
side. The pattern is uniform: the statistics enters the effective
theory \emph{only} through the sign and the magnitude of the exchange
amplitude in front of $S_\bK$, and any change of that sign is
equivalent to the shift $\bK\to\bK+\bpi$. This is why the bosonic and
the mixed problems, which are physically distinct, share one and the
same zone geometry up to a translation of the Brillouin zone
(\autoref{fig:zonemaps}).

\section{Effect of external couplings}\label{sec:external}

The effective Hamiltonians of Lemma~\ref{lem:heff} can be generalised
to include nearest-neighbour attractions, three-body forces, and
periodic (Floquet) modulation.

\subsection{Nearest-neighbour attraction}

Let $U_{nn}$ be the operator of multiplication by the number of pairs
of particles on nearest-neighbour sites, and add $-\lambda U_{nn}$
($\lambda=O(1)$ in units of the hopping) to $H_\mu(\bK)$.

\begin{proposition}[Nearest-neighbour attraction]\label{prop:nn}
For three identical bosons the operator $\Pi_1U_{nn}\Pi_1$ is
unitarily equivalent to $2P_{\rm sh}$ on $\ell^2(\Z^2\setminus\{0\})$,
where $P_{\rm sh}$ projects onto the first shell. For the $2+1$
systems, if the $F$--$I$ and $F$--$F$ nearest-neighbour couplings are
$\lambda_{FI}$ and $\lambda_{FF}$, the effective shell term is
$-(\lambda_{FI}+\lambda_{FF})P_{\rm sh}$; hence the atom--dimer
operators become
\[
h_B^\lambda=4+\eps_D-S_\bK-2\lambda P_{\rm sh},\qquad
h_F^{\Lambda}=2+2\gamma+\eps_D+\tfrac\gamma2S_\bK-\Lambda P_{\rm sh},
\quad \Lambda=\lambda_{FI}+\lambda_{FF}.
\]
The perturbation is \emph{not} a constant shift: it acts only when the
atom is adjacent to the dimer. At~$\bK=\bpi$ the bosonic thresholds are
\[
\lambda_c^{p}=\frac{4-\pi}{4(\pi-2)}\approx0.18798,\qquad
\lambda_c^{s}=\frac34,\qquad
\lambda_c^{d}=\frac{8-\pi}{4(4-\pi)}\approx1.41495,
\]
and for $2+1$ fermions at $\bK=\bpi$, $\gamma=1$, every $\Lambda>0$ binds
the $s$-channel, with $\ln(16/\delta_F)=\pi(1+2\Lambda)/(2\Lambda)+o(1)$ as $\Lambda\downarrow0$.
\end{proposition}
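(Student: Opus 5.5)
The plan has three parts: identify the new shell term in the effective operator, reduce each threshold to a channel-by-channel scalar equation via Theorem~\ref{thm:count}, and solve those equations using the eigenvalues of Lemma~\ref{lem:M}.

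\emph{Effective shell term.} I would work in the position representation used in the proof of Lemma~\ref{lem:heff}. There $U_{nn}$ is a multiplication operator in $(\bx,\by)$, so it commutes with the multiplication operator $V$. Hence $\Pi_1U_{nn}\Pi_1$ is just $U_{nn}$ restricted to the three branches $\{\by=0\ne\bx\}$, $\{\bx=0\ne\by\}$, $\{\bx=\by\ne0\}$.

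Take the branch $\by=0$, with $\varphi(\bx)=\psi(\bx,0)$. Particles $2$ and $3$ sit on one site and particle $1$ sits at relative position $\bx$. The number of nearest-neighbour pairs is therefore $2[|\bx|=1]$: the atom is adjacent to both dimer constituents. By the $S_3$ relations the other two branches give the same value. Under the unitary $\varphi\mapsto\psi/\sqrt3$ this yields $2P_{\rm sh}$, and hence $h_B^\lambda=h_B-2\lambda P_{\rm sh}$.

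For $2{+}1$ systems only the branches with an $F$--$I$ dimer occur. The atom is then a fermion adjacent to one $F$ and one $I$, which gives the weight $\lambda_{FI}+\lambda_{FF}$. The same unitary argument gives $-\Lambda P_{\rm sh}$.

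Since $\lambda U_{nn}$ is bounded with $\|U_{nn}\|\le3$ and $\lambda=O(1)$, Theorem~\ref{thm:feshbach} carries over with the window and the constant $\eta$ enlarged by $O(\lambda)$. I would only remark on this.

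\emph{Counting criterion.} The perturbation $-2\lambda P_{\rm sh}$ is supported on the first shell, like $S_\bK$. So the proof of Theorem~\ref{thm:count} applies verbatim with $C=-S_\bK-2\lambda I_{\rm sh}$ (bosons) or $C=\tfrac\gamma2S_\bK-\Lambda I_{\rm sh}$ (fermions). Binding occurs iff $M_0^{1/2}CM_0^{1/2}$ has an eigenvalue below $-1$, and binding energies solve $\det(I+M(\delta)C)=0$.

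At $\bK=\bpi$ we have $S_\bpi=-S_{\mathbf0}$, and $S_{\mathbf0}$ is the swap $\pm\be_j\leftrightarrow\mp\be_j$. This swap is diagonal in the $C_{4v}$ basis of Lemma~\ref{lem:M}, with eigenvalue $+1$ on $s$ and $d$ and $-1$ on $p$. So $M(\delta)$ and $C$ are simultaneously diagonal, and the criterion splits into scalar channel conditions $m_\ell(\delta)\,c_\ell=-1$.

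\emph{Bosonic thresholds.} Here $C=S_{\mathbf0}-2\lambda$, so $c_s=c_d=1-2\lambda$ and $c_p=-1-2\lambda$. With the $\delta\downarrow0$ eigenvalues $2$, $A$, $8/\pi-2$ of $M_0$ and $A=2-4/\pi$, the threshold equations are
\[
2(2\lambda-1)=1,\qquad A(1+2\lambda)=1,\qquad (8/\pi-2)(2\lambda-1)=1.
\]
Solving them gives $\lambda_c^s=3/4$, $\lambda_c^p=(4-\pi)/(4(\pi-2))$ and $\lambda_c^d=(8-\pi)/(4(4-\pi))$. Monotonicity of $M(\delta)$ (Lemma~\ref{lem:M}) shows that binding holds exactly above each threshold.

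\emph{Fermions at $\gamma=1$.} The $s$-channel coefficient is $c_s=-\tfrac12-\Lambda$, so the binding condition is
\[
\bigl(\tfrac12+\Lambda\bigr)\Bigl(2+\delta-\frac1{b_0(\delta)}\Bigr)=1.
\]
This gives $1/b_0=4\Lambda/(1+2\Lambda)+\delta$. The right-hand side is positive for every $\Lambda>0$, so a root exists for every $\Lambda>0$. Inserting $b_0=\frac1{2\pi}\ln\frac{16}\delta+O(\delta\ln\frac1\delta)$ yields $\ln(16/\delta_F)=\pi(1+2\Lambda)/(2\Lambda)+o(1)$, exactly as in the proof of Theorem~\ref{thm:crit}(a).

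\emph{Main obstacle.} The delicate step is the bookkeeping in the first part: checking that each branch contributes the nearest-neighbour count $2$ (bosons), resp.\ $\lambda_{FI}+\lambda_{FF}$ (fermions). This also requires checking that the Pauli-excluded and triple-occupancy configurations contribute nothing after $\Pi_1$. The rest is algebra on $C_{4v}$ channels.
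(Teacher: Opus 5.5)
Your proposal is correct and follows essentially the same route as the paper. You count nearest-neighbour pairs branch by branch (two when the atom is on the shell, none elsewhere on $\Ran\Pi_1$) to get $2P_{\rm sh}$, resp.\ $\Lambda P_{\rm sh}$; you then apply the counting criterion with the shell-supported $C$, diagonalise it at $\bK=\bpi$ in the $s,p,d$ basis to get the three scalar threshold conditions, and solve the $s$-channel equation $m_s(\delta)(\tfrac12+\Lambda)=1$ with the asymptotics of $b_0$, while your extra remarks on the Feshbach remainder and on existence of the root only fill in details the paper leaves implicit.
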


\begin{proof}
An atom on the shell is a nearest neighbour of both dimer constituents
(two pairs); elsewhere on $\Ran\Pi_1$ no pair is at distance one. The
counting criterion (Theorem~\ref{thm:count}) with $C=-S_\bK-2\lambda
P_{\rm sh}$ is diagonal at $\bK=\bpi$ in the $s,p,d$ basis, because
$S_{\bpi}=-S_{\mathbf0}$ acts as $-1,+1,-1$ on $s,p,d$ and $M_0$ as
$2,A,\frac8\pi-2$: the conditions $2(2\lambda-1)>1$,
$A(1+2\lambda)>1$, $(\frac8\pi-2)(2\lambda-1)>1$ give the three values.
For fermions at $\gamma=1$ the $s$-condition is
$m_s(\delta)\bigl(\frac12+\Lambda\bigr)=1$, i.e.\
$1/b_0\simeq4\Lambda/(1+2\Lambda)$, and $b_0\simeq\frac1{2\pi}\ln\frac{16}\delta$
gives the stated law.
\end{proof}

\subsection{Three-body force}

Adding $-w\mu\Pi_3$, with $\Pi_3$ the rank-one projection onto the
constant function, modifies the trimer band but affects the
atom--dimer sector only at order $\mu^{-1}$.

\subsection{Floquet modulation}

Under high-frequency lattice shaking along $(1,1)$, entering the dispersion of
species $j$ through the Peierls substitution
$\eps(\bp)\to\eps\bigl(\bp+\tfrac{\mathcal A_j}{\omega}\sin(\omega t)\,(1,1)\bigr)$,
the effective tunnelling is renormalised by $\tau_j=J_0(\mathcal A_j/\omega)$
\cite{Eckardt2017}; 
we assume $\omega\gg\mu$, so that no photon-assisted
resonance with the interaction scale occurs.

\begin{proposition}[Floquet-renormalised zones]\label{prop:floqAD}
At leading order the atom--dimer effective Hamiltonian is
$h_B(\bK;\tau)\cong6-6|\tau|+|\tau|\,h_B(\bK)$ for $\tau>0$ and,
after the gauge transformation
$\varphi(\bx)\mapsto(-1)^{x_1+x_2}\varphi(\bx)$,
$h_B(\bK;\tau)\cong6-6|\tau|+|\tau|\,h_B(\bK+\bpi)$ for $\tau<0$,
with $\tau=J_0(\mathcal A/\omega)$.
Hence $\delta_B(\bK;\tau)=\tau
\delta_B(\bK)$ for $\tau>0$ and $\delta_B(\bK;\tau)=|\tau|
\delta_B(\bK+\bpi)$ for $\tau<0$. In~particular, a sign change
$J_0<0$ inverts the effective band, and, for $2+1$ fermions,
$\gamma_{\rm eff}=\gamma\tau_3/\tau_f$ allows tuning through
$\gamma_c$ without changing atomic species.
\end{proposition}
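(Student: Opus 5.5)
The plan has three parts: compute the first-order effective operator under Floquet averaging, identify it as an affine image of $h_B$ (or of its $\bpi$-shifted version), and then read off the binding energies.

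\emph{Step 1: the averaged fibre operator.} For $\omega\gg\mu$ I would take the leading-order (Magnus/high-frequency) effective Hamiltonian, namely the time average of the fibre operator over one period. The interaction $-\mu V$ is time independent and passes through unchanged. In the position representation of \S\ref{sec:S1}, each hopping amplitude $-\tfrac12$ of particle $j$ along $\pm\be_1,\pm\be_2$ acquires the Peierls phase $e^{\pm i(\mathcal A_j/\omega)\sin\omega t}$, because the shaking is along $(1,1)$ and so every unit vector has projection $\pm1$ onto it. Averaging gives the factor $J_0(\mathcal A_j/\omega)=\tau_j$. For identical bosons all $\tau_j=\tau$, and the averaged dispersion is
\[
E_\bK^\tau=6-\tfrac{\tau}{2}\sum(\text{hops})=6-6\tau+\tau E_\bK .
\]

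\emph{Step 2: the effective operator.} Since $\Pi_1$ does not depend on the kinetic term, the computation in the proof of Lemma~\ref{lem:heff} goes through verbatim. The diagonal contributes $6$, while the hops and the exchange terms are multiplied by $\tau$. This gives
\[
h_1^\tau=\Pi_1E^\tau_\bK\Pi_1\cong 6-6\tau+\tau h_B(\bK).
\]
For $\tau>0$ this is already the claim. For $\tau<0$ I would conjugate by the unitary $(G\varphi)(\bx)=(-1)^{x_1+x_2}\varphi(\bx)$ on $\ell^2(\Z^2\setminus\{0\})$. Every nearest-neighbour hop changes parity, so $G(\eps_D-2)G=-(\eps_D-2)$. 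On the first shell, $-\bx$ has the same parity as $\bx$, so $GS_\bK G=S_\bK$. It follows that
\[
G\bigl(\tau(\eps_D-2)-\tau S_\bK\bigr)G=|\tau|(\eps_D-2)+|\tau|S_\bK=|\tau|(\eps_D-2)-|\tau|S_{\bK+\bpi},
\]
using $S_{\bK+\bpi}=-S_\bK$. Adding back the constants yields $6-6|\tau|+|\tau|\,h_B(\bK+\bpi)$. The constant still has to be checked: writing $h_B=6+(\eps_D-2)-S_\bK$, the constant term of $6-6\tau+\tau h_B$ is $6$, so the constant of the conjugated operator is $6-6|\tau|+6|\tau|$ as well. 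The constants therefore match.

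\emph{Step 3: binding energies and the $2+1$ case.} An affine map $x\mapsto c+|\tau|x$ sends the threshold $4$ to $c+4|\tau|$ and sends $4-\delta$ to $c+4|\tau|-|\tau|\delta$. Hence $\delta_B(\bK;\tau)=|\tau|\,\delta_B(\bK)$ for $\tau>0$ and $|\tau|\,\delta_B(\bK+\bpi)$ for $\tau<0$. For $2+1$ fermions the same averaging multiplies the fermion hops by $\tau_f$ and the particle-3 hops by $\tau_3$. Factoring out $\tau_f$ gives $\tau_f\,h_F(\bK;\gamma\tau_3/\tau_f)$ up to a constant, which is exactly the statement $\gamma_{\rm eff}=\gamma\tau_3/\tau_f$. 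The convergence of the three-body spectrum would be taken from Theorem~\ref{thm:feshbach} with $E_\bK$ replaced by $E^\tau_\bK$; the norm bounds only improve, since $|\tau|\le1$.

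\emph{Main obstacle.} I expect the delicate point to be justifying that the leading-order Floquet Hamiltonian is simply the time average when $\omega\gg\mu$. The Magnus corrections are $O(\mu/\omega)$, and I would control them by the standard high-frequency expansion \cite{Eckardt2017}, keeping them of the same order as the $O(\mu^{-1})$ Feshbach error. The ordering of the limits $\omega\to\infty$ and $\mu\to\infty$ has to be stated explicitly for this to work; everything else in the argument is algebraic.
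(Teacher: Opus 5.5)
Your proposal is correct and follows the paper's own argument: all hops are scaled by $\tau$ while the on-site constant $6$ is unchanged, so $\Pi_1E^{(\tau)}_\bK\Pi_1=6-2\tau+\tau\eps_D-\tau S_\bK=6-6\tau+\tau h_B(\bK)$; for $\tau<0$ the parity gauge sends $\eps_D\mapsto4-\eps_D$ and fixes $S_\bK$, and $S_{\bK+\bpi}=-S_\bK$ completes the identification, after which the binding energy is rescaled by $|\tau|$. The paper takes the time-averaged Hamiltonian as the meaning of ``leading order'' and explicitly leaves the $O(\omega^{-1})$ corrections unanalysed, so the Magnus-error control you flag as the main obstacle is not needed for the statement as posed.
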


\begin{remark}
The formula follows from $\Pi_1E_\bK^{(\tau)}\Pi_1=6-2\tau+\tau\eps_D-\tau
S_\bK$ (all hops multiplied by $\tau$); for $\tau<0$ the gauge
transformation $\varphi(\bx)\mapsto(-1)^{x_1+x_2}\varphi(\bx)$ flips the
sign of the atom hopping and leaves $S_\bK$ invariant (both shell
partners $\pm\be$ are odd sites), which yields the $|\tau|$ form. As a
consistency check, at $\tau=1$ it simplifies algebraically to $4+\eps_D-S_\bK=h_B(\bK)$ of
\autoref{lem:heff}, and at $\tau=-1$ to $4+\eps_D+S_\bK=
4+\eps_D-S_{\bK+\bpi}=h_B(\bK+\bpi)$, matching the stated
$\bK\to\bK+\bpi$ inversion. A worked example is given in Supplementary Material \S\ref{sec:S62}: at
$\bK=\bpi$ the undriven bosonic system has no atom--dimer state, while
for $\mathcal A/\omega$ just beyond the first zero $2.4048$ of $J_0$
($\tau<0$) a state appears with $\delta_B=|\tau|\,\delta_B(\mathbf0)
=0.03542\,|\tau|$. Corrections of higher order in $\omega^{-1}$ are not
analysed here.
\end{remark}

\section{Numerical verification}\label{sec:numerical}

The effective theory of Sections~\ref{sec:effective}--\ref{sec:critical}
is verified by two independent numerical schemes, plus a~fully
independent audit of the closed-form threshold constants ($A$, $B$,
$\gamma_c$, $\delta_\infty$, $K^\ast$) carried out from scratch~--
by direct Brillouin-zone quadrature and symbolic algebra, without
reference to the derivations of Sections~\ref{sec:threshold}--\ref{sec:zones}
-- summarised in \S\ref{ssec:selfcontained} and detailed in
Supplementary Material \S\ref{sec:S4}.

\subsection{Independent audit of the threshold constants}\label{ssec:selfcontained}

Because the counting criterion of \autoref{thm:count} and the
critical mass ratios of \autoref{thm:zoneF} rest entirely on the two
lattice constants $A$, $B$ and on the quartic $P_\bK$ of
\autoref{lem:char}, these were re-derived independently, without
using the Dirichlet-elimination argument of \autoref{lem:M}:
\begin{itemize}
\item $A=2-4/\pi=0.72676046\ldots$ and $B=2/\pi=0.63661977\ldots$
were recovered to $8$ significant figures from the direct
double integral (square-lattice effective resistance / lattice Green-function integral)
$R(\bx)=\tfrac1{(2\pi)^2}\iint_{\T^2}\bigl[1-\cos(\bp\cdot\bx)\bigr]/
\eps(\bp)\,d\bp$ at $\bx=(2,0)$ and $\bx=(1,1)$ -- the classical
two-point resistances of the infinite square lattice
\cite{Cserti,Atk} -- with $A+2B=2$ confirmed to machine precision.
\item The general quartic $P_\bK(x)$ of \autoref{lem:char} was
checked by computer algebra to reduce \emph{identically in $A$}
(not merely numerically) to each of its four stated special-point
factorisations at $\bK=\mathbf0,\bpi,(\pi,0),(\tfrac\pi2,\tfrac\pi2)$.
\item All four closed-form values of $\gamma_c$ in \autoref{thm:zoneF}
were re-obtained from the roots of these factorisations.
\item $\delta_\infty=0.0354199892\ldots$ was reproduced to $9$
digits as the root of $b_0(\delta)=1/(1+\delta)$, with $b_0(\delta)$
computed by direct quadrature (no series expansion used).
\item The diagonal zone-boundary point of \autoref{thm:zoneB}(iv),
$\cos K^\ast=2-3\pi/4$, was reproduced to $12$ digits by root-finding
on $P_{(K,K)}(1)=0$.
\end{itemize}
No discrepancy was found at the level of precision used. Full code
and numerical output are given in Supplementary Material \S\ref{sec:S4}.

\subsection{Exact finite-volume three-body solver}

In relative coordinates, the Skornyakov--Ter-Martirosyan reduction of
the fibre eigenvalue equation is
\[
\Delta_\mu(\bp,z)\varphi(\bp)=2\mu\bigl\langle
(\eps(\bp)+\eps(\bq)+\gamma\eps(\bK-\bp-\bq)-z)^{-1}\varphi(\bq)
\bigr\rangle_\bq,
\]
with $\Delta_\mu=1-\mu\langle(E_\bK-z)^{-1}\rangle_\bq$ (bosons:
$\gamma=1$; for $2+1$ fermions the analogous equation has one exchange
term and the opposite sign). On the $N\times N$ torus,
averages become discrete sums and the fibre equation is
\emph{exactly} equivalent to the three-body problem; the roots are
located by Brent's method at $10^{-12}$ precision.

\subsection{Direct diagonalisation}

Direct diagonalisation of the sparse finite-volume Hamiltonian on the
relative-coordinate torus $N^4$ with projection onto the $S_3$-symmetric
sector is used as an independent check. The two schemes agree to
$10^{-8}$ at $N=8$, $\mu=20$:
$z_1(0)=-54.07693698$, $z_2(0)=-16.28398325$,
$z_1(\bpi)=-54.07318058$. These values cross-validate the two schemes; the~trimer
levels lie at $-3\mu+6+O(\mu^{-1})$, as required by \cite{Trimer2026}. The uniform trimer expansion including
$a_4=\displaystyle -\,\frac{15}{32}$ gives $-54.0769336$ and $-54.0731836$; the
residuals $3\cdot10^{-6}$ are $O(\mu^{-4})$, whereas without $a_4$ they
would be $6\cdot10^{-5}$.

\subsection{Comparison of effective theory and exact solver}

Table~\ref{tab:effective} compares the atom--dimer binding energies of
the effective theory with the exact finite-volume solver. Because of
the slow $s$-channel convergence (Remark~\ref{rem:slow}), the
comparison must be made at equal volume: the column ``eff., $N=32$''
evaluates $h_B$, $h_F$ on the same $32\times32$ torus as the exact
solver, and the column ``eff., $N=\infty$'' gives the thermodynamic
limit. The agreement at equal volume is at the level expected from
the $O(\mu^{-1})$ remainder of Theorem~\ref{thm:feshbach}; weak
bindings are strongly volume dependent ($\delta_B(\pi,0)$ decreases as
$N^{-2}$: $2.09\cdot10^{-4}$, $5.47\cdot10^{-5}$, $1.44\cdot10^{-5}$,
$3.80\cdot10^{-6}$ for $N=32,64,128,256$).

\begin{table}[h]\centering\small
\caption{Atom--dimer binding energies $\delta$ (in units of the hopping,
energy $-\mu+{\rm threshold}-\delta$). $\delta$ in the exact column is measured from the exact finite-$\mu$ atom--dimer threshold (as in Table~\ref{tab:kappa}). Exact: finite-volume three-body
solver, $N=32$, $\mu=400$. For $\gamma=3$, the threshold shift of the order of $16/\mu$ is comparable to $\delta$.}
\label{tab:effective}
\begin{tabular}{llccc}
\toprule
System & $\bK$ & eff., $N=\infty$ & eff., $N=32$ & exact, $N=32$ \\
\midrule
Bosons & $(\pi,0)$ & $1.45\cdot10^{-15}$ & $0.000209$ & $0.000208$ \\
Bosons & $(\pi/2,\pi/2)$ & $0.000650$ & $0.002144$ & $0.002258$ \\
Bosons & $(3\pi/4,3\pi/4)$, $\bpi$ & none & none & none \\
Fermions $\gamma=2$ & $\bpi$ & $0.035420$ & $0.035450$ & $0.035665$ \\
Fermions $\gamma=2$ & $\mathbf 0$ & none & none & none \\
Fermions $\gamma=2$ & $(\pi/2,\pi/2)$ & $0.000650$ & $0.002144$ & $0.002131$ \\
Fermions $\gamma=3$ & $\mathbf 0$ & $0.051929$ ($\times2$) & $0.051926$ ($\times2$) & $0.047661$ ($\times2$) \\
Fermions $\gamma=4$ & $\bpi$ & $0.586815;\ 0.147265$ & $0.586815;\ 0.147265$ & $0.587256;\ 0.155191$ \\
\bottomrule
\end{tabular}
\end{table}

\subsection{First correction at $\bK=0$}

The $O(\mu^{-1})$ correction to $z_2^s(\mu)=-\mu+C$ is
$\kappa_B/\mu$ with $\kappa_B\approx-4.35966$, obtained from the
second-order term of the Schur complement $F(z)$ of
Theorem~\ref{thm:feshbach} evaluated on the normalised $h_B$
eigenvector, and confirmed by the exact solver (Table~\ref{tab:kappa}).

\begin{table}[h]\centering\small
\caption{Bosons at $\bK=0$, exact solver ($N=48$). Here
$\delta(\mu)=\theta(\mu)-z_2(\mu)$ is measured from the exact
atom--dimer threshold $\theta(\mu)=-\mu+4-4/\mu+O(\mu^{-2})$ (dimer at
rest), so that $\mu[z_2+\mu-C]=-4-\mu[\delta(\mu)-\delta_\infty]+O(\mu^{-1})$.
Both columns converge, to $0.35966$ and to
$\kappa_B=-4-0.35966=-4.35966$.}
\label{tab:kappa}
\begin{tabular}{rcccc}
\toprule
$\mu$ & $\delta(\mu)$ exact & $\mu[\delta(\mu)-\delta_\infty]$ & $\mu[z_2+\mu-C]$ \\
\midrule
25  & $0.0505554$ & $0.3784$ & $-4.3847$ \\
50  & $0.0428313$ & $0.3705$ & $-4.3721$ \\
100 & $0.0390749$ & $0.3655$ & $-4.3659$ \\
200 & $0.0372335$ & $0.3626$ & $-4.3627$ \\
400 & $0.0363232$ & $0.3612$ & $-4.3612$ \\
800 & $0.0358708$ & $0.3604$ & $-4.3604$ \\
$\infty$ & $\delta_\infty$ & $0.3597$ & $-4.3597$ \\
\bottomrule
\end{tabular}
\end{table}

For reference, \autoref{fig:zonepath} displays the extreme
eigenvalues $x_{\max}(\bK)$ and $-x_{\min}(\bK)$ of the quartic
$P_\bK$ along the high-symmetry path, together with the critical mass
ratio $\gamma_c(\bK)$, and \autoref{fig:gammacmap} shows
$\gamma_c(\bK)$ over the full Brillouin zone. Both figures follow
directly from Lemma~\ref{lem:char} and are reproduced here for
convenience.

\begin{figure}[h]\centering
\includegraphics[width=0.97\textwidth]{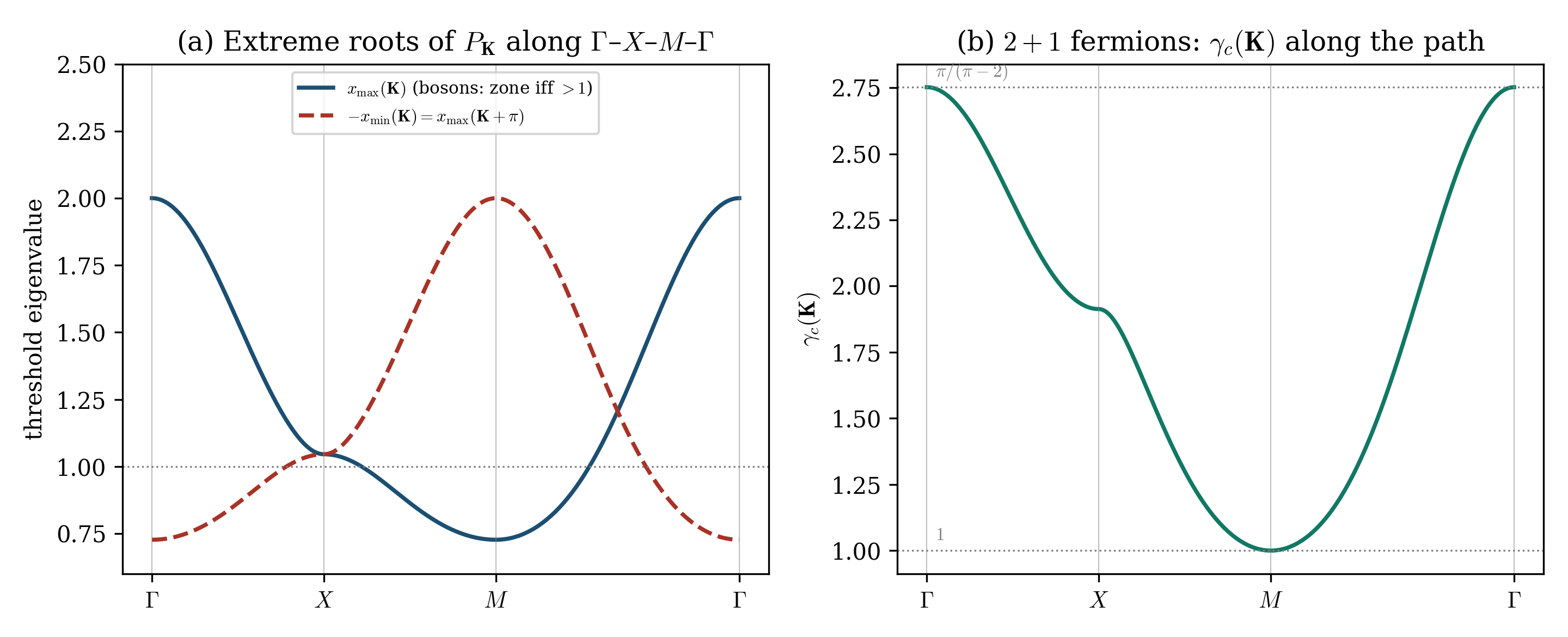}
\caption{Zone-indicator eigenvalues along the high-symmetry path
$\Gamma(\mathbf0)$--$X(\pi,0)$--$M(\bpi)$--$\Gamma$, computed from
the closed-form quartic $P_\bK$ of Lemma~\ref{lem:char} with
$A=2-4/\pi$. \textbf{(a)} $x_{\max}(\bK)$, the bosonic zone
indicator of Theorem~\ref{thm:zoneB} (an atom--dimer state exists
iff $x_{\max}(\bK)>1$), together with $-x_{\min}(\bK)$; the two
coincide with each other's $\bpi$-translate,
$-x_{\min}(\bK)=x_{\max}(\bK+\bpi)$, which is the duality of
Theorem~\ref{thm:dual} made visible. \textbf{(b)} The critical mass
ratio $\gamma_c(\bK)=2/|x_{\min}(\bK)|$ along the same path,
attaining $\pi/(\pi-2)$ at $\Gamma$ and $1$ at $M$ as in
Theorem~\ref{thm:zoneF}. The bosonic zone boundary on the
$M$--$\Gamma$ diagonal, $\cos K^\ast=2-3\pi/4$
($K^\ast=0.61593\,\pi$), is reproduced by panel (a) to
$2\cdot10^{-5}\pi$ at the plotted resolution; root-finding gives it to
$10^{-15}$ (Supplementary Material \S\ref{sec:S43}).}
\label{fig:zonepath}
\end{figure}

\begin{figure}[h]\centering
\includegraphics[width=0.7\textwidth]{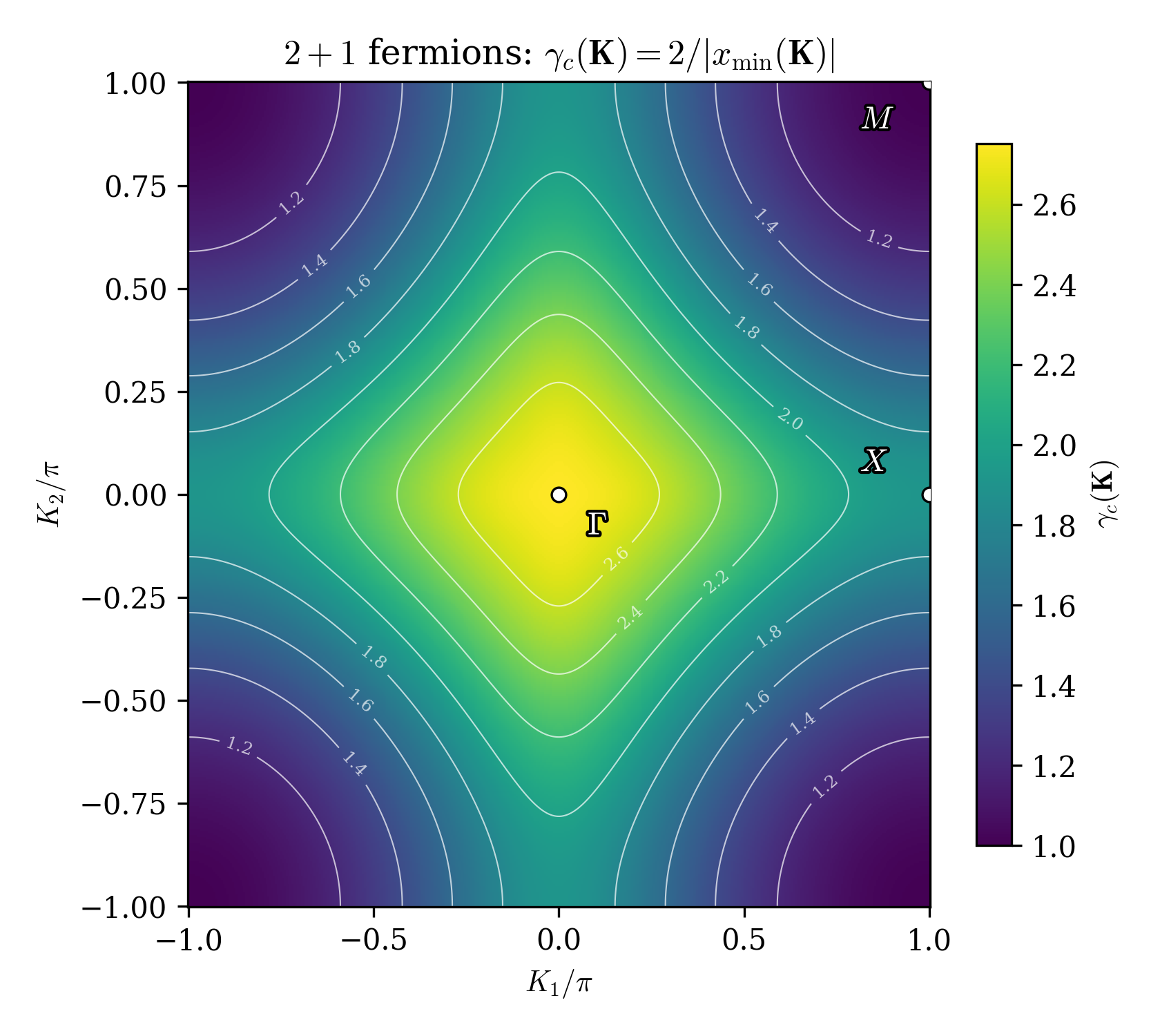}
\caption{The critical mass ratio $\gamma_c(\bK)=2/|x_{\min}(\bK)|$ of
Theorem~\ref{thm:zoneF} over the full Brillouin zone $\bK\in\T^2$,
obtained by finding the smallest root of $P_\bK(x)$ at each $\bK$ on
a $400\times400$ grid. The maximum $\pi/(\pi-2)\approx2.752$ sits at
$\Gamma$ and the minimum $1$ at the zone-boundary point $M=\bpi$;
white contours are drawn at $\gamma_c=1.2,1.4,\dots,2.6$. For
$\gamma$ below the global minimum ($\gamma<1$) no $\bK$ supports an
atom--dimer state; for~$\gamma$ above the global maximum
($\gamma>\pi/(\pi-2)$) every $\bK$ does, in agreement with
Theorem~\ref{thm:zoneF}(i),(iii).}
\label{fig:gammacmap}
\end{figure}

\section{Conclusion}\label{sec:conclusion}

We have solved problems (P1)--(P5) of \S\ref{ssec:problem} for the
atom--dimer sector of three particles on $\Z^2$ at strong coupling, for
three identical bosons and for both $2+1$ mixtures, at every total
quasimomentum. Earlier work treated the trimer \cite{Trimer2026}
and the bosonic
atom--dimer constant $C$ at 
$\bK=\mathbf0$
\cite{Trimer2026}, 
an elliptic reduction and an order-matching criterion \cite{arXiv2608}, and the 
$2+1$ fermionic
threshold at $\bK=\mathbf0$ \cite{Companion} (interpreted there as a trimer); the results below are
new.

\emph{Rigorous results.} (1)~The limit operators $h_B$, $h_F$,
$h_{2+1B}$ are derived exactly (Lemma~\ref{lem:heff},
Proposition~\ref{prop:mixed}), and the three-body spectrum converges to
theirs with the explicit error $36/(\mu-10)$
(Theorem~\ref{thm:feshbach}). (2)~Existence and number of atom--dimer
states reduce to a $4\times4$ matrix whose limit is built from the
square-lattice resistances $A=2-\frac4\pi$, $B=\frac2\pi$
(Lemma~\ref{lem:M}, Theorem~\ref{thm:count}). (3)~The critical mass
ratio is an explicit function on the Brillouin zone with
$\gamma_c\ge1$, $\gamma_c(\bpi)=1$ and closed forms at all symmetry
points; in particular $\gamma_c(\mathbf0)=\pi/(\pi-2)$ gives the closed
form of the constant $2.75194$ of \cite{Companion} and identifies that
state as the $p$-wave atom--dimer state (Theorem~\ref{thm:zoneF}).
(4)~There are at most two atom--dimer states at any $\bK$ and
$\gamma$ (Corollary~\ref{cor:twostate}). (5)~On the lattice a change
of statistics is equivalent to the shift $\bK\to\bK+\bpi$
(Theorem~\ref{thm:dual}, Proposition~\ref{prop:mixed}). (6)~The onset
is an essential singularity wherever the threshold mode has an
$s$-wave component and linear-over-logarithmic in the pure $p$-wave
(Theorems~\ref{thm:crit}, \ref{thm:boundary}). The $p$- and $d$-channels are regular at threshold by symmetry; in the
$s$-channel the Dirichlet condition turns the divergent two-dimensional
logarithm $b_0(\delta)$ into the finite limit $m_s\to2$, approached as
$-1/(4b_0(\delta))$; this single term explains both the
essential-singularity onset and the strong finite-volume dependence of
weak atom--dimer bindings.

\emph{Numerical confirmations.} The bound $\gamma_c\le\pi/(\pi-2)$, the fact that
at most one bosonic root exceeds $1$, and the range of the second fermionic
threshold, first observed on $601\times601$ grids, are~proved by the identities
\eqref{eq:PA}--\eqref{eq:Pd}; the grids serve as independent checks.

Unlike the trimer, whose energy is $\bK$-independent up to $O(\mu^{-1})$ \cite{Trimer2026}, band width is $O(\mu^{-2})$, 
the~atom--dimer binding energy depends on $\bK$ already at order
$\mu^0$ (through the exchange operator $S_\bK$), ranging from
$\delta_B=0.0354$ at $\Gamma$ to zero beyond $\cos K^\ast=2-3\pi/4$ on
the diagonal. The atom--dimer states are therefore far more mobile and
more sensitive to $\bK$ than trimers, which makes the critical mass
ratio $\gamma_c(\bK)$ of Figure~\ref{fig:gammacmap} a~natural target for
quasimomentum-resolved spectroscopy ($\gamma=t_3/t_f$, and Section \ref{sec:duality}) of mass-imbalanced mixtures in
optical lattices, with Floquet driving (Proposition~\ref{prop:floqAD})
providing, for species-selective driving ($\tau_3\ne\tau_f$), a way of tuning through $\gamma_c$.

The dualities of Theorem~\ref{thm:dual} and Proposition~\ref{prop:mixed}
are exact unitary equivalences generated by the sublattice identity
$S_{\bK+\bpi}=-S_\bK$; they are not strong--weak or holographic
dualities, nor do they involve a large-$N$ limit, and they hold for
every $\gamma>0$ and every $\bK\in\T^2$.

\emph{Broader significance.} The results above are not confined to
the model studied here. The exact Brillouin-zone dependence of the
atom--dimer thresholds and the closed-form critical mass ratios
provide benchmark quantities for quantum simulation of
mass-imbalanced three-body systems in optical lattices, where the
hopping ratio $\gamma=t_3/t_f$ is tunable by species-selective
lattice depths and quasimomentum-resolved spectroscopy is available.
The statistics--quasimomentum duality of Theorem~\ref{thm:dual} and
Proposition~\ref{prop:mixed} identifies a lattice-specific
equivalence that can be exploited in Floquet engineering, allowing one
to tune through the critical ratio without changing atomic species.
The onset laws of Theorems~\ref{thm:crit} and~\ref{thm:boundary}
connect the few-body lattice problem to the renormalization-group
paradigm of marginal couplings and dimensional transmutation
(Remark~\ref{rem:RG}), and the non-BKT character of the zone boundary
distinguishes the lattice mechanism from the Efimov and
fixed-point-annihilation scenarios. Finally, the explicit $4\times4$
threshold matrix of Lemma~\ref{lem:M} and the quartic $P_\bK$ of
Lemma~\ref{lem:char} offer exact algebraic benchmarks for
finite-volume and infinite-lattice solvers, and the method extends to
anisotropic lattices, mixed statistics, and higher dimensions.

\begin{table}[!ht]\centering\footnotesize
\caption{Applied content of the effective theory: closed-form
quantities, their values and the experimental quantity they control.
Energies are in units of the hopping amplitude; $\delta$ is measured
from the atom--dimer threshold. ``Exact'' means a closed-form
expression; ``$10^{-k}$'' means the value is known numerically to that
relative accuracy (Supplementary Material \S\ref{sec:S4}).}
\label{tab:applied}
\setlength{\tabcolsep}{4pt}
\begin{tabular}{L{3.1cm}L{3.5cm}L{1.9cm}L{4.0cm}}
\toprule
Quantity & Closed form & Accuracy & Experimental handle \\
\midrule
Critical mass ratio at $\bK=\mathbf0$ &
$\gamma_c=\pi/(\pi-2)=2.75194$ & exact &
choice of species pair in a mass-imbalanced mixture \\
\addlinespace
Critical mass ratio over the zone &
$\gamma_c(\bK)=2/|x_{\min}(\bK)|$ & exact &
quasimomentum-resolved spectroscopy; \autoref{fig:gammacmap} \\
\addlinespace
Second threshold &
$\gamma_c^{(2)}\in[\frac\pi{\pi-2},\frac\pi{4-\pi}]$ & exact endpoints &
number of atom--dimer lines in the spectrum \\
\addlinespace
Binding at $\bK=\mathbf0$ (bosons) &
$\delta_\infty$, $C=4-\delta_\infty=3.96458$ & $10^{-9}$ &
lattice-modulation or rf association spectroscopy \\
\addlinespace
Zone boundary on the diagonal &
$\cos K^\ast=2-3\pi/4$ & exact &
edge of the atom--dimer feature in momentum-resolved data \\
\addlinespace
Onset law ($s$-wave) &
$\delta=16e^{-\pi\gamma/(\gamma-1)}$ & $10^{-6}$ (Table~\ref{tab:crit}) &
sensitivity of the line to mass ratio near $\gamma_c$ \\
\addlinespace
Nearest-neighbour thresholds &
$\lambda_c^{p}=\frac{4-\pi}{4(\pi-2)}$, $\lambda_c^{s}=\frac34$, $\lambda_c^{d}=\frac{8-\pi}{4(4-\pi)}$ & exact &
dipolar or Rydberg-dressed nearest-neighbour attraction \\
\addlinespace
Floquet renormalisation &
$\delta(\bK;\tau)=|\tau|\,\delta(\bK)$ ($\tau>0$), $|\tau|\,\delta(\bK+\bpi)$ ($\tau<0$), $\tau=J_0(\mathcal A/\omega)$ & exact at leading order &
shaking amplitude: tuning through $\gamma_c$ without changing species \\
\addlinespace
Finite-size correction &
$\delta_B(\pi,0)\propto N^{-2}$ & numerical &
required lattice size for a given feature (\S\ref{sec:numerical}) \\
\bottomrule
\end{tabular}
\end{table}

The quantities that the theory delivers in closed form are exactly
those that a lattice experiment controls or measures;
\autoref{tab:applied} lists them together with the corresponding
observable and the accuracy with which each is known here.

Two entries of \autoref{tab:applied} deserve emphasis because they
reverse the naive expectation. First, the~atom--dimer band disperses at
order $\mu^{0}$, whereas the trimer level is $\bK$-independent up to $O(\mu^{-1})$ \cite{Trimer2026}, the trimer band is flat to order $\mu^{-2}$:
the~quasimomentum dependence, not the binding energy itself, is the
most accessible signature of the atom--dimer state. Second, weak
bindings near the zone boundary are governed by an essential
singularity, so a state that the effective theory places inside the
zone may be unobservable in a finite lattice or at finite $\mu$; the
finite-size row of \autoref{tab:applied} and Remark~\ref{rem:rate}
quantify when this happens.

The method applies to any lattice few-body Hamiltonian whose
pair-contact operator has an infinitely degenerate eigenvalue separated
by a gap from the rest of its spectrum: in $d=3$ (where no
$s$-wave logarithm survives and all onsets become algebraic), to
anisotropic hopping and to mixed statistics.

\medskip

\textbf{Supplementary Material.} Complete, coordinate-explicit
proofs of Lemma~\ref{lem:heff}, Theorem~\ref{thm:crit} and
Proposition~\ref{prop:mixed}, the Feshbach--Schur remainder estimate
with explicit constants controlling the $\mu\to\infty$ convergence
(with the second-order term giving $\kappa_B$), the independent numerical
audit summarised in \S\ref{ssec:selfcontained}, 
and extended nearest-neighbour/Floquet
numerics are collected below in \S\ref{sec:S1}--\S\ref{sec:S8}.

\medskip




\medskip

\setcounter{section}{0}
\renewcommand{\thesection}{S\arabic{section}}
\numberwithin{equation}{section}

\begin{center}
{\Large\bfseries Supplementary Material}\\[4pt]
{\normalsize to ``Effective atom--dimer bands of lattice systems at strong
coupling: exact critical mass ratios and a statistics--quasimomentum
duality''}
\end{center}
\bigskip

\noindent This Supplement contains: the conventions of the position
representation (\S\ref{sec:S1}); complete proofs of the effective
Hamiltonians, of the exact threshold matrix, of the counting criterion
and of the onset laws (\S\ref{sec:S2}); a Feshbach--Schur remainder
estimate with explicit constants that makes the $\mu\to\infty$
reduction rigorous (\S\ref{sec:S3}); an independent numerical and
symbolic audit of every closed-form constant of the main text
(\S\ref{sec:S4}); 
nearest-neighbour and Floquet extensions
(\S\ref{sec:S6}); reproducibility information~(\S\ref{sec:S7}); and
the proof of Proposition~\ref{prop:mixed} (\S\ref{sec:S8}).
Theorem, equation and table numbers of the main text above are
cross-referenced automatically; where prose still says ``the main
text'' it refers to Sections~\ref{sec:setup}--\ref{sec:conclusion}
above.

\section{Conventions and the position representation}\label{sec:S1}

$\eps(\bp)=2-\cos p_1-\cos p_2$ on $\T^2=(-\pi,\pi]^2$; $\langle\cdot\rangle$
is the normalised Haar average; $\be$ ranges over the four unit vectors
$\pm\be_1,\pm\be_2$ (the \emph{first shell}). For the lattice Green
function we write
\[
G(\bx,\delta)=\bigl\langle e^{i\bp\cdot\bx}/(\eps(\bp)+\delta)\bigr\rangle,\qquad
b_0(\delta)=G(\mathbf0,\delta),\qquad R(\bx,\delta)=b_0(\delta)-G(\bx,\delta).
\]

\subsection*{Fibre in position space}
A three-particle state of total quasimomentum $\bK$ is
$\Psi(\mathbf r_1,\mathbf r_2,\mathbf r_3)=e^{i\bK\cdot\mathbf r_3}\psi(\bx,\by)$,
$\bx=\mathbf r_1-\mathbf r_3$, $\by=\mathbf r_2-\mathbf r_3$, with
$\psi\in\ell^2(\Z^2\times\Z^2)$, whose Fourier transform
$f(\bp,\bq)=\sum_{\bx,\by}\psi(\bx,\by)e^{-i(\bp\cdot\bx+\bq\cdot\by)}$
(so that $e^{i(\bp\cdot\bx+\bq\cdot\by)}$ carries momenta $\bp,\bq,\bK-\bp-\bq$) is
the momentum-space fibre function $f(\bp,\bq)$ of the main text. The
kinetic energy (particle $j$ hops to a neighbouring site with amplitude
$-\tfrac12t_j$, on-site energy $2t_j$) acts on the fibre as
\begin{equation}\label{S:Epos}
(E_\bK\psi)(\bx,\by)=2(t_1+t_2+t_3)\psi(\bx,\by)-\tfrac12\sum_{\be}\Bigl[
t_1\psi(\bx+\be,\by)+t_2\psi(\bx,\by+\be)
+t_3e^{-i\bK\cdot\be}\psi(\bx+\be,\by+\be)\Bigr],
\end{equation}
which is multiplication by $t_1\eps(\bp)+t_2\eps(\bq)+t_3\eps(\bK-\bp-\bq)$
in momentum space. Bosons: $t_1=t_2=t_3=1$. The $2+1$ system: particles
$1,2$ are identical fermions ($t_1=t_2=1$), particle $3$ has
$t_3=\gamma=m_f/m_3$.

\subsection*{Permutation symmetry}
The transpositions act on the fibre as
\begin{equation}\label{S:perm}
(\tau_{12}\psi)(\bx,\by)=\psi(\by,\bx),\qquad
(\tau_{13}\psi)(\bx,\by)=e^{i\bK\cdot\bx}\psi(-\bx,\by-\bx);
\end{equation}
indeed $\Psi(\mathbf r_3,\mathbf r_2,\mathbf r_1)=e^{i\bK\cdot\mathbf r_3}
e^{i\bK\cdot\bx}\psi(-\bx,\by-\bx)$. One checks $\tau_{13}^2=1$ directly
($e^{i\bK\cdot\bx}e^{-i\bK\cdot\bx}\psi(\bx,\by)$), and both are unitary
and commute with $E_\bK$ for bosons. The bosonic fibre is
$\{\psi:\tau_{12}\psi=\tau_{13}\psi=\psi\}$; the $2+1$ fibre is
$\{\psi:\tau_{12}\psi=-\psi\}$.

\subsection*{Contact interaction}
$V$ is multiplication by $\nu(\bx,\by)=[\by=0]+[\bx=0]+[\bx=\by]$ for
bosons and by $\nu_F(\bx,\by)=[\by=0]+[\bx=0]$ for $2+1$ fermions.
In the bosonic fibre $\nu\in\{0,1,3\}$, $\nu=3$ only at $(\mathbf0,\mathbf0)$.
In the fermionic fibre $\psi(\mathbf0,\mathbf0)=-\psi(\mathbf0,\mathbf0)=0$,
so $\nu_F\in\{0,1\}$ effectively: \emph{there is no trimer channel for
$2+1$ fermions}, and the atom--dimer sector is the lowest one. Let $\Pi_\nu$
denote the spectral projections; $\Pi_1$ is multiplication by the
indicator of the \emph{atom--dimer branches}
\[
\mathcal B_a=\{\by=0\ne\bx\},\quad \mathcal B_b=\{\bx=0\ne\by\},\quad
\mathcal B_c=\{\bx=\by\ne0\}\ (\text{bosons only}).
\]

\section{Proofs for Sections 4--7 of the main text}\label{sec:S2}

\subsection{The effective Hamiltonians (Lemma~\ref{lem:heff})}\label{sec:S21}

\begin{theorem}\label{thm:S-heff}
Let $h_1=\Pi_1E_\bK\Pi_1\restriction\Ran\Pi_1$.
\begin{enumerate}[label=(\roman*)]
\item (Bosons.) The map $U_B\varphi:=3^{-1/2}\psi$, where
$\psi(\bx,\mathbf0)=\varphi(\bx)$, $\psi(\mathbf0,\by)=\varphi(\by)$,
$\psi(\bx,\bx)=e^{i\bK\cdot\bx}\varphi(-\bx)$ ($\bx,\by\ne0$) and
$\psi=0$ elsewhere, is unitary from $\ell^2(\Z^2\setminus\{0\})$ onto
$\Ran\Pi_1\cap\{\text{bosonic fibre}\}$, and
$U_B^*h_1U_B=h_B:=4+\eps_D-S_\bK$.
\item (Fermions.) The map $U_F\varphi:=2^{-1/2}\psi$ with
$\psi(\bx,\mathbf0)=\varphi(\bx)$, $\psi(\mathbf0,\by)=-\varphi(\by)$ is
unitary onto $\Ran\Pi_1\cap\{\text{$2+1$ fibre}\}$, and
$U_F^*h_1U_F=h_F:=2+2\gamma+\eps_D+\tfrac\gamma2S_\bK$.
\end{enumerate}
Here $(\eps_D\varphi)(\bx)=2\varphi(\bx)-\frac12\sum_{\be:\,\bx+\be\ne0}\varphi(\bx+\be)$
and $(S_\bK\varphi)(\bx)=[\bx\in\{\pm\be_1,\pm\be_2\}]\,e^{i\bK\cdot\bx}\varphi(-\bx)$.
\end{theorem}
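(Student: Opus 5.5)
The plan is a direct computation in the position representation \eqref{S:Epos}, carried out in two parts: unitarity of $U_B$, $U_F$, then the matrix elements of $h_1$.

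\emph{Unitarity.} In the bosonic case I first check that $\psi=U_B\varphi$ lies in the bosonic fibre, i.e.\ $\tau_{12}\psi=\tau_{13}\psi=\psi$ by \eqref{S:perm}. The map $\tau_{12}$ swaps $\mathcal B_a\leftrightarrow\mathcal B_b$ and preserves $\mathcal B_c$. On $\mathcal B_c$ it is consistent because $\psi(\bx,\bx)$ is its own swap. The map $\tau_{13}$ sends $(\bx,\by)\mapsto(-\bx,\by-\bx)$, which swaps $\mathcal B_a\leftrightarrow\mathcal B_c$ and fixes $\mathcal B_b$. For example,
\[
(\tau_{13}\psi)(\bx,0)=e^{i\bK\cdot\bx}\psi(-\bx,-\bx)=e^{i\bK\cdot\bx}e^{-i\bK\cdot\bx}\varphi(\bx)=\varphi(\bx),
\]
and
\[
(\tau_{13}\psi)(0,\by)=\psi(0,\by).
\]
Since the three branches are disjoint, $\|\psi\|^2=3\|\varphi\|^2$, so $U_B$ is an isometry. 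Surjectivity follows because any symmetric $\psi$ supported on $\nu=1$ (Lemma~\ref{lem:Pi1}) is determined by its restriction $\varphi=\psi(\cdot,0)$ through the same symmetry relations. The fermionic case is identical with only $\mathcal B_a,\mathcal B_b$ and $\tau_{12}\psi=-\psi$. Here $\psi(0,0)=0$ is automatic, and $\psi(\bx,\bx)$ is not in the support because $\nu_F(\bx,\bx)=0$ for $\bx\ne0$.

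\emph{Matrix elements.} Since $U_B$ is unitary onto $\Ran\Pi_1\cap$(fibre), I compute $U_B^*h_1U_B\varphi(\bx)=(\Pi_1E_\bK\psi)(\bx,0)$. This works because the branch-$a$ value determines the symmetric image, and both $U_B^*$ and $U_B$ carry the same factor $3^{-1/2}$, which cancels since $\Pi_1E_\bK\psi$ is again symmetric. I then sort the hops in \eqref{S:Epos} at the point $(\bx,0)$, $\bx\ne0$:
\begin{itemize}[label=(\arabic*)]
\item[(1)] The diagonal term gives $6\varphi(\bx)$ (bosons) or $(4+2\gamma)\varphi(\bx)$ (fermions).
\item[(2)] Particle-1 hops go to $(\bx+\be,0)$. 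These stay on $\mathcal B_a$ unless $\bx+\be=0$, which is the $\nu=3$ site (bosons) or a vanishing site (fermions) and is killed by $\Pi_1$. This yields $-\tfrac12\sum_{\bx+\be\ne0}\varphi(\bx+\be)=\eps_D\varphi-2\varphi$.
\item[(3)] Particle-2 hops go to $(\bx,\be)$. This is on $\mathcal B_c$ iff $\be=\bx$, giving $-\tfrac12 e^{i\bK\cdot\bx}\varphi(-\bx)$ for bosons; it is absent for fermions.
\item[(4)] Particle-3 hops go to $(\bx+\be,\be)$. This is on $\mathcal B_b$ iff $\be=-\bx$, giving $-\tfrac12 t_3e^{i\bK\cdot\bx}\psi(0,-\bx)$, which equals $-\tfrac12e^{i\bK\cdot\bx}\varphi(-\bx)$ for bosons and $+\tfrac\gamma2 e^{i\bK\cdot\bx}\varphi(-\bx)$ for fermions.
\end{itemize}
Summing (1)--(4) gives $4+\eps_D-S_\bK$ and $2+2\gamma+\eps_D+\tfrac\gamma2S_\bK$ respectively.

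The main obstacle is the phase bookkeeping in step (4) and in the $\tau_{13}$ consistency check. One has to confirm that the factor $e^{-i\bK\cdot\be}$ with $\be=-\bx$ in \eqref{S:Epos} combines with the branch parametrisation to produce exactly $e^{i\bK\cdot\bx}$. It is also necessary to check that no other hops land on a $\nu=1$ branch, for instance particle-2 hops from $(\bx,0)$ to $(\bx,\be)$ landing on $\mathcal B_b$, which would require $\bx=0$, a contradiction.
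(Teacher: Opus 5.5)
Your proposal is correct and follows essentially the same route as the paper's proof. You parametrise the symmetric (resp.\ antisymmetric) elements of $\Ran\Pi_1$ by their values $\varphi=\psi(\cdot,\mathbf0)$ on $\mathcal B_a$ via $\tau_{12},\tau_{13}$, and then evaluate $(\Pi_1E_\bK\psi)(\bx,\mathbf0)$ hop by hop, with the particle-2 and particle-3 exchange terms landing on $\mathcal B_c$ and $\mathcal B_b$ exactly when $\bx$ is on the first shell. Your observation that the factors $3^{-1/2}$ cancel, so that $U_B^*h_1U_B\varphi=(\Pi_1E_\bK\psi)(\cdot,\mathbf0)$, is only a more explicit form of the paper's appeal to the $S_3$-invariance of $h_1$.
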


\begin{proof}
\emph{(i) The parametrisation.} Every $\psi\in\Ran\Pi_1$ is supported on the atom--dimer branches ---
$\mathcal B_a\cup\mathcal B_b\cup\mathcal B_c$ in the bosonic case,
and $\mathcal B_a\cup\mathcal B_b$ in the fermionic case. By \eqref{S:perm},
$\tau_{12}$ maps $\mathcal B_a\leftrightarrow\mathcal B_b$ and fixes
$\mathcal B_c$ pointwise, while $\tau_{13}$ maps
$\mathcal B_a\leftrightarrow\mathcal B_c$ ($(\bx,\mathbf0)\mapsto(-\bx,-\bx)$)
and fixes $\mathcal B_b$ pointwise ($(\mathbf0,\by)\mapsto(\mathbf0,\by)$ with
phase $e^{0}=1$). Hence a symmetric $\psi$ is determined by its values
$\varphi(\bx)=\psi(\bx,\mathbf0)$ on $\mathcal B_a$, through
$\psi(\mathbf0,\by)=\varphi(\by)$ and
$\psi(\bx,\bx)=(\tau_{13}\psi)(\bx,\bx)=e^{i\bK\cdot\bx}\psi(-\bx,\mathbf0)$.
Conversely, for any $\varphi$ this $\psi$ is invariant under both
generators: $\tau_{12}$ is immediate; for $\tau_{13}$ on $\mathcal B_c$,
$e^{i\bK\cdot\bx}\psi(-\bx,\mathbf0)=\psi(\bx,\bx)$ by definition, on
$\mathcal B_a$, $e^{i\bK\cdot\bx}\psi(-\bx,-\bx)=e^{i\bK\cdot\bx}
e^{-i\bK\cdot\bx}\varphi(\bx)$, and on $\mathcal B_b$ it is the identity.
No condition is imposed on $\varphi$ except $\bx\ne0$, and
$\|\psi\|^2=3\|\varphi\|^2$.

\emph{(ii) The action.} Since $h_1$ commutes with the $S_3$-action, it
suffices to evaluate $(E_\bK\psi)$ at $(\bx,\mathbf0)\in\mathcal B_a$ and keep
only those terms of \eqref{S:Epos} whose argument lies in the
branches (this is the effect of the right-hand $\Pi_1$):
\begin{itemize}
\item diagonal: $6\varphi(\bx)$;
\item hop of particle 1: argument $(\bx+\be,\mathbf0)$; it lies in
$\mathcal B_a$ iff $\bx+\be\ne0$ (for $\bx+\be=0$ it is the trimer site,
$\nu=3$). Contribution $-\frac12\sum_{\bx+\be\ne0}\varphi(\bx+\be)$;
\item hop of particle 2: argument $(\bx,\be)$; it lies in a branch iff
$\bx=\be$ (then in $\mathcal B_c$). Contribution
$-\frac12[\bx\in\text{shell}]\,\psi(\bx,\bx)
=-\frac12[\bx\in\text{shell}]\,e^{i\bK\cdot\bx}\varphi(-\bx)$;
\item hop of particle 3: argument $(\bx+\be,\be)$; it lies in a branch
iff $\bx+\be=0$ (then in $\mathcal B_b$; $\bx+\be=\be$ would need
$\bx=0$). With $\be=-\bx$: contribution
$-\frac12[\bx\in\text{shell}]\,e^{i\bK\cdot\bx}\psi(\mathbf0,-\bx)
=-\frac12[\bx\in\text{shell}]\,e^{i\bK\cdot\bx}\varphi(-\bx)$.
\end{itemize}
Adding, and using $6-\frac12\sum_{\bx+\be\ne0}=4+\eps_D$, gives
$h_B=4+\eps_D-S_\bK$. The two exchange processes are physically distinct
(the atom captures particle 2 or particle 3 of the dimer) and contribute
equally.

\emph{(iii) Fermions.} Now only $\mathcal B_a,\mathcal B_b$ occur,
$\tau_{12}$-antisymmetry gives $\psi(\mathbf0,\by)=-\varphi(\by)$ and
$\|\psi\|^2=2\|\varphi\|^2$. The diagonal is $4+2\gamma$; the particle-1
hops give $-\frac12\sum_{\bx+\be\ne0}\varphi(\bx+\be)$ (the target
$(\mathbf0,\mathbf0)$ is excluded because $\psi$ vanishes there); the
particle-2 hop lands at $(\bx,\be)$, which is never in $\mathcal B_a\cup
\mathcal B_b$ for $\bx\ne0$; the particle-3 hop has amplitude
$-\frac\gamma2$ and lands in $\mathcal B_b$ iff $\bx+\be=0$, contributing
$-\frac\gamma2[\bx\in\text{shell}]e^{i\bK\cdot\bx}\psi(\mathbf0,-\bx)
=+\frac\gamma2[\bx\in\text{shell}]e^{i\bK\cdot\bx}\varphi(-\bx)$. Hence
$h_F=(4+2\gamma-2)+\eps_D+\frac\gamma2S_\bK$.
\end{proof}

\begin{remark}
(a) $S_\bK$ is a self-adjoint partial isometry with $S_\bK^2=P_{\rm sh}$,
and $S_{\bK+\bpi}=-S_\bK$ because $e^{i\bpi\cdot\be}=-1$ on the shell;
this is the whole content of the duality, Theorem~\ref{thm:dual}.
(b) $0\le\eps_D\le4$, $\spec_{\rm ess}(\eps_D)=[0,4]$, and $S_\bK$ is of
rank four; hence $\spec_{\rm ess}(h_B)=[4,8]$,
$\spec_{\rm ess}(h_F)=[2+2\gamma,6+2\gamma]$, $\spec(h_B)\subset[3,9]$.
\end{remark}

\subsection{Threshold matrix and counting (Lemma~\ref{lem:M}, Theorem~\ref{thm:count})}\label{sec:S22}

\begin{lemma}[Krein formula]\label{lem:S-krein}
For $\delta>0$ and $\bx,\by\ne0$,
$(\eps_D+\delta)^{-1}(\bx,\by)=G(\bx-\by,\delta)-G(\bx,\delta)G(\by,\delta)/b_0(\delta)$.
\end{lemma}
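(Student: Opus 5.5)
We prove the Krein formula by a direct rank-one argument. The plan is to exhibit a candidate Green function on $\Z^2$ that solves the resolvent equation for the Laplacian with the site $\mathbf0$ removed, and then identify it with $(\eps_D+\delta)^{-1}$.

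Fix $\by\ne0$ and set
\[
g_\by(\bx):=G(\bx-\by,\delta)-\frac{G(\bx,\delta)G(\by,\delta)}{b_0(\delta)},\qquad \bx\in\Z^2 .
\]
\begin{enumerate}[label=(\arabic*)]
\item \emph{Dirichlet condition.} At $\bx=\mathbf0$ we get $g_\by(\mathbf0)=G(-\by,\delta)-b_0G(\by,\delta)/b_0=0$, because $G$ is even: the symbol $\eps(\bp)$ is even in $\bp$, so $G(-\by,\delta)=G(\by,\delta)$.
\item \emph{Equation on $\Z^2$.} Let $L=\eps+\delta$ act on $\ell^2(\Z^2)$ by $(L f)(\bx)=(2+\delta)f(\bx)-\tfrac12\sum_\be f(\bx+\be)$. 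The Fourier definition of $G$ gives $LG(\cdot-\by,\delta)=\delta_\by$ and $LG(\cdot,\delta)=\delta_{\mathbf0}$. Hence
\[
Lg_\by=\delta_\by-\frac{G(\by,\delta)}{b_0}\,\delta_{\mathbf0}.
\]
\item \emph{Passage to $\eps_D$.} For $\bx\ne0$, compare $(\eps_D+\delta)$ applied to the restriction $g_\by|_{\Z^2\setminus\{0\}}$ with $Lg_\by$ evaluated at $\bx$. The two expressions differ only by the omitted neighbour term $-\tfrac12 g_\by(\mathbf0)[\bx+\be=0]$, which vanishes by step (1). Therefore $(\eps_D+\delta)g_\by=\delta_\by$ on $\Z^2\setminus\{0\}$.
\item \emph{Uniqueness.} The function $g_\by$ lies in $\ell^2$ since $\delta>0$: $G(\cdot,\delta)$ is the Fourier coefficient sequence of the smooth function $1/(\eps+\delta)$, so it decays exponentially. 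Moreover $\eps_D+\delta\ge\delta>0$ is boundedly invertible. Hence $g_\by=(\eps_D+\delta)^{-1}\delta_\by$, and evaluating at $\bx$ gives exactly the stated formula.
\end{enumerate}

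A possible obstacle is the operator-theoretic identification in step (3), specifically matching the convention in the definition of $\eps_D$. With $(\eps_D\varphi)(\bx)=2\varphi(\bx)-\tfrac12\sum_{\bx+\be\ne0}\varphi(\bx+\be)$, $\eps_D$ is the compression $P L P$ with $P$ the projection onto $\ell^2(\Z^2\setminus\{0\})$. Equivalently, $\eps_D$ is $L$ (at $\delta=0$) applied to the zero-extension, restricted back. With this reading, step (3) is immediate.

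As a cross-check, and a second route, the same formula follows from the Schur-complement identity for the block decomposition $\ell^2(\Z^2)=\ell^2(\{0\})\oplus\ell^2(\Z^2\setminus\{0\})$: $(PLP)^{-1}=PL^{-1}P-PL^{-1}e_0\,(e_0^*L^{-1}e_0)^{-1}\,e_0^*L^{-1}P$. Here $e_0^*L^{-1}e_0=b_0(\delta)>0$, so this route gives the same answer without any computation.
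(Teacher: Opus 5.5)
Your proof is correct and follows the paper's own argument step for step. You verify the Dirichlet condition $g_\by(\mathbf0)=0$ from the evenness of $G$, apply $\eps+\delta$ to obtain $\delta_\by-\frac{G(\by,\delta)}{b_0}\delta_{\mathbf0}$, observe that on functions vanishing at $\mathbf0$ the operator $\eps$ restricted to $\Z^2\setminus\{0\}$ is $\eps_D$, and conclude from $g_\by\in\ell^2$ together with $\eps_D+\delta\ge\delta>0$. Your Schur-complement cross-check is also valid: the inverse of the compression of $L$ onto $\ell^2(\Z^2\setminus\{0\})$ is the Schur complement of the $\{\mathbf0\}$-block of $L^{-1}$. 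That is simply the block-matrix form of the same rank-one identity.
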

\begin{proof}
Let $g(\bx,\by)$ denote the right-hand side, extended to $\bx=0$. Then
$g(\mathbf0,\by)=0$, and, since $(\eps+\delta)G(\cdot-\by)=\delta_{\by}$
and $(\eps+\delta)G(\cdot)=\delta_{\mathbf0}$, we get
$((\eps+\delta)g(\cdot,\by))(\bx)=\delta_{\bx\by}$ for $\bx\ne0$. The
restriction to $\Z^2\setminus\{0\}$ of $\eps$ applied to a function
vanishing at $0$ is exactly $\eps_D$; $g(\cdot,\by)\in\ell^2$ for
$\delta>0$. Uniqueness follows from $\eps_D+\delta\ge\delta>0$.
\end{proof}

The eigen-decomposition of $M(\delta)$, the identity $A+2B=2$ and the
limit \eqref{S:Mlim} are proved in the main text (Lemma~\ref{lem:M}) from
\begin{equation}\label{S:green-id}
(2+\delta)b_0-2g_1=1,\qquad (2+\delta)g_1=\tfrac12(b_0+g_2+2g_{11}),
\end{equation}
which are the lattice equation $(\eps+\delta)G=\delta_{\mathbf0}$ at
$\mathbf0$ and at $\be_1$ (we use $G(\pm\be_j)=g_1$,
$G(\pm2\be_j)=g_2$, $G(\pm\be_1\pm\be_2)=g_{11}$ by symmetry). For
reference, the limit is
\begin{equation}\label{S:Mlim}
M(\delta)=M_0-\frac1{4b_0(\delta)}\mathbf1\mathbf1^T+\mathcal E(\delta),
\qquad \|\mathcal E(\delta)\|\le C\,\delta\ln\tfrac1\delta\quad(0<\delta<\tfrac12),
\end{equation}
where we used $R_1=(1-\delta b_0)/2$, so $R_1^2/b_0=\displaystyle\frac1{4b_0}-\frac\delta2
+\frac{\delta^2b_0}4$, and $|R(\bx,\delta)-R(\bx)|\le C_\bx\,\delta\ln\frac1\delta$
for $\bx$ in the second shell; the latter follows from
$R(\bx,\delta)-R(\bx)=-\delta\langle(1-\cos\bp\cdot\bx)/(\eps(\eps+\delta))\rangle$
and $(1-\cos\bp\cdot\bx)/\eps$ bounded.

\begin{lemma}[Asymptotics of $b_0$]\label{lem:S-b0}
$b_0(\delta)=\frac{2}{\pi a}\Ell(2/a)$ with $a=2+\delta$ and $\Ell$ the
complete elliptic integral of the first kind (modulus $k=2/a$). Hence
$b_0(\delta)=\displaystyle\frac1{2\pi}\ln\frac{16}{\delta}+O(\delta\ln\frac1\delta)$.
\end{lemma}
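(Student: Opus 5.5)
The plan has two parts: first establish the closed form of $b_0(\delta)$ as an elliptic integral, then extract the logarithmic asymptotics from the standard expansion of $\Ell$ near $k=1$.

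\emph{Closed form.} Write $a=2+\delta$ and $b_0=\langle 1/(a-\cos p_1-\cos p_2)\rangle$. Integrate over $p_2$ first, using the elementary formula
\[
\frac1{2\pi}\int_{-\pi}^{\pi}\frac{dp}{c-\cos p}=\frac1{\sqrt{c^2-1}}\qquad(c>1).
\]
Applied with $c=a-\cos p_1>1$, this gives
\[
b_0=\frac1{2\pi}\int_{-\pi}^{\pi}\frac{dp_1}{\sqrt{(a-\cos p_1)^2-1}} .
\]
Next factor the radicand as $(a-1-\cos p_1)(a+1-\cos p_1)$. With the half-angle substitution $t=\sin(p_1/2)$, the factors become $a-2+2t^2$ and $a+2t^2$, and the integral takes the standard form $\int_0^1 dt/\sqrt{(1-t^2)(\alpha+t^2)(\beta+t^2)}$. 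This reduces to $\Ell$ via a Landen/Gauss-type transformation. Alternatively, expand $1/(a-\cos p_1-\cos p_2)$ as the geometric series $\sum_n a^{-n-1}(\cos p_1+\cos p_2)^n$. Only even $n=2m$ survive, with $\langle(\cos p_1+\cos p_2)^{2m}\rangle=\binom{2m}{m}^2 4^{-m}$, since this is the number of closed walks on $\Z^2$ divided by $4^{2m}$. Comparing with $\Ell(k)=\frac\pi2\sum_m\binom{2m}{m}^2(k/4)^{2m}$ at $k=2/a$ gives $b_0=\frac1a\cdot\frac2\pi\Ell(2/a)$ term by term. This second route is cleaner, and I would use it, justifying the interchange of sum and average by absolute convergence since $|\cos p_1+\cos p_2|\le2<a$.

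\emph{Asymptotics.} Use $\Ell(k)=\ln\frac4{k'}+O(k'^2\ln k')$ as $k\uparrow1$, where $k'=\sqrt{1-k^2}$. For $k=2/a$ we have $k'^2=(a^2-4)/a^2=\delta(4+\delta)/a^2$, so $k'=\sqrt\delta\,(1+O(\delta))$. Hence $\ln\frac4{k'}=\frac12\ln\frac{16}\delta+O(\delta)$, while the remainder term is $O(\delta\ln\frac1\delta)$. Multiplying by the prefactor $\frac2{\pi a}=\frac1\pi(1-\frac\delta2+O(\delta^2))$ gives $\frac1{2\pi}\ln\frac{16}\delta+O(\delta\ln\frac1\delta)$, because the correction $\frac\delta2\cdot\ln\frac1\delta$ is also of that order.

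The main obstacle is only bookkeeping. One must verify the combinatorial identity $\langle(\cos p_1+\cos p_2)^{2m}\rangle=\binom{2m}{m}^2/4^{m}$. The rotation $p_\pm=p_1\pm p_2$ factorises $\cos p_1+\cos p_2=2\cos\frac{p_+}2\cos\frac{p_-}2$, and each factor then contributes $\binom{2m}{m}/4^m$. One must also quote the expansion of $\Ell$ near $k=1$ with an explicit $O(k'^2\ln k')$ remainder, which is classical (Abramowitz--Stegun 17.3.26).
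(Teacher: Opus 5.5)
Your proposal is correct. The asymptotic half is the same as the paper's argument: the paper also substitutes $k'=\sqrt{\delta(4+\delta)}/a$ into $\Ell(k)=\ln\frac4{k'}+O(k'^2\ln\frac1{k'})$ and multiplies by $\frac{2}{\pi a}=\frac1\pi(1-\frac\delta2+O(\delta^2))$.

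You differ from the paper on the closed form. The paper integrates out one momentum using $\frac1{2\pi}\int_{-\pi}^{\pi}d\theta/(\alpha-\cos\theta)=(\alpha^2-1)^{-1/2}$, then integrates over the other. It cites the identification of the remaining one-dimensional integral with $\frac{2}{\pi a}\Ell(2/a)$ as standard. That is essentially your first route, with the Landen-type reduction left to the literature. Your geometric-series route proves the identity outright:
\begin{itemize}
\item expand $b_0$ in powers of $1/a$, with the interchange justified by uniform convergence for $a>2$;
\item use $\langle(\cos p_1+\cos p_2)^{2m}\rangle=\binom{2m}{m}^2 4^{-m}$;
\item match term by term with the series of $\Ell$ at $k=2/a$.
\end{itemize}
In the rotation check, the average of $\cos^{2m}(p_+/2)\cos^{2m}(p_-/2)$ factorises because each factor is a $2\pi$-periodic function of $p_\pm$, so only the zeroth Fourier modes survive. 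This makes your argument self-contained; it needs nothing beyond the hypergeometric series of $\Ell$. The paper's route yields the intermediate integral $\frac1{2\pi}\int_{-\pi}^{\pi}dp/\sqrt{(a-\cos p)^2-1}$, which is the form the paper actually uses for its one-dimensional quadratures of $G$ and $A(\delta)$.

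One small slip: your parenthetical should say ``the number of closed walks of length $2m$ divided by $2^{2m}$'', since each $\cos p_j=\frac12(e^{ip_j}+e^{-ip_j})$ contributes a factor $\frac12$. Dividing by $4^{2m}$ gives the return probability $\binom{2m}{m}^2/16^m$ instead. Your displayed formula and your rotation check are nonetheless correct.
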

\begin{proof}
Integrating over $p_1$ with
$\displaystyle\frac1{2\pi}\int_{-\pi}^{\pi}\frac{d\theta}{\alpha-\cos\theta}=(\alpha^2-1)^{-1/2}$
and then over $p_2$ gives the closed form (standard; see e.g.\
\cite{Cserti}). With $k'=\sqrt{1-k^2}=\sqrt{\delta(4+\delta)}/a$ and
$\Ell(k)=\ln\frac4{k'}+O(k'^2\ln\frac1{k'})$ one gets
$b_0=\displaystyle\frac1\pi(1-\frac\delta2+O(\delta^2))\cdot\frac12\ln\frac{16a^2}{\delta(4+\delta)}
=\frac1{2\pi}\ln\frac{16}\delta+O(\delta\ln\frac1\delta)$.
\end{proof}

\begin{proof}[Proof of the counting criterion (Theorem~\ref{thm:count})]
Let $h=\eps_D+C$ with $C=P_{\rm sh}CP_{\rm sh}$ self-adjoint of rank
$\le4$, and $\delta>0$. Then
$h+\delta=(\eps_D+\delta)^{1/2}(I+Y)(\eps_D+\delta)^{1/2}$,
$Y=TCT^*$, $T=(\eps_D+\delta)^{-1/2}P_{\rm sh}$. By Sylvester's law of
inertia for bounded invertible congruences, the dimension of the
spectral subspace of $h+\delta$ for $(-\infty,0)$ equals that of $I+Y$.
$Y$ has finite rank, and its non-zero eigenvalues (with multiplicity)
coincide with those of $CT^*T=CM(\delta)$, i.e.\ with those of
$M(\delta)^{1/2}CM(\delta)^{1/2}$. Hence
$N(\delta):=\#\{\text{eigenvalues of }h\text{ below }-\delta\}
=\#\{\text{eigenvalues of }M(\delta)^{1/2}CM(\delta)^{1/2}<-1\}$.
$N(\delta)$ is non-increasing in $\delta$ and $N(0^+)$ is the number of
eigenvalues of $h$ below $0$ (the bottom of $\spec_{\rm ess}$). By
\eqref{S:Mlim}, $M(\delta)\to M_0$, and if $-1$ is not an eigenvalue of
$M_0^{1/2}CM_0^{1/2}$ (equivalently,
$-1\notin\spec(M_0C)$, which holds iff $\bK$ is not on the zone boundary), continuity of eigenvalues gives
$N(0^+)=\#\{\text{eigenvalues of }M_0^{1/2}CM_0^{1/2}<-1\}$. Finally
$M_0^{1/2}CM_0^{1/2}$ and $M_0C$ have the same spectrum.
\end{proof}

\subsection{Onset laws at symmetric points (Theorem~\ref{thm:crit})}\label{sec:S23}

\begin{proof}[Proof of Theorem~\ref{thm:crit}(a)]
At $\bK=\bpi$, $S_{\bpi}\mathbf1=-\mathbf1$ and $M(\delta)\mathbf1=m_s(\delta)\mathbf1$,
$m_s=2+\delta-1/b_0$. For $h_F$ the $s$-channel condition is
$\frac\gamma2m_s(\delta)=1$, i.e.
$\Phi(\delta):=\frac1{b_0(\delta)}-\delta=\frac{2(\gamma-1)}{\gamma}$.
$\Phi$ is continuous, increasing on $(0,\delta_0)$ for small $\delta_0$
(since $b_0'<0$ and $|b_0'|/b_0^2\gg1$ there) with $\Phi(0^+)=0$; so
for $\gamma\downarrow1$ there is a unique small root. Lemma~\ref{lem:S-b0}
gives $\frac{1}{b_0}=\frac{2\pi}{\ln(16/\delta)}\bigl(1+O(\delta)\bigr)$, hence
$\ln\frac{16}\delta=\frac{\pi\gamma}{\gamma-1}\bigl(1+O(\delta\ln\tfrac1\delta)\bigr)$;
since $\delta$ is exponentially small in $1/(\gamma-1)$, the error in the
exponent is $o(1)$ and \eqref{S:Mlim}'s $\mathcal E$ does not enter
($\mathbf1$ is an exact eigenvector). The other channels at $\bK=\bpi$ do not bind for $\gamma$ near $1$:
the two $x=A$ channels have $A=2-4/\pi>0$, for which
$\tfrac\gamma2 x<-1$ is impossible when $\gamma>0$; the channel
$x=-(8/\pi-2)\approx-0.546$ requires $\gamma>2/(8/\pi-2)=\pi/(4-\pi)
\approx3.66$.
\end{proof}

\begin{proof}[Proof of Theorem~\ref{thm:crit}(b)]
At $\bK=\mathbf0$ the vectors $(1,-1,0,0)$, $(0,0,1,-1)$ are exact
eigenvectors of $S_{\mathbf0}$ ($-1$) and of $M(\delta)$ ($A(\delta)$),
so the condition is $A(\delta)=2/\gamma$, with multiplicity two.
$A(\delta)=\langle2\sin^2p_1/(\eps+\delta)\rangle$, and
\[
-A'(\delta)=\Bigl\langle\frac{2\sin^2p_1}{(\eps+\delta)^2}\Bigr\rangle
=\frac1\pi\ln\frac1\delta+c_A+o(1)\qquad(\delta\downarrow0).
\]
The limit $c_A$ exists: on $|\bp|<r$ the difference between the integrand and the model
$2p_1^2/(\bp^2/2+\delta)^2$ is bounded uniformly in $\delta$ (since
$2\sin^2p_1-2p_1^2=O(|\bp|^4)$, $\bp^2/2-\eps(\bp)=O(|\bp|^4)$, $\eps(\bp)\ge c|\bp|^2$)
and converges pointwise, so dominated convergence applies; the model integral equals
$\displaystyle\frac1\pi\bigl[\ln\frac{r^2/2+\delta}{\delta}-1+\frac{\delta}{r^2/2+\delta}\bigr]$, and
the part $|\bp|\ge r$ is continuous at $\delta=0$. Integrating from $0$ to $\delta$ with
$\displaystyle\int_0^\delta\ln\frac1s\,ds=\delta\ln\frac1\delta+\delta$ gives
$A(\delta)=A-\displaystyle\frac\delta\pi(\ln\frac1\delta+\kappa_0)+o(\delta)$, where
$\kappa_0=1+\pi c_A$ is a lattice constant, evaluated in \S\ref{sec:S42} as
$\kappa_0=0.63100\ldots$. With $A-2/\gamma=2/\gamma_c-2/\gamma
=2(\gamma-\gamma_c)/(\gamma\gamma_c)$ one obtains the implicit law \eqref{eq:onset-p}; inverting, $\displaystyle\ln\frac1{\delta_F}=\ln\frac1{\gamma-\gamma_c}+O(\ln\ln\frac1{\gamma-\gamma_c})$,
which yields the explicit leading-log form with coefficient
$2\pi/\gamma_c^2=2(\pi-2)^2/\pi$ and relative error
$O(\displaystyle\ln\ln\frac1h/\ln\frac1h)$, $h=\gamma-\gamma_c$ (cf.\ Table~\ref{tab:crit}).
\end{proof}

\subsection{Zone-boundary singularity (Theorem~\ref{thm:boundary})}\label{sec:S24}

\begin{proof}
Write $M(\delta)=M_\epsilon+\mathcal E(\delta)$ with
$M_\epsilon=M_0-\epsilon\mathbf1\mathbf1^T$, $\epsilon=1/(4b_0(\delta))$.
For $\epsilon$ small $M_\epsilon>0$ (the smallest eigenvalue of $M_0$ is
$8/\pi-2>0$ and $\mathbf1\mathbf1^T$ only lowers $m_s=2$ to $2-4\epsilon$).
The eigenvalue problem $M_\epsilon\tilde Sv=\xi v$, $\tilde S=S_\bK$
(bosons) or $\tilde S=-\frac{\gamma}{2}S_\bK$ (fermions),
normalised so that the zone is $\xi>1$, is equivalent to the Hermitian
definite pencil $\tilde Sv=\xi M_\epsilon^{-1}v$. For a simple eigenvalue,
$\xi=\langle v,\tilde Sv\rangle/\langle v,M_\epsilon^{-1}v\rangle$ and the
Hellmann--Feynman formula with
$\partial_\epsilon M_\epsilon^{-1}=M_\epsilon^{-1}\mathbf1\mathbf1^TM_\epsilon^{-1}$ gives
\[
\partial_\epsilon\xi\big|_{\epsilon=0}=-\xi\,
\frac{|\langle\mathbf1,M_0^{-1}v\rangle|^2}{\langle v,M_0^{-1}v\rangle}=-c(\bK).
\]
$\xi$ is real-analytic in $(\epsilon,\bK)$ near $(0,\bK_0)$ (simple
eigenvalue of a Hermitian pencil). Since $\epsilon(\delta)$ is
monotone and $\|\mathcal E(\delta)\|=\displaystyle O(\delta\ln\frac1\delta)=O(e^{-\pi/(2\epsilon)}/\epsilon)$
is smaller than any power of $\epsilon$, the binding condition
$\xi(M(\delta))=1$ has, for $\xi(\bK)-1>0$ small, a unique root with
$\epsilon^*=\displaystyle\frac{\xi(\bK)-1}{c(\bK)}+O((\xi-1)^2)$. By Lemma~\ref{lem:S-b0},
$\displaystyle\ln\frac{16}{\delta}=\displaystyle 2\pi b_0+O(\delta\ln\frac1\delta)=\displaystyle\frac{\pi}{2\epsilon^*}+o(1)
=\displaystyle\frac{\pi c(\bK)}{2(\xi(\bK)-1)}+O(1)$.
\end{proof}

\section{Remainder estimate: Feshbach--Schur reduction (Theorem~\ref{thm:feshbach})}\label{sec:S3}

\begin{lemma}\label{lem:S-incl}
If $A,B$ are bounded self-adjoint and $\|B-c\|\le r$, then
$\spec(A+B)\subset\{\lambda:\dist(\lambda,\spec A+c)\le r\}$.
\end{lemma}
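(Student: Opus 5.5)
We prove the inclusion by a direct resolvent bound. It suffices to show that every $\lambda$ with $d:=\dist(\lambda,\spec A+c)>r$ lies in the resolvent set of $A+B$. Such a $\lambda$ satisfies $\lambda-c\notin\spec A$, so we may factor
\[
A+B-\lambda=(A+c-\lambda)+(B-c)=(A+c-\lambda)\bigl[I+(A+c-\lambda)^{-1}(B-c)\bigr].
\]
The first factor is boundedly invertible. The only input needed to control the bracket is the spectral theorem for the bounded self-adjoint operator $A$. It gives
\[
\|(A+c-\lambda)^{-1}\|=\frac{1}{\dist(\lambda-c,\spec A)}=\frac1{d}.
\]
Here we use that $\dist(\lambda-c,\spec A)=\dist(\lambda,\spec A+c)=d$.

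This estimate holds for complex $\lambda$ as well as real ones, so no restriction to real $\lambda$ is needed. In any case $\spec(A+B)\subset\mathbb R$, because $A+B$ is self-adjoint.

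Next we bound the perturbation inside the bracket:
\[
\|(A+c-\lambda)^{-1}(B-c)\|\le \frac rd<1.
\]
Because this norm is below $1$, the Neumann series shows that the bracket $I+(A+c-\lambda)^{-1}(B-c)$ is invertible. Hence $A+B-\lambda$ is a product of two invertible operators and is itself invertible, so $\lambda\notin\spec(A+B)$. Taking the contrapositive gives exactly the claimed inclusion
\[
\spec(A+B)\subset\{\lambda:\dist(\lambda,\spec A+c)\le r\}.
\]

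There is no real obstacle in this argument. The one point to state carefully is the exact resolvent norm identity, which requires self-adjointness (or at least normality) of $A$. The operator $B-c$ need not be self-adjoint or commute with anything, since only its norm enters the estimate.

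In the application, the lemma is used with $A=h_1$, $c=-\mu$, and $B$ a Schur-complement remainder satisfying $\|B-c\|\le\eta_B$. This yields Theorem~\ref{thm:feshbach}(a).
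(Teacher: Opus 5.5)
Your proof is correct and is essentially the paper's argument. Both factor $A+B-\lambda=(A+c-\lambda)\bigl(I+(A+c-\lambda)^{-1}(B-c)\bigr)$, bound the resolvent of the self-adjoint $A$ by $1/\dist(\lambda,\spec A+c)<1/r$, and invert the bracket by a Neumann series; you also spell out the spectral-theorem resolvent identity, which the paper leaves implicit.
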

\begin{proof}
If $\dist(\lambda,\spec A+c)>r$ then $\|(A+c-\lambda)^{-1}\|<1/r$ and
$A+B-\lambda=(A+c-\lambda)\bigl(1+(A+c-\lambda)^{-1}(B-c)\bigr)$ is
invertible by a Neumann series.
\end{proof}

\begin{theorem}[restating Theorem~\ref{thm:feshbach}]\label{thm:S-fesh}
Bosonic fibre, $\mu>16$, $\eta_B=36/(\mu-10)$, window
$W=[-\mu+2,-\mu+10]$.
\begin{enumerate}[label=(\alph*)]
\item $\spec H_\mu(\bK)\cap W\subset-\mu+\spec h_B(\bK)+[-\eta_B,\eta_B]$.
\item If $e=4-\delta$ is an eigenvalue of $h_B(\bK)$ of multiplicity
$m$, with $\delta>2\eta_B$ and $\dist(e,\spec h_B\setminus\{e\})>2\eta_B$,
then $H_\mu(\bK)$ has exactly $m$ eigenvalues (with multiplicity) in
$I=[-\mu+e-\eta_B,-\mu+e+\eta_B]$.
\end{enumerate}
The same holds for the $2+1$ fibre with $h_F$, $\eta_F=(4+2\gamma)^2/(\mu-7-3\gamma)$,
window $-\mu+[1+\tfrac32\gamma,\,7+3\gamma]$ (which contains $-\mu+\spec h_F\subset-\mu+[2+\tfrac32\gamma,\,6+\tfrac52\gamma]$ with margin), and $\mu>2(7+3\gamma)$.
\end{theorem}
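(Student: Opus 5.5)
The plan is the standard Feshbach--Schur argument. Take $P=\Pi_1$ and $Q=1-P$ on the bosonic fibre, and write $H_\mu=E_\bK-\mu V$. Since $V$ commutes with $P$, we have $PH_\mu P=-\mu+h_1$, and by Theorem~\ref{thm:S-heff} the operator $h_1$ is unitarily equivalent to $h_B$. The off-diagonal block is $PH_\mu Q=PE_\bK Q=P(E_\bK-6)Q$, because the constant $6$ commutes with $P$. From \eqref{S:Epos}, $E_\bK-6$ is a sum of three hopping operators, each with norm $\le 2$, so $\|PH_\mu Q\|\le6$. On $\Ran Q$ the function $\nu$ takes the values $0$ or $3$, so $QH_\mu Q=QE_\bK Q-\mu(3\Pi_3)$. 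Since $0\le E_\bK\le12$, a decomposition through $\Pi_3\oplus\Pi_0$ gives $\spec(QH_\mu Q)\subset[-3\mu,-3\mu+12]\cup[0,12]$. More carefully, $Q=\Pi_0+\Pi_3$ with $\Pi_3$ of rank one, and $\Pi_0E\Pi_3$ couples the two pieces; I would bound this coupling by norm $\le6$ and absorb it with Lemma~\ref{lem:S-incl}, which is harmless since $\mu>16$. For $z$ in the window $W=[-\mu+2,-\mu+10]$, the distance from $z$ to $\spec(QH_\mu Q)$ is then at least $\mu-10$ (up to the coupling correction), so $R(z)=(QH_\mu Q-z)^{-1}$ exists and satisfies $\|R(z)\|\le1/(\mu-10)$.

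Next define the Schur complement $F(z)=-\mu+h_1-PH_\mu Q\,R(z)\,QH_\mu P$. Its correction term has norm $\le36/(\mu-10)=\eta_B$. Its derivative is $F'(z)=-PH_\mu Q\,R(z)^2\,QH_\mu P$, which is $\le0$ and has norm $\le\eta_B/(\mu-10)$. Feshbach--Schur isospectrality gives $z\in\spec H_\mu$ iff $z\in\spec F(z)$, with $\dim\ker(H_\mu-z)=\dim\ker(F(z)-z)$. For part (a), take $z\in\spec H_\mu\cap W$, so $z\in\spec F(z)$. Apply Lemma~\ref{lem:S-incl} with $A=h_1$, $B=F(z)-h_1$ and $c=-\mu$ to get $\dist(z,-\mu+\spec h_B)\le\eta_B$. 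Essential spectrum also has to be handled. I would use the Weyl-sequence version of the same inclusion, or note that the Schur complement identity applies to the resolvent for every $z\notin\spec F(z)$; either shows that $z\notin\spec H_\mu$ whenever $F(z)-z$ is invertible.

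For part (b), fix the isolated eigenvalue $e$ of multiplicity $m$ and the interval $I$. For $z$ near $I$, the spectrum of $F(z)$ near $-\mu+e$ consists, by Kato's perturbation theory for the isolated group (bounded perturbation of norm $\le\eta_B$ with gap $>2\eta_B$), of exactly $m$ eigenvalues $\lambda_1(z)\le\dots\le\lambda_m(z)$, counted with multiplicity and all lying in $I$. The rest of $\spec F(z)$ stays outside the $\eta_B$-neighbourhood of $I$. Each $\lambda_j$ is monotone nonincreasing in $z$ and Lipschitz with constant $\eta_B/(\mu-10)<1$ (Hellmann--Feynman, or min-max applied to $F'\le0$). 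So $z\mapsto\lambda_j(z)-z$ is strictly decreasing on $I$. It is $\ge0$ at the left endpoint and $\le0$ at the right endpoint, since $\lambda_j\in I$. Hence it has exactly one zero $z_j$, which is a fixed point. Multiplicities add up correctly by the kernel-dimension identity: at a value $z$ equal to several $z_j$, $\dim\ker(F(z)-z)$ equals the number of $j$ with $\lambda_j(z)=z$. This gives exactly $m$ eigenvalues of $H_\mu$ in $I$. These lie below $\inf\spec_{\rm ess}H_\mu\cap W$: by (a), any essential spectrum of $H_\mu$ in $W$ lies within $\eta_B$ of $-\mu+[4,8]$, and $\delta>2\eta_B$ separates it from $I$.

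The $2+1$ case follows the same scheme. The hopping norms change to $2+2+2\gamma$, which gives $\|PH_\mu Q\|\le4+2\gamma$. There is no trimer channel for fermions; for the $2+1$ bosonic mixture there is a $\nu=2$ piece near $-2\mu$. The distance to the window is at least $\mu-7-3\gamma$, and this yields $\eta_F$. The \textbf{main obstacle} I expect is the rigorous control of $\spec(QH_\mu Q)$ and the exclusion of spurious spectrum in (a) coming from the essential spectrum. The rough inclusion $[-3\mu,-3\mu+12]\cup[0,12]$ requires handling the coupling between the $\nu=3$ and $\nu=0$ sectors. My first attempt would be to treat that coupling as a perturbation of norm $\le6$ (extra margin is available because $\mu>16$), tightening the constants only if the stated $36/(\mu-10)$ demands it.
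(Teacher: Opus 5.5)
Your proposal follows the paper's proof step for step: the blocks $PHP=-\mu+h_1$ and $\|PHQ\|\le6$, the Schur complement with $\|F(z)+\mu-h_1\|\le\eta_B$ and $-\eta_B/d_\mu\le F'(z)\le0$, Feshbach--Schur isospectrality, Lemma~\ref{lem:S-incl} for (a), the monotone fixed-point count for (b), and the same changes of constants for the $2+1$ fibres. The obstacle you anticipate does not arise, because the paper applies Lemma~\ref{lem:S-incl} with $A=-\mu QVQ=-3\mu\Pi_3$ (spectrum $\{-3\mu,0\}$) and the whole of $B=QEQ$, coupling included, with $c=r=6$ (from $0\le QEQ\le12$); this gives $[-3\mu,-3\mu+12]\cup[0,12]$ and $d_\mu=\mu-10$ exactly, whereas splitting off the $\Pi_0$--$\Pi_3$ coupling as a separate norm-$6$ perturbation would cost $6$ in the distance and degrade $\eta_B$ to $36/(\mu-16)$; the hypothesis $\mu>16$ is what makes the Lipschitz constant $\eta_B/d_\mu=36/(\mu-10)^2<1$.
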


\begin{proof}
Let $P=\Pi_1$, $Q=1-P=\Pi_0+\Pi_3$, $H=H_\mu(\bK)=E-\mu V$, $E=E_\bK$.

\emph{Step 1 (blocks).} $PHP=-\mu P+PEP=-\mu+h_1$ on $\Ran P$, and
$h_1\cong h_B$ by Theorem~\ref{thm:S-heff}. Since $0\le E\le12$,
$\|E-6\|\le6$ and $PHQ=PEQ=P(E-6)Q$ has $\|PHQ\|\le6$. On $\Ran Q$,
$-\mu QVQ$ has spectrum $\{-3\mu,0\}$ and $0\le QEQ\le12$, so by
Lemma~\ref{lem:S-incl} (with $c=r=6$)
$\spec(QHQ\restriction\Ran Q)\subset[-3\mu,-3\mu+12]\cup[0,12]$.
For $z\in W$ and $\mu>16$:
$\dist(z,\spec QHQ)\ge\min(\mu-10,\,2\mu-10)=\mu-10=:d_\mu>0$.

\emph{Step 2 (Schur complement).} For $z\in W$ define on $\Ran P$
\[
F(z)=PHP-PHQ(QHQ-z)^{-1}QHP .
\]
Then $\|F(z)-(-\mu+h_1)\|\le36/d_\mu=\eta_B$, $F(z)$ is self-adjoint and
analytic in $z$, and $F'(z)=-PHQ(QHQ-z)^{-2}QHP$ satisfies
$-\eta_B/d_\mu\le F'(z)\le0$. By the Feshbach--Schur isospectrality
\cite{BFS,GH}: for $z\notin\spec(QHQ)$, $z\in\spec H\iff z\in\spec F(z)$,
$\dim\ker(H-z)=\dim\ker(F(z)-z)$, and $H-z$ is Fredholm iff $F(z)-z$ is.

\emph{Step 3 ((a)).} If $z\in W\cap\spec H$, then $z\in\spec F(z)$, and
$\|F(z)+\mu-h_1\|\le\eta_B$ gives
$\dist(z,-\mu+\spec h_1)\le\eta_B$ (Lemma~\ref{lem:S-incl}).

\emph{Step 4 ((b)).} For $z\in I$, the spectrum of $F(z)$ in
$[-\mu+e-\eta_B,-\mu+e+\eta_B]$ is separated from the rest of
$\spec F(z)$ (which lies within $\eta_B$ of $-\mu+(\spec h_B\setminus\{e\})$,
including the essential part $\ge-\mu+4-\eta_B>-\mu+e+\eta_B$). By
norm-continuity of spectral projections for isolated parts, its total
multiplicity is $m$; list it as $\lambda_1(z)\le\dots\le\lambda_m(z)$.
By min--max and Step~2, each $\lambda_j$ is non-increasing and Lipschitz
with constant $\eta_B/d_\mu<1$. Hence $z\mapsto z-\lambda_j(z)$ is
strictly increasing, $\le0$ at the left end of $I$ and $\ge0$ at the
right end, and has exactly one zero $z_j\in I$. By Step~2,
$\dim\ker(H-z)=\#\{j:\lambda_j(z)=z\}$, and summing over the distinct
$z_j$ yields exactly $m$ eigenvalues of $H$ in $I$ counted with
multiplicity. They are discrete: $F(z)-z$ is Fredholm there.

\emph{Fermions.} On the $2+1$ fibre $V=V_1+V_2$ has $\nu_F\in\{0,1\}$
effectively (the site $(\mathbf0,\mathbf0)$ is excluded by antisymmetry),
so $Q=\Pi_0$ and $QVQ=0$; hence $QHQ=QEQ$. Since $E=\eps(\bp)+\eps(\bq)
+\gamma\eps(\bK-\bp-\bq)\in[0,8+4\gamma]$ and $PHQ=P(E-c_F)Q$ for any
constant $c_F$, take $c_F=4+2\gamma$ (midpoint of the range) to get
$\|PHQ\|\le4+2\gamma$. For $z$ in the fermionic window
$-\mu+[1+\tfrac32\gamma, 7+3\gamma]$, one has $z\le 0$ and
$\dist(z,\spec QHQ)\ge \mu-7-3\gamma=:d_F$. Hence
$\|F(z)+\mu-h_1\|\le(4+2\gamma)^2/d_F=\eta_F$, and the fixed-point
argument is identical to the bosonic case. The window contains
$-\mu+\spec h_F\subset-\mu+[2+\tfrac32\gamma, 6+\tfrac52\gamma]$ with
margin $1$ on the left and $1+\tfrac12\gamma$ on the right. Finally
$\eta_F/d_F=(4+2\gamma)^2/d_F^2<1$ because $d_F>7+3\gamma>4+2\gamma$
for $\mu>2(7+3\gamma)$, which is what the monotone fixed-point step
requires. For $h_{2+1B}$ the site $(\mathbf0,\mathbf0)$ is not excluded: $Q=\Pi_0+\Pi_2$ and,
by Lemma~\ref{lem:S-incl}, $\spec(QHQ)\subset[-2\mu,-2\mu+8+4\gamma]\cup[0,8+4\gamma]$;
the distance to the window is still $\ge d_F$, and the rest is unchanged.
\end{proof}

\begin{remark}[Second order]
Expanding $(QHQ-z)^{-1}$ around $z=-\mu+e$ gives
$F(z)=-\mu+h_1-\frac1\mu P E\Pi_0E P+\frac1{2\mu}PE\Pi_3EP+O(\mu^{-2})$,
since $QHQ-z\approx\mu$ on $\Ran\Pi_0$ and $\approx-2\mu$ on
$\Ran\Pi_3$. Evaluated on the normalised eigenvector $\phi$ of $h_B$ at
$\bK=\mathbf0$ this gives
$\kappa_B=\tfrac12|\langle f_0,E\phi\rangle|^2-\|\Pi_0E\phi\|^2=-4.35966$
(Table~\ref{tab:kappa}). An independent evaluation of this formula
with the $h_B$ ground state computed on a box of radius $70$ gives
$\|\Pi_0E\phi\|^2=4.735705$, $|\langle f_0,E\phi\rangle|^2=0.752095$ and
$\kappa_B=-4.359658$.
\end{remark}

\section{Independent numerical and symbolic audit}\label{sec:S4}

All computations below were done independently of the derivations of
the main text, with the tools stated. Numbers are quoted to the
precision actually achieved.

\subsection{Lattice constants $A$, $B$}\label{sec:S41}
Direct midpoint quadrature of
$R(\bx)=\langle(1-\cos\bp\cdot\bx)/\eps(\bp)\rangle$ on $N\times N$ grids
(the integrand is bounded, the origin is not a node):

\begin{center}\small
\begin{tabular}{rcccc}
\toprule
$N$ & $R(1,0)$ & $R(2,0)$ & $R(1,1)$ & $R(2,1)$\\
\midrule
800 & 0.50000000 & 0.72676046 & 0.63661977 & 0.77323954\\
1600 & 0.50000000 & 0.72676046 & 0.63661977 & 0.77323954\\
6400 & 0.50000000 & 0.72676046 & 0.63661977 & 0.77323954\\
\midrule
exact & $\frac12$ & $2-\frac4\pi=0.72676046$ & $\frac2\pi=0.63661977$ & $\frac4\pi-\frac12=0.77323954$\\
\bottomrule
\end{tabular}
\end{center}
Thus $A=R(2,0)$, $B=R(1,1)$ with $A+2B=2$ exactly; no rescaling of the
standard resistor-network values is needed, since those are defined
with the same dispersion $2-\cos p_1-\cos p_2$.

\subsection{$b_0$, $\delta_\infty$, $\kappa_0$ and the finite-$\delta$ matrix}\label{sec:S42}
\begin{itemize}
\item $b_0(10^{-2})=1.171677$, $b_0(10^{-3})=1.540329$,
$b_0(10^{-4})=1.907099$ (quadrature and elliptic form agree), against
$\frac1{2\pi}\ln\frac{16}\delta=1.174207,\ 1.540675,\ 1.907142$: the
difference is $O(\delta\ln\frac1\delta)$ as in Lemma~\ref{lem:S-b0}.
\item $\delta_\infty$: root of $b_0(\delta)=1/(1+\delta)$ (the $s$-channel
condition $m_s(\delta)=1$), $\delta_\infty=0.0354199892\ldots$
(adaptive quadrature, $10^{-12}$ tolerance), so $C=4-\delta_\infty=3.9645800\ldots$.
\item $\kappa_0$ from $(A-A(\delta))\pi/\delta-\ln\frac1\delta$:
$0.63239\ (10^{-3})$, $0.63116\ (10^{-4})$, $0.63102\ (10^{-5})$,
$0.63100\ (10^{-6})$.
\item Exact structure of $M(\delta)$, with $G$ computed by one-dimensional
quadrature after integrating $p_1$ analytically:
\begin{center}\small
\begin{tabular}{cccccc}
\toprule
$\delta$ & $M_{\be\be}$ & $1-\frac1{4b_0}$ & $m_s$ & $2+\delta-\frac1{b_0}$ & $m_p=A(\delta)$\\
\midrule
$10^{-2}$ & 0.779884 & 0.786631 & 1.156522 & 1.156522 & 0.710058\\
$10^{-4}$ & 0.868770 & 0.868911 & 1.475743 & 1.475743 & 0.726447\\
$10^{-6}$ & 0.905304 & 0.905306 & 1.621224 & 1.621224 & 0.726756\\
\bottomrule
\end{tabular}
\end{center}
The diagonal approaches $1$ only logarithmically, exactly as
$1-1/(4b_0)$; the $p$- and $d$-eigenvalues converge at rate
$O(\delta\ln\frac1\delta)$. The identities \eqref{S:green-id} hold to
$10^{-15}$.
\end{itemize}

\subsection{The quartic $P_\bK$ and the diagonal boundary}\label{sec:S43}
Computer algebra (SymPy) confirms that the general quartic of
Lemma~\ref{lem:char} reduces, \emph{as polynomial identities in $A$} (using
$8/\pi-2=2(1-A)$ and $4(4/\pi-1)=4(1-A)$), to
\[
\begin{aligned}
P_{\mathbf0}&=(x-2)\bigl(x-2(1-A)\bigr)(x+A)^2, &
P_{\bpi}&=(x+2)\bigl(x+2(1-A)\bigr)(x-A)^2,\\
P_{(\pi,0)}&=(x^2-A^2)\bigl(x^2-4(1-A)\bigr), &
P_{(\pi/2,\pi/2)}&=(x^2-2A)\bigl(x^2-2A(1-A)\bigr).
\end{aligned}
\]
Brent root-finding of $P_{(K,K)}(1)=0$ gives
$\cos K^\ast=-0.356194490192344$, equal to $2-3\pi/4$ to $10^{-15}$,
$K^\ast=0.615926\,\pi$.

We record also the exact identity
\[
F_B(u,v) = P_\bK(1)\quad\text{for all }(u,v)=(\cos K_1,\cos K_2).
\]
The polynomial identity was independently checked symbolically by expanding both sides as polynomials in $A,u,v$ with
$8/\pi - 2 = 2(1-A)$; this is the algebraic content of Theorem~\ref{thm:zoneB}, the proof itself follows from the displayed algebraic identities.

\subsection{Brillouin-zone scans and finite volume}\label{sec:S44}
On a $601\times601$ grid of $\bK$: (i) at most one root of $P_\bK$
exceeds $1$ (so $F_B<0\iff x_{\max}>1$); (ii) the second-largest root
never exceeds $A$ (equality on the lines $K_1=\pi$, $K_2=\pi$); (iii)
$\min_\bK\gamma_c=1.0000000$, $\max_\bK\gamma_c=2.7519384=\pi/(\pi-2)$. (iv)~$P_\bK$ has exactly two negative roots at every grid point (as proved
in Corollary~\ref{cor:twostate} via Sylvester's law of inertia), and the second fermionic threshold
$\gamma_c^{(2)}=2/|x_2|$ ranges over $[2.7519384,\,3.6597924]=[\pi/(\pi-2),\,\pi/(4-\pi)]$,
the~maximum at $\bK=\bpi$ and the minimum on the lines $K_1=0$, $K_2=0$, where
$P_\bK(-A)\equiv0$ (SymPy, as a polynomial identity in $A$). That $P_\bK(-A)\equiv 0$ on $u=1$ and $v=1$ is a polynomial identity in
$A$ and the remaining variable: for $u=1$, $P_\bK(-A) = 0$ for all
$v\in[-1,1]$; similarly for $v=1$. The~corresponding eigenvector of
$M_0S_\bK$ is $(0,0,1,-1)$ (a transverse $p$-mode), on which $S_\bK$
acts as $-1$ and $M_0$ acts as $A$.
Infinite-volume binding energies (exact $M(\delta)$ from 1D quadrature):
$\delta_B(\mathbf0)=0.0354200$, $\delta_B(\tfrac\pi2,\tfrac\pi2)=6.5049\cdot10^{-4}$, the infinite-volume root is
$\delta_B(\pi,0)=1.45\cdot10^{-15}$ at the displayed numerical precision (the last from the elliptic form
of $b_0$ with $40$-digit arithmetic; at this $\delta$ the
$O(\delta\ln\frac1\delta)$ terms are $\sim10^{-14}$), fermions:
$\delta_F(\bpi;2)=0.0354200$, $\delta_F(\mathbf0;3)=0.0519285$ ($\times2$),
$\delta_F(\bpi;3)=0.2528937$,
$\delta_F(\bpi;4)=0.586815,\ 0.147265$. The~same effective theory on an
$N\times N$ torus gives, for bosons at $(\pi,0)$ and $(\pi/2,\pi/2)$:

\begin{center}\small
\begin{tabular}{rcc}
\toprule
$N$ & $\delta_B(\pi,0)$ & $\delta_B(\pi/2,\pi/2)$\\
\midrule
32 & $2.0856\cdot10^{-4}$ & $2.1439\cdot10^{-3}$\\
64 & $5.4725\cdot10^{-5}$ & $9.9325\cdot10^{-4}$\\
128 & $1.4393\cdot10^{-5}$ & $6.7948\cdot10^{-4}$\\
256 & $3.7957\cdot10^{-6}$ & $6.5069\cdot10^{-4}$\\
$\infty$ & $1.4535\cdot10^{-15}$ & $6.5049\cdot10^{-4}$\\
\bottomrule
\end{tabular}
\end{center}
The $N=32$ values coincide with the ``eff., $N=32$'' column of
Table~\ref{tab:effective} to all printed digits, for~eve\-ry~row.

\subsection{Zone-boundary constants}
$c(\pi,0)=1.0454$, $\xi(\pi,0)=1.045446$: leading term of
Theorem~\ref{thm:boundary} $\pi c/(2(\xi-1))=36.13$ versus exact $\ln(16/\delta)=36.937$.
$c(\tfrac\pi2,\tfrac\pi2)=1.2056$. At $\bK=\bpi$ for fermions, $v=\mathbf1$,
$\xi=\gamma$, $c=2\gamma$, and the leading term $\pi c/(2(\xi-1))=\pi\gamma/(\gamma-1)$ reproduces \eqref{eq:onset-s} exactly.

\section{Nearest-neighbour attraction and Floquet driving}\label{sec:S6}

\subsection{Nearest-neighbour attraction}\label{sec:S61}
On $\mathcal B_a$ an atom at $\bx$ is at distance one from both dimer
constituents if $\bx$ is on the shell, and from neither otherwise;
hence $\Pi_1U_{nn}\Pi_1\cong2P_{\rm sh}$ (bosons; for fermions the two
pairs are one $F$--$I$ and one $F$--$F$ pair, giving $\Lambda P_{\rm sh}$).
At $\bK=\bpi$ the matrix $M_0$ and $S_{\bpi}$ are simultaneously diagonal
in the $s,p,d$ basis with eigenvalues $(2,A,\frac8\pi-2)$ and $(-1,+1,-1)$.
The~binding conditions $m\cdot(\sigma+2\lambda)>1$ with $\sigma=-1,+1,-1$
give
$\lambda_c^s=\frac34$, $\lambda_c^p=\frac{1-A}{2A}=\frac{4-\pi}{4(\pi-2)}=0.187985$,
$\lambda_c^d=\frac{1+1/(8/\pi-2)}2=\frac{8-\pi}{4(4-\pi)}=1.414948$.
For fermions at $\gamma=1$: $m_s(\delta)(\frac12+\Lambda)=1$; numerically
(elliptic $b_0$) $\ln(16/\delta_F)=6.113,\ 4.147,\ 2.794$ for
$\Lambda=0.5,1,2$, approaching $\pi(1+2\Lambda)/(2\Lambda)$ as $\Lambda\downarrow0$.

\subsection{Floquet example}\label{sec:S62}
With all hopping amplitudes multiplied by $\tau=J_0(\mathcal A/\omega)$,
$\Pi_1E^{(\tau)}\Pi_1=6-2\tau+\tau\eps_D-\tau S_\bK$. For $\tau<0$ the
gauge $\varphi(\bx)\mapsto(-1)^{x_1+x_2}\varphi(\bx)$ maps
$\eps_D\mapsto4-\eps_D$ and fixes $S_\bK$, so the operator is unitarily
equivalent to $6-6|\tau|+|\tau|\,h_B(\bK+\bpi)$. Example: at $\bK=\bpi$ the
undriven system has no atom--dimer state; for $\mathcal A/\omega=2.6$
(beyond the first zero $2.40483$ of $J_0$), $\tau=J_0(2.6)=-0.0968$ and
$\delta_B=|\tau|\,\delta_B(\mathbf0)=3.43\cdot10^{-3}$ in hopping units.
The high-frequency corrections, of relative order $\omega^{-1}$, are not
controlled here.

\section{Reproducibility}\label{sec:S7}
The figures of the main text are generated by the Python
scripts
\texttt{make\_figs.py} (Figures 5--6), \texttt{ferm\_figs.py} and
\texttt{ferm\_maps.py}, \texttt{ferm\_maps2.py} (Figures 1--4; exact
infinite-lattice $M(\delta)$ with $b_0$ in elliptic form and $A(\delta)$
from one-dimensional quadrature, spline-interpolated in $\log\delta$,
engine \texttt{ferm\_core.py}, validated against every infinite-volume
entry of Table~\ref{tab:effective}). The $2+1$ bosonic (mixed) case is handled by the same engine with the
sign of the exchange term reversed in the $4\times4$ matrix
($C=-\tfrac\gamma2S_\bK$ instead of $+\tfrac\gamma2S_\bK$); the
numerical check of \S\ref{sec:S8} uses the same code path. Figures 5--6 use the closed-form quartic $P_\bK$ with
$A=2-4/\pi$ ($400\times400$ grid for the map; $360$ points on the
$\Gamma XM\Gamma$ path). The audit of \S\ref{sec:S4}--\S\ref{sec:S6}
uses NumPy/SciPy (adaptive and midpoint quadrature, Brent root finding),
mpmath (40-digit elliptic integrals) and SymPy (polynomial identities);
the scripts are distributed with this Supplement.

\section{Mixed statistics: two bosons and one distinguishable particle}\label{sec:S8}

This section proves Proposition~\ref{prop:mixed}
({\bf mixed statistics: two bosons and one distinguishable particle}): the exchange
amplitude in the effective atom--dimer operator is determined by the
symmetry of the interacting pair, and the two mixed systems
($2$ identical bosons $+$ $1$ distinguishable, and $2$ identical
fermions $+$ $1$ distinguishable) are related by $\bK\to\bK+\bpi$
for \emph{every} mass ratio $\gamma$, not only at $\gamma=2$.

\begin{proof}[Proof of Proposition~\ref{prop:mixed}]
Particles $1,2$ are identical bosons with hopping $1$; particle $3$ is
distinguishable with hopping $\gamma$; only interspecies pairs
$\{1,3\}$ and $\{2,3\}$ interact, so $V=V_1+V_2$ (no $V_3$ term,
no Bose--Bose interaction). The atom--dimer branches are
$\mathcal B_a=\{\by=0\ne\bx\}$ and $\mathcal B_b=\{\bx=0\ne\by\}$;
the branch $\mathcal B_c=\{\bx=\by\ne0\}$ of the identical-boson case
is absent: on it particles $1$ and $2$ occupy the same site, but the
pair $\{1,2\}$ does not interact, so $\nu=0$ there. The triple-occupancy site $(\mathbf0,\mathbf0)$ ($\nu=2$) is allowed by Bose
symmetry; it carries a deep state near $-2\mu$ and is removed by $\Pi_1$,
which is the origin of the Dirichlet condition in $h_{2+1B}$.

The bosonic symmetry of $\psi$ under the transposition $\tau_{12}$
gives $\psi(\mathbf0,\by)=+\varphi(\by)$ (with the plus sign, in
contrast to the fermionic case where the antisymmetry gives
$\psi(\mathbf0,\by)=-\varphi(\by)$). Repeating the computation of
Supplementary \S\ref{sec:S21}:
\begin{itemize}
\item diagonal: $(4+2\gamma)\varphi(\bx)$;
\item particle-$1$ hops: $(\eps_D-2)\varphi(\bx)$;
\item particle-$2$ hops: land on $(\bx,\be)$ which is never in
$\mathcal B_a\cup\mathcal B_b$ for $\bx\ne0$;
\item particle-$3$ hops: amplitude $-\gamma/2$, landing on
$(\mathbf0,-\bx)$ iff $\bx$ on the shell, giving
\[
-\tfrac\gamma2\,e^{i\bK\cdot\bx}\psi(\mathbf0,-\bx)
=-\tfrac\gamma2\,e^{i\bK\cdot\bx}\varphi(-\bx).
\]
\end{itemize}
Hence $h_1\restriction\Ran\Pi_1\cong h_{2+1B}(\bK;\gamma)
=2+2\gamma+\eps_D-\tfrac\gamma2S_\bK$, and using
$S_{\bK+\bpi}=-S_\bK$ one obtains
$h_{2+1B}(\bK;\gamma)=h_F(\bK+\bpi;\gamma)$. The counting criterion
(Theorem~\ref{thm:count}) with $C=-\tfrac\gamma2S_\bK$ gives
the zone indicator $\tfrac\gamma2 x_{\max}(\bK)>1$, hence
$\gamma_c^{2+1B}(\bK)=2/x_{\max}(\bK)=\gamma_c(\bK+\bpi)$.
\end{proof}

\begin{remark}[Numerical check]
Infinite-lattice evaluation (exact $M(\delta)$ with $b_0$ in elliptic
form, $A(\delta)$ by one-dimensional quadrature) confirms
$\delta_{2+1B}(\bK;\gamma)=\delta_F(\bK+\bpi;\gamma)$ at every grid
point: at~$\gamma=2$, $\delta_{2+1B}(\mathbf0;2)=\delta_F(\bpi;2)
=0.0354200$; at $\gamma=3$, $\delta_{2+1B}(\mathbf0;3)=\delta_F(\bpi;3)
=0.2528937$ (one~state), while
$\delta_{2+1B}(\bpi;3)=\delta_F(\mathbf0;3)=0.0519285$ is the doubly
degenerate $p$-wave pair. The~exact bosonic
$2+1$ zone coincides with the fermionic zone translated by $\bpi$.
\end{remark}

\end{document}